\newcommand{\ba}{\mathbf{a}}
\newcommand{\be}{\mathbf{e}}
\newcommand{\bbf}{\mathbf{f}}
\newcommand{\bu}{\mathbf{u}}
\newcommand{\bv}{\mathbf{v}}
\newcommand{\bx}{\mathbf{x}}
\newcommand{\bc}{\mathbf{c}}
\newcommand{\bn}{\mathbf{n}}
\newcommand{\by}{\mathbf{y}}
\newcommand{\bw}{\mathbf{w}}
\newcommand{\bz}{\mathbf{z}}
\newcommand{\bA}{\mathbf{A}}
\newcommand{\bC}{\mathbf{C}}
\newcommand{\bE}{\mathbf{E}}
\newcommand{\bG}{\mathbf{G}}
\newcommand{\bB}{\mathbf{B}}
\newcommand{\bI}{\mathbf{I}}
\newcommand{\bK}{\mathbf{K}}
\newcommand{\bM}{\mathbf{M}}
\newcommand{\bU}{\mathbf{U}}
\newcommand{\bR}{\mathbf{R}}
\newcommand{\bV}{\mathbf{V}}
\newcommand{\bW}{\mathbf{W}}
\newcommand{\bzero}{\mathbf{0}}
\newcommand{\btW}{\widetilde{\mathbf{W}}}
\newcommand{\btA}{\widetilde{\mathbf{A}}}
\newcommand{\btE}{\widetilde{\mathbf{E}}}
\newcommand{\btK}{\widetilde{\mathbf{K}}}
\newcommand{\btc}{\widetilde{\mathbf{c}}}
\newcommand{\tq}{\widetilde{q}}
\newcommand{\hbbf}{\widehat{\mathbf{f}}}
\newcommand{\hbu}{\widehat{\mathbf{u}}}
\newcommand{\hbW}{\widehat{\mathbf{W}}}
\newcommand{\cL}{\mathcal{L}}
\newcommand{\cO}{\mathcal{O}}
\newcommand{\om}{\omega}
\newcommand{\eps}{\epsilon}
\newcommand{\real}{\mathbb{R}}
\newcommand{\rank}{\mathrm{rank}\,}
\newcommand{\linspan}{\mathrm{span}\,}
\newcommand{\diag}{\mathrm{diag}\,}
\newcommand{\range}{\mathcal{R}}
\newcommand{\nullspace}{\mathcal{N}}
\newcommand{\cross}{\wedge}
\newcommand{\M}[1]{\left({#1}\right)}
\newcommand{\Mcb}[1]{\left\{{#1}\right\}}
\newcommand{\norm}[1]{\left\|{#1}\right\|}
\newcommand{\dsp}{\displaystyle}
\newtheorem{lemma}{Lemma}
\newtheorem{theorem}{Theorem}
\newtheorem{remark}{Remark}
\newtheorem{definition}{Definition}
\newcommand{\figref}[1]{Figure~\ref{#1}}
\newcommand{\lemref}[1]{Lemma~\ref{#1}}
\newcommand{\remref}[1]{Remark~\ref{#1}}
\newcommand{\thmref}[1]{Theorem~\ref{#1}}
\newcommand{\secref}[1]{\S\ref{#1}}
\newcommand{\appref}[1]{Appendix~\ref{#1}}
\begin{document}
\title[The response of elastodynamic networks]{Complete
characterization and synthesis of the\\ response function of elastodynamic
networks}
\author[F. Guevara Vasquez]{Fernando Guevara Vasquez}
\address{Mathematics Department, University of Utah, 155 S 1400 E Rm.
233, 84112 Salt Lake City, Utah}
\email{fguevara@math.utah.edu}

\author[G. W. Milton]{Graeme W. Milton}
\email{milton@math.utah.edu}

\author[D. Onofrei]{Daniel Onofrei}
\email{onofrei@math.utah.edu}

\begin{abstract}
The response function of a network of springs and masses, an
elastodynamic network, is the matrix valued function $\bW(\om)$,
depending on the frequency $\om$, mapping the displacements of some
accessible or terminal nodes to the net forces at the terminals. We give
necessary and sufficient conditions for a given function $\bW(\om)$ to
be the response function of an elastodynamic network, assuming there is
no damping. In particular we construct an elastodynamic network that can
mimic a suitable response in the frequency or time domain.  Our
characterization is valid for networks in three dimensions and also for
planar networks, which are networks where all the elements,
displacements and forces are in a plane. The network we design can fit
within an arbitrarily small neighborhood of the convex hull of the
terminal nodes, provided the springs and masses occupy an arbitrarily
small volume. Additionally, we prove stability of the network response
to small changes in the spring constants and/or addition of springs with
small spring constants.
\end{abstract}
\keywords{elastic networks \and elastodynamic networks \and response
function \and network synthesis}
\subjclass[2000]{74B05, 35R02}
\maketitle


\section{Introduction}
\label{sec:intro}

Is it possible to design an elastic material that has a prescribed
response?  This question is answered by Camar-Eddine and Seppecher
\cite{camareddine_seppecher2003} for linear elastic materials in three
dimensions, assuming the macroscopic response is governed by a single
displacement field. Their approach consists of three steps. First it is
shown how to design a continuum material that behaves like an {\em
elastic network} (a network composed of springs). Then the response of
elastic networks is characterized, i.e. it is shown how to construct an
elastic network with a suitable response. A limiting argument is then
used to answer the question for the continuum. As a first step towards
solving the characterization problem when the response depends on time,
we show how to design an {\em elastodynamic network} (a network of
springs and masses), that can mimic a prescribed response as a function
of time (or frequency).  Moreover if the springs and masses occupy an
arbitrarily small volume, the network can be designed to fit within an
arbitrarily small neighborhood of the convex hull of the terminal nodes,
which is a requirement for an argument similar to that of Camar-Eddine
and Seppecher \cite{camareddine_seppecher2003}. An earlier
characterization of elastodynamic networks is that of Milton and
Seppecher \cite{milton_seppecher2008}. However the network elements used
in the construction \cite{milton_seppecher2008} are frequency dependent,
so the constructed network can only mimic the response function at a
single fixed frequency.

In a different context, the approach of Camar-Eddine and Seppecher was
applied earlier by the same authors \cite{Camar:2002:CSD} to
characterize all possible responses for the conductivity equation,
assuming the macroscopic response is governed by a single voltage field. The
problem of finding a network with a given response is often called
``network synthesis'', and the earliest example is Kirchhoff's
$Y-\Delta$ theorem, which characterizes the response of any resistor
network in three dimensions. Another characterization for resistor
networks is that of Curtis, Ingerman and Morrow
\cite{curtis_ingerman_morrow1998} who consider planar networks that can
be embedded inside a disk and where all terminals lie on its boundary.
For electrodynamic networks (with resistances, capacitors and
inductances), we are only aware of results dealing with the frequency
response or impedance of a circuit with two terminals (see Foster
\cite{Foster:1924:RT,Foster:1924:TRD} and Bott and Duffin
\cite{Bott:1949:ISU}).  Milton and Seppecher \cite{milton_seppecher2008}
give a construction for $n-$terminal elastodynamic, electrodynamic and acoustic
networks which is valid at a single frequency. The electromagnetic
analog of elastodynamic networks is considered by the same authors
\cite{Milton:2009:EC,Milton:2009:HEC}

In \secref{sec:enet} we give the properties of the response function of
elastic and elastodynamic networks. The construction of a network that
matches a response function with the properties in \secref{sec:enet} is
given for the static case in \secref{sec:static}. Note that the
characterization of elastic networks by Camar-Eddine and Seppecher
\cite{camareddine_seppecher2003} is part of a limiting argument on
energy functionals, so only non-degenerate three dimensional elastic
networks are needed. The degenerate case corresponds to {\em planar
elastic networks} (the network, forces and displacements lie on a plane)
and is a set of measure zero which leaves the energy functionals
considered in \cite{camareddine_seppecher2003} unaffected. We complete
the characterization in \cite{camareddine_seppecher2003} to include
planar elastic networks. Then in \secref{sec:dynamic} we completely
characterize the response of elastodynamic networks (planar or in three
dimensions) for all frequencies and assuming there is no dissipation
(damping) in the network. We include in the appendices two technical
results. \appref{app:pert} shows that the network response is stable with
respect to small changes in the spring constants and the addition (but
not deletion) of springs. \appref{app:floppy} uses stability to give a
systematic method of modifying an elastic network to 
eliminate floppy modes without changing significantly the response.
Floppy modes correspond to nodes that can move with zero forces and they
are discussed in more length in \secref{sec:interior}.

\subsection{Preliminaries}
Consider a network composed of springs and masses, and assume we only have
access to $n$ ``terminal'' or ``boundary'' nodes $\bx_1,\ldots,\bx_n \in
\real^d$, where the dimension $d$ is either $2$ or $3$. The network is said to be {\em planar} if $d=2$ and the springs do not cross. The static response
matrix or displacement-to-forces map is the $nd \times nd$ matrix $\bW$
so that $$\bbf = \bW \bu,$$ where $\bu = (\bu_1^T, \ldots, \bu_n^T)^T$ is
the vector of displacements $\bu_i$ of the terminal nodes $\bx_i$ and $\bbf =
(\bbf_1^T, \ldots, \bbf_n^T)^T$ is the vector of net forces $\bbf_i$ acting
on node $\bx_i$ at equilibrium.

In the dynamic case the displacements $\bu(t)$ and $\bbf(t)$ depend on
time $t$. Let $\hbu(\om)$ be the Fourier transform of $\bu(t)$,
\[
 \hbu(\om) = \int_{-\infty}^\infty \bu(t) e^{-i\om t} dt,
\]
and similarly for $\hbbf(\om)$, where $\om$ is the frequency. Then if
$\om$ is not a resonance frequency of the network (a precise definition
of resonance is given later in \secref{sec:dynprop}), the response
matrix of the network is the possibly complex $nd \times nd$ matrix
valued function $\hbW(\om)$ such that
\[
 \hbbf(\om) = \hbW(\om) \hbu(\om).
\]
For convenience we have chosen to work in the frequency domain. However
when $\bu(t) = 0$ for $t<0$, our results can be reformulated for the
{\em transfer function} of the network since $\cL[ \bu(t) ](s) =
\hbu(-is)$, where $\cL$ denotes the Laplace transform, i.e.
\[
 \cL[ \bu(t) ](s) = \int_0^{\infty} \bu(t) e^{-st} dt.
\]
In this case the transfer function of the network is $\hbW(-is)$.
As we work only in the frequency domain,  we drop the hats in the
Fourier transform notation for the sake of clarity (i.e.
$\bu(\om) \equiv \hbu(\om)$ etc$\ldots$). Also as there is no
dissipation, it suffices to assume that $\bu(\om)$ and $\bbf(\om)$ are
real to determine the real valued function $\bW(\om)$.

\section{The response function of an elastodynamic network}
\label{sec:enet}
In this section we establish the properties that the response of an
elastodynamic network satisfies. We start with the response of networks
(static or dynamic) where all the nodes are terminals
(\secref{sec:allterminal}) and then study the case where interior nodes
are present (\secref{sec:interior}). We also include some
transformations in \secref{sec:transf} that do not affect the response
function.

\subsection{Response function for networks without interior nodes}
\label{sec:allterminal}
Consider the simple network consisting of two nodes $\bx_1$
and $\bx_2$ with masses $m_1$ and $m_2$, linked with a spring with
spring constant $k_{1,2}$. Let $\ba_i$  be the force
exerted by the spring on node $\bx_i$, $i=1,2$. By Hooke's law
\[
 \ba_2 = - k_{1,2} \frac{(\bx_2 - \bx_1)(\bx_2 - \bx_1)^T}{\norm{\bx_2 -
 \bx_1}^2}
 (\bu_2 - \bu_1) = -\ba_1.
\]
The laws of motion can be written in matrix form as $ -\omega^2 \bM \bu =
\bbf - \bK \bu$, where
\[
 \begin{aligned}
 &\bK = k_{1,2} \begin{bmatrix} \bn_{1,2} \bn_{1,2}^T & -\bn_{1,2}
 \bn_{1,2}^T\\ -\bn_{1,2} \bn_{1,2}^T & \bn_{1,2} \bn_{1,2}^T
 \end{bmatrix},
 \quad
 \bM = \diag(m_1\be,m_2\be),\\
&\bn_{1,2} = \frac{\bx_2 - \bx_1}{\norm{\bx_2 - \bx_1}},
\end{aligned}
\]
and the vector $\be = (1,\ldots,1)^T \in \real^d$ for $d=2,3$. Thus the
response function of a single spring is given by
\begin{equation}
\label{eq:allterm}
\bW(\om) = \bK - \om^2 \bM.
\end{equation}

When all nodes are terminal nodes (i.e. there are no interior nodes) the
response function can also be written in the form \eqref{eq:allterm}, but
now $\bK$ is the stiffness matrix of the network and the mass matrix $\bM =
\diag(m_1\be,\ldots,m_n\be)$, where $m_i\geq0$ is the mass of the $i-$th
node and $n$ is the number of nodes. The stiffness matrix of the network
is the sum of the stiffness matrices associated with the individual
springs,
\[
 \bK = \sum_{\text{springs}~i,j} [ \bE_i, \bE_j ] \bK_{(i,j)}  [ \bE_i,
 \bE_j]^T \; \in \; \real^{nd \times nd},
\]
where the summation is over all pairs of nodes $\bx_i,\bx_j$ connected
by a spring. The $2d \times 2d$ matrix $\bK_{(i,j)}$ is the stiffness
matrix of the spring between nodes $\bx_i$ and $\bx_j$. We also used $nd
\times d$ matrices $\bE_i = [\be_{(i-1)d+1}, \ldots, \be_{id}]$, which
are introduced so that the components of $\bK_{(i,j)}$ enter the appropriate
blocks of $\bK$ ($\be_p$ is the $p-$th canonical vector in
$\real^{nd}$).  We only consider non-negative spring stiffnesses
$k_{i,j}$.  Stiffnesses with a non-zero imaginary part model damping or
dissipation of energy in the network and are left for future studies.

\subsection{Response function for networks with interior nodes}
\label{sec:interior}

\subsubsection{The static case} The response matrix $\bW$ can be
obtained from the response matrix $\bA$ of the network where all
nodes are considered as terminal nodes. The partitioning of the nodes
into interior nodes $I$ and terminal (boundary) nodes $B$ induces the
following partitioning of $\bA$,
\begin{equation}
 \bA = \begin{bmatrix} \bA_{BB} & \bA_{BI}\\
 \bA_{IB} & \bA_{II} \end{bmatrix}.
 \label{eq:partition}
\end{equation}
Instead of dealing directly with the response matrix $\bA$, it is
convenient to introduce the quadratic form
\[
 q_{\bA}(\bu) = \bu^T \bA \bu,
\]
which represents twice the total elastic energy stored in the network.  In
the simple case of a single spring between nodes $\bx_1$ and $\bx_2$ with
spring constant $k$, the quadratic form is
\[
 s_{(\bx_1,\bx_2)}(\bu_1,\bu_2) = k \M{(\bu_1-\bu_2) \cdot
 \frac{\bx_1 - \bx_2}{\norm{\bx_1 - \bx_2}}}^2.
\]
We omit the spring constant indices for clarity. When there are more
springs $q_\bA$ is the sum of the quadratic forms for all springs, thus
$q_\bA(\bu) \geq 0$.

For general static networks the response matrix is defined indirectly by
its quadratic form $q_{\bW}$:
\begin{equation}
 q_{\bW}(\bu_B) = \inf_{\bu_I} q_{\bA} (\bu_B, \bu_I).
 \label{eq:quad}
\end{equation}
By the partitioning \eqref{eq:partition} we may rewrite
\[
 q_{\bA} (\bu_B,\bu_I) = \bu_B^T \bA_{BB} \bu_B +
 2\bu_B^T \bA_{BI} \bu_I + \bu_I^T \bA_{II} \bu_I.
\]
The first order optimality conditions for the minimization
\eqref{eq:quad} are actually the balance of forces at the interior nodes:
\[
 \bzero = \nabla_{\bu_I} q_{\bA}(\bu_B,\bu_I) =  2\bA_{II}
 \bu_I + 2 \bA_{IB} \bu_B.
\]
The following lemma shows that for any $\bu_{B}$ it is possible to
balance forces at the interior nodes and it implies the minimization
\eqref{eq:quad} has at least a minimizer (since $q_\bA$ is bounded below).
Another way of seeing this lemma is that if there are any ``floppy'' modes
within the interior nodes (i.e. modes that generate displacements with zero
forces) then those modes are not coupled to the terminals.
\begin{lemma}
\label{lem:floppy}
Given the partitioning \eqref{eq:partition} of the response
matrix $\bA$ where all nodes are considered as terminal nodes, we have
$\range(\bA_{IB}) \subset \range(\bA_{II})$. Here $\range(\bB)$ denotes the
range of a matrix $\bB$.
\end{lemma}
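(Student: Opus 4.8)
The plan is to exploit the fact that $\bA$ is symmetric and positive semidefinite. Since $\bA$ is a stiffness matrix (all nodes terminal), it is symmetric, and its quadratic form $q_\bA(\bu) = \bu^T \bA \bu$ equals twice the stored elastic energy, which is a sum of squares and hence nonnegative for every $\bu$, as already noted in \secref{sec:interior}. In particular the principal submatrix $\bA_{II}$ is itself symmetric positive semidefinite, so its range is the orthogonal complement of its null space, $\range(\bA_{II}) = \nullspace(\bA_{II})^\perp$, a fact I will use to convert the desired range inclusion into a null-space inclusion.

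First I would reduce the claim to an implication about null spaces. Using $\bA_{IB} = \bA_{BI}^T$ (symmetry of $\bA$), the inclusion $\range(\bA_{IB}) \subset \range(\bA_{II}) = \nullspace(\bA_{II})^\perp$ is equivalent to requiring every $\bv \in \nullspace(\bA_{II})$ to be orthogonal to every column of $\bA_{IB}$, i.e. $\bA_{BI}\bv = \bzero$. Hence it suffices to prove the implication $\bA_{II}\bv = \bzero \Rightarrow \bA_{BI}\bv = \bzero$.

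To establish this implication I would use positive semidefiniteness directly with a probe vector. Fix $\bv \in \nullspace(\bA_{II})$ and an arbitrary $\bu_B$, and evaluate $q_\bA$ on $(\bu_B, t\bv)$ for a real parameter $t$. Using the block expansion of $q_\bA$ given in \secref{sec:interior} together with $\bA_{II}\bv = \bzero$, the term quadratic in $t$ vanishes and one is left with $q_\bA(\bu_B, t\bv) = \bu_B^T \bA_{BB} \bu_B + 2t\, \bu_B^T \bA_{BI} \bv$, which is affine in $t$. Nonnegativity of $q_\bA$ for all $t \in \real$ forces the coefficient of $t$ to vanish, so $\bu_B^T \bA_{BI} \bv = 0$; since $\bu_B$ is arbitrary this yields $\bA_{BI}\bv = \bzero$, completing the implication and hence the lemma.

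I expect the only delicate point to be ensuring that the positive semidefiniteness and symmetry of $\bA$ are in hand; once they are, the affine-in-$t$ argument is immediate. An alternative would be to phrase everything through the Schur complement of $\bA_{II}$ in $\bA$, but the probe-vector argument is more transparent and, crucially, does not require $\bA_{II}$ to be invertible—precisely the degenerate situation (floppy interior modes) that this lemma is meant to handle.
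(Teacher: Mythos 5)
Your proof is correct, but it takes a genuinely different route from the paper's. Both arguments begin the same way, using $\bA^T = \bA$ to reduce the range inclusion to the null-space implication $\bA_{II}\bv = \bzero \Rightarrow \bA_{BI}\bv = \bzero$; they diverge in how that implication is established. You prove it as a purely algebraic fact about symmetric positive semidefinite block matrices: probing $q_\bA$ at $(\bu_B, t\bv)$, noting the quadratic term in $t$ dies since $\bA_{II}\bv=\bzero$, and invoking the elementary observation that an affine function nonnegative on all of $\real$ has zero slope. The paper instead probes with $(\bzero,\bu_I)$ and exploits the concrete structure of $\bA$ as a sum of per-spring quadratic forms: since each spring's contribution is nonnegative and the total is zero, every term vanishes, and in particular $\bu_i \cdot (\bx_i - \bx_j) = 0$ for each spring joining an interior node $\bx_i$ to a boundary node $\bx_j$, which is exactly $\bA_{BI}\bu_I = \bzero$. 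Each approach buys something. The paper's version yields the geometric interpretation stated alongside the lemma — floppy modes are displacements perpendicular to all interior-to-boundary springs, which is the picture driving the floppy-mode discussion in \appref{app:floppy} — but it is tied to $\bA$ being literally an all-terminal stiffness matrix. Your version needs only symmetry and $q_\bA \geq 0$ (both available in the paper before the lemma), so it holds for \emph{any} symmetric positive semidefinite matrix; this is a real advantage later in the paper, where \lemref{lem:floppy} is applied in the proof of \lemref{lem:dynamic} to the Schur-complemented matrix $\btK$ of \eqref{eq:tk}, which is not an all-terminal stiffness matrix in the literal sense, so your abstract argument covers that reuse verbatim while the paper's spring-by-spring argument requires an extra (if routine) justification there.
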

\begin{proof}
By reciprocity $\bA^T=\bA$, thus it is equivalent to prove $\nullspace(\bA_{BI}) \supset
\nullspace(\bA_{II})$, where $\nullspace(\bB)$ denotes the nullspace of a
matrix $\bB$. Let $\bu_{I}$ be a displacement such that $\bA_{II} \bu_{I} =
\bzero$ (i.e. a ``floppy'' mode). Then
 \[
  \begin{aligned}
  0 &= \bu_I^T \bA_{II} \bu_{I} 
  = \begin{bmatrix} \bzero & \bu_I^T \end{bmatrix} \bA \begin{bmatrix} \bzero \\\bu_I \end{bmatrix}\\
  &= \sum_{\text{springs} ~  i,\;j \in I} 
  s_{(\bx_i,\bx_j)}(\bu_i,\bu_j) + \sum_{\text{springs} ~ i \in I, j\in B}
  k_{i,j} \M{\bu_i \cdot \frac{\bx_i - \bx_j}{\norm{\bx_i-\bx_j}}}^2.
  \end{aligned}
 \]
 Therefore for all nodes $\bx_i \in I$ and $\bx_j \in B$ that are linked
 by a spring we must have $\bu_i \cdot (\bx_i - \bx_j) = 0$,
 which means precisely that $\bA_{BI} \bu_I = \bzero$.
\end{proof}

\begin{remark}
We show later in \appref{app:floppy} that floppy modes can be eliminated
from a network by adding springs with small spring constants. The
response of the new network can be made arbitrarily close to that of the
original one, provided the new springs have sufficiently small
stiffness. Examples of floppy modes are given in \figref{fig:floppy}.
\end{remark}

By eliminating the interior nodes, the static response matrix can thus
be written in Schur complement form: 
\begin{equation}
 \bW = \bA_{BB} - \bA_{BI} \bA_{II}^\dagger \bA_{IB},
 \label{eq:schur}
\end{equation}
where $\dagger$ stands for the Moore-Penrose pseudo-inverse, which is
is simply the inverse if there are no floppy modes.

We denote by $\bu \cross \bv$ the cross product of the vectors $\bu,\bv
\in \real^d$. For $d=2$ we have $\bu \cross \bv = \det[\bu,\bv]$ and
for $d=3$, $\bu \cross \bv = (u_2 v_3 - u_3 v_2, u_3 v_1 - u_1 v_3, u_1
v_2 - u_2 v_1)^T$. Before reviewing some properties of the static
response matrix we need the following definition.
\begin{definition}
A {\em balanced system of forces} $\bbf_i$, $i=1,\ldots,n$ supported at
nodes $\bx_i$, $i=1,\ldots,n$ in $\real^d$ ($d=2,3$) satisfies:
 \begin{enumerate}[(a)]
  \item $\dsp\sum_{i=1}^n \bbf_i = \bzero$ (balance of forces)
  \item $\dsp\sum_{i=1}^n \bx_i \cross \bbf_i = \bzero$ (balance of torques)
 \end{enumerate}
\end{definition}
\begin{lemma}
\label{lem:static}
The static response matrix satisfies the following properties.
\begin{enumerate}[(a)]
 \item $\bW \; \in \; \real^{nd \times nd}$.
 \item $\bW = \bW^T$ (reciprocity)
 \item $\bW$ is positive semidefinite (energy is not produced
 by the network)
 \item Every column $\bbf = (\bbf_1^T,\ldots,\bbf_n^T)^T$ of $\bW$ is a
 {\em balanced system of forces} when supported at the nodes $\bx_i$ in
 $\real^d$.
\end{enumerate}
\end{lemma}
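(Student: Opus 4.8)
The plan is to read all four properties off the Schur-complement representation \eqref{eq:schur}, $\bW = \bA_{BB} - \bA_{BI}\bA_{II}^\dagger\bA_{IB}$, together with the energy interpretation \eqref{eq:quad} of its quadratic form. Parts (a) and (b) are quick. Every direction $\bn_{i,j}$ and every spring constant $k_{i,j}$ is real, so $\bA$ is a real matrix; the Moore--Penrose pseudo-inverse of a real matrix is real and the entrywise operations in \eqref{eq:schur} preserve reality, giving $\bW\in\real^{nd\times nd}$, where $n$ is the number of terminal nodes. For (b), reciprocity $\bA=\bA^T$ (already invoked in \lemref{lem:floppy}) yields $\bA_{BB}^T=\bA_{BB}$, $\bA_{II}^T=\bA_{II}$ and $\bA_{IB}^T=\bA_{BI}$, while $(\bA_{II}^\dagger)^T=(\bA_{II}^T)^\dagger=\bA_{II}^\dagger$; transposing \eqref{eq:schur} then returns $\bW$.

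For (c) I would first record that the minimization \eqref{eq:quad} attains its infimum, which holds by \lemref{lem:floppy} since $\range(\bA_{IB})\subset\range(\bA_{II})$ guarantees $\bA_{IB}\bu_B\in\range(\bA_{II})$; the minimizer is $\bu_I=-\bA_{II}^\dagger\bA_{IB}\bu_B$ and its optimal value is exactly $\bu_B^T\bW\bu_B$. Hence $\bu_B^T\bW\bu_B=q_{\bW}(\bu_B)=\inf_{\bu_I}q_{\bA}(\bu_B,\bu_I)\ge 0$, because $q_{\bA}$ is a sum of squared spring elongations and is nonnegative. Combined with the symmetry from (b), this gives $\bW\succeq 0$.

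The real work is (d), and the first move is to recast "every column is balanced" as an invariance statement. Introduce the rigid-body displacements $\bv$ with components $\bv_i=\bc+\bB\bx_i$, where $\bc\in\real^d$ and $\bB=-\bB^T$; translations ($\bB=\bzero$) and infinitesimal rotations ($\bc=\bzero$) span this space. A short computation shows a force vector $\bbf$ is balanced exactly when $\bv^T\bbf=0$ for every such $\bv$: the translation part gives $\bc\cdot\sum_i\bbf_i$, reproducing condition (a), while $\sum_i(\bB\bx_i)\cdot\bbf_i$ reproduces the torque $\sum_i\bx_i\cross\bbf_i$ under the standard identification of antisymmetric $\bB$ with rotations (the axial vector when $d=3$, a scalar when $d=2$), reproducing condition (b) of the definition. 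Applying this to $\bbf=\bW\be_k$ for each canonical vector $\be_k$ and using the symmetry already proved, the claim "every column of $\bW$ is a balanced system of forces" is equivalent to $\bW\bv=\bzero$ for every rigid-body displacement $\bv$ of the terminal nodes.

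It then remains to show $\bW$ annihilates rigid motions. Given a boundary rigid displacement with components $\bv_B$, extend it to the rigid displacement $\widetilde{\bv}=(\bv_B,\bv_I)$ of all nodes. Each spring term in $q_{\bA}$ depends only on $\M{\bu_i-\bu_j}\cdot\bn_{i,j}$, and for a rigid motion $\bu_i-\bu_j=\bB(\bx_i-\bx_j)$ is orthogonal to $\bx_i-\bx_j$ because $\by^T\bB\by=0$ for antisymmetric $\bB$; hence $q_{\bA}(\widetilde{\bv})=0$. Therefore $0\le q_{\bW}(\bv_B)=\inf_{\bu_I}q_{\bA}(\bv_B,\bu_I)\le q_{\bA}(\bv_B,\bv_I)=0$, so $\bv_B^T\bW\bv_B=0$, and since $\bW$ is symmetric positive semidefinite this forces $\bW\bv_B=\bzero$, completing (d). The main obstacle is precisely this part: the reduction in the previous paragraph must be set up uniformly for $d=2,3$, and the energy-free property of infinitesimal rotations is where the geometry is essential — the translation case is trivial, but the rotation case relies on the central nature of the spring forces, i.e.\ the orthogonality of $\bB(\bx_i-\bx_j)$ to $\bn_{i,j}$.
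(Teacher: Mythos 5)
Your proposal is correct, and for parts (a)--(c) it follows the paper's own route: the paper also establishes (c) by noting that the all-terminal quadratic form is a nonnegative sum of spring energies and then passing through the minimization \eqref{eq:quad}, exactly as you do (your extra verification that the infimum is attained at $\bu_I=-\bA_{II}^\dagger\bA_{IB}\bu_B$, justified via \lemref{lem:floppy}, is a welcome elaboration the paper leaves implicit in \eqref{eq:schur}). Where you genuinely diverge is (d): the paper simply asserts that (a), (b) and (d) ``follow from the construction of the response matrix,'' the physical content being that each column of $\bW$ is a set of net spring forces at the terminals, and central pairwise forces automatically balance in force and torque by Newton's third law. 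You instead prove (d) variationally: you recast balancedness of every column as the statement $\bW\bv=\bzero$ for every infinitesimal rigid displacement $\bv$ (using the symmetry from (b) and the identification of antisymmetric matrices with infinitesimal rotations in $d=2,3$), then show rigid motions of all nodes have zero elastic energy because $\bB(\bx_i-\bx_j)\perp(\bx_i-\bx_j)$ for antisymmetric $\bB$, and conclude $\bv_B^T\bW\bv_B=0$, hence $\bW\bv_B=\bzero$ by positive semidefiniteness. This buys a self-contained, fully rigorous argument that works uniformly for interior-node networks without re-deriving net forces spring by spring, and as a by-product it exhibits the rigid motions as lying in $\nullspace(\bW)$ --- a fact the paper's terse appeal to the construction does not make explicit; the paper's route, by contrast, is shorter and makes the physical origin of balancedness (central forces) transparent. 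I see no gap in your argument.
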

\begin{proof}
 Properties (a), (b) and (d) follow from the construction of the
 response matrix. We now prove Property (c). Let $\btW$ be the
 response matrix of a network if all the nodes are considered as
 terminal nodes. Then for all displacements $\bu \in \real^{nd}$ we have
 \[
  \tq(\bu) = \bu^T \btW \bu = \sum_{\text{springs}~i,j} s_{(\bx_i,\bx_j)}
 (\bu_i,\bu_j) \geq 0,
 \]
 where $s_{(\bx_i,\bx_j)} (\bu,\bv)$ is the quadratic form associated with
 the spring between nodes $\bx_i$ and $\bx_j$.  Thus (c) holds
 for networks where all the nodes are terminals. Using \eqref{eq:quad}
 we see that (c) holds for general networks as well.
\end{proof}

\subsubsection{The dynamic case}
\label{sec:dynprop}
The response function in the dynamic case can be obtained in a similar way
as in the static case. First if all the nodes are terminal nodes, the
response function $\bA(\om)$ of the network is given by
\eqref{eq:allterm}. The partitioning of $\bA$ induced by the partitioning
of the nodes into boundary $B$ and interior $I$ nodes is,
\[
 \bA(\om) = 
  \begin{bmatrix} \bK_{BB} & \bK_{BI}\\ \bK_{IB} & \bK_{II} \end{bmatrix} 
  - \om^2 \begin{bmatrix} \bM_{BB} & \\ & \bM_{II} \end{bmatrix}.
\]
As in the static case we can introduce the quadratic form
\begin{equation}
 q_{\bA}(\bu_B,\bu_I;\om) = \bu_B^T (\bK_{BB} -\om^2 \bM_{BB} ) \bu_B +
 2\bu_B^T \bK_{BI} \bu_I + \bu_I^T (\bK_{II} - \om^2 \bM_{II}) \bu_I.
 \label{eq:quaddyn}
\end{equation}

\begin{remark}
Unlike in the static case the quadratic form $q_{\bA}(\bu_B,\bu_I;\om)$  could
be unbounded from below for $\bu_B$ fixed. This happens for example if there
is a $\bu_I$ so that $\bK_{II} \bu_I \neq \bzero$ and $\bM_{II} \bu_I \neq
\bzero$. Then for $\om$ large enough the matrix in the last term of $q_\bA$
becomes indefinite.  Thus we cannot define the response function at the
terminals through a minimization principle similar to \eqref{eq:quad}.
\end{remark}

The dynamic response function at the terminals is the displacement-to-forces map
at the critical point $\nabla_{\bu_I} q_\bA(\bu_B,\bu_I;\om) = \bzero$, if such
critical point exists. Because $q_\bA$ may be unbounded below, this
critical point could be a saddle point for the quadratic $q_\bA$ with $\bu_B$
fixed. The frequencies $\om$ for which there is no critical point (i.e. there
is some $\bu_B$ so that $\nabla_{\bu_I} q_\bA(\bu_B,\bu_I;\om) \neq \bzero$ for
all $\bu_I$) are important physically and in our derivation and correspond to
the {\em resonance frequencies} of the network. 

To give an expression for the dynamic response function we partition the
interior nodes into nodes $J$ with positive mass and massless nodes $L$, so
that $I = J \cup L$. Therefore $\bM_{JJ}$ is positive definite but $\bM_{LL} =
\bzero$.

\begin{lemma}
 \label{lem:schurdyn}
 The response function at the terminals is
 \begin{equation}
   \bW(\om) = \btK_{BB} - \om^2 \bM_{BB}  - \btK_{BJ} (\btK_{JJ} - \om^2
   \bM_{JJ})^{-1} \btK_{JB},
  \label{eq:schurdyn}
 \end{equation}
 provided that $\om^2$ is not an eigenvalue of $\bM_{JJ}^{-1/2} \btK_{JJ}
 \bM_{JJ}^{-1/2}$. Here we have used the submatrices of the matrix
 \begin{equation}
  \btK
  =
  \begin{bmatrix}
   \btK_{BB} & \btK_{BJ}\\
   \btK_{JB} & \btK_{JJ}
  \end{bmatrix}
  = 
  \begin{bmatrix}
   \bK_{BB} & \bK_{BJ}\\
   \bK_{JB} & \bK_{JJ}
  \end{bmatrix}
  -
  \begin{bmatrix}
   \bK_{BL}\\ \bK_{JL}
  \end{bmatrix}
  \bK_{LL}^\dagger
  \begin{bmatrix}
  \bK_{LB} & \bK_{LJ}
  \end{bmatrix}.
  \label{eq:tk}
 \end{equation}
\end{lemma}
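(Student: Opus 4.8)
The plan is to derive \eqref{eq:schurdyn} by eliminating the interior nodes in two stages—first the massless nodes $L$, then the massive nodes $J$—which is precisely the two-level structure encoded in the definition \eqref{eq:tk} of $\btK$. The response function is the displacement-to-forces map at the critical point $\nabla_{\bu_I} q_\bA(\bu_B,\bu_I;\om)=\bzero$. Since $\bM_{II}=\diag(\bM_{JJ},\bzero)$, these force-balance equations split into a dynamic block at the massive nodes,
\[
 (\bK_{JJ}-\om^2\bM_{JJ})\bu_J + \bK_{JL}\bu_L + \bK_{JB}\bu_B = \bzero,
\]
and a purely static block at the massless nodes, $\bK_{LL}\bu_L + \bK_{LJ}\bu_J + \bK_{LB}\bu_B = \bzero$, while the net force at the terminals reads $\bbf_B = (\bK_{BB}-\om^2\bM_{BB})\bu_B + \bK_{BJ}\bu_J + \bK_{BL}\bu_L$.

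First I would eliminate $\bu_L$. Because $\bK$ is the static response matrix of the network with all nodes terminal, I can apply \lemref{lem:floppy} with $L$ in the role of the interior set and $B\cup J$ in the role of the boundary set; this yields $\range(\bK_{LB})\subset\range(\bK_{LL})$ and $\range(\bK_{LJ})\subset\range(\bK_{LL})$, so the massless balance is solvable and $\bu_L = -\bK_{LL}^\dagger(\bK_{LJ}\bu_J+\bK_{LB}\bu_B)$ up to an element of $\nullspace(\bK_{LL})$. By reciprocity $\bK_{BL}=\bK_{LB}^T$ and $\bK_{JL}=\bK_{LJ}^T$, and the same two range inclusions give $\nullspace(\bK_{LL})\subset\nullspace(\bK_{BL})\cap\nullspace(\bK_{JL})$, so this null-space ambiguity contributes nothing to $\bbf_B$ or to the $J$-block and the substitution is unambiguous. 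Carrying it out replaces the $\bK$-blocks on $B\cup J$ by exactly the tilded blocks of \eqref{eq:tk}, reducing the system to $\btK-\om^2\diag(\bM_{BB},\bM_{JJ})$ with
\[
 (\btK_{JJ}-\om^2\bM_{JJ})\bu_J + \btK_{JB}\bu_B = \bzero, \qquad \bbf_B = (\btK_{BB}-\om^2\bM_{BB})\bu_B + \btK_{BJ}\bu_J.
\]

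Next I would eliminate $\bu_J$. Since $\bM_{JJ}$ is positive definite, I can factor $\btK_{JJ}-\om^2\bM_{JJ} = \bM_{JJ}^{1/2}(\bM_{JJ}^{-1/2}\btK_{JJ}\bM_{JJ}^{-1/2}-\om^2\bI)\bM_{JJ}^{1/2}$, which is invertible exactly when $\om^2$ is not an eigenvalue of $\bM_{JJ}^{-1/2}\btK_{JJ}\bM_{JJ}^{-1/2}$, the stated hypothesis. Solving the $J$-block for $\bu_J$ and substituting into the expression for $\bbf_B$ then gives $\bW(\om)$ in the form \eqref{eq:schurdyn}. I expect the only delicate point to be the first stage: unlike the static case, here there is no minimization principle such as \eqref{eq:quad} to lean on, so the well-posedness of the massless elimination—both the solvability of $\bK_{LL}\bu_L=\cdots$ and the independence of the terminal forces from the floppy modes of $\bK_{LL}$—must be argued directly, and it is \lemref{lem:floppy} together with reciprocity that makes the reduced matrix $\btK$ of \eqref{eq:tk} well defined.
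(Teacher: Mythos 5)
Your proposal is correct and follows essentially the same route as the paper's proof: eliminate the massless nodes $L$ first to obtain $\btK$ as the static Schur complement \eqref{eq:tk}, note that the dynamic response at $B\cup J$ is then $\btK - \om^2\diag(\bM_{BB},\bM_{JJ})$, and eliminate $\bu_J$ using the factorization through $\bM_{JJ}^{1/2}$ under the stated non-resonance hypothesis. The only difference is one of detail: where the paper simply cites \eqref{eq:schur} for the first stage, you rightly observe that the minimization principle is unavailable dynamically and instead justify the massless elimination directly from force balance via \lemref{lem:floppy} and reciprocity—which is exactly the content underlying \eqref{eq:schur}, so the arguments coincide.
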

\begin{proof}
 The matrix $\btK$ is the response matrix for the network with
 terminals $B \cup J$ and interior nodes $L$ and can be obtained from
 \eqref{eq:schur}. Since the nodes $L$ are massless the dynamic response at
 the nodes $B \cup J$ is $\btK - \om^2 \diag(\bM_{BB},\bM_{JJ})$. Since
 $\bM_{JJ}$ is non-singular, the matrix $\btK_{JJ} - \om^2 \bM_{JJ}$ is
 singular if and only if $\om^2$ is an eigenvalue of $\bM_{JJ}^{-1/2}
 \btK_{JJ} \bM_{JJ}^{-1/2}$. Thus when $\om^2$ is not an eigenvalue of
 $\bM_{JJ}^{-1/2} \btK_{JJ} \bM_{JJ}^{-1/2}$, we can equilibrate forces at
 the nodes $J$ and get the expression for the response function. 
\end{proof}

A corollary of \lemref{lem:schurdyn} is that if $\om$ is a resonance
frequency of the network then $\om^2$ must be an eigenvalue of the
matrix $\bM_{JJ}^{-1/2} \btK_{JJ} \bM_{JJ}^{-1/2}$.
The expression for the response function in \lemref{lem:schurdyn} leads to the
following properties.
\begin{lemma}
\label{lem:dynamic}
The response function $\bW(\om)$ of any network of springs and
masses with $n$ terminals is of the form
\begin{equation}
 \bW(\om) = \bA - \om^2 \bM + \sum_{i=1}^p \frac{\bC^{(i)}}{\om^2 -
 \om_i^2} \; \in \; \real^{nd \times nd},
 \label{eq:wom}
\end{equation}
where the matrix $\bM = \diag(m_1\be,\ldots,m_n\be)$ is real diagonal with
the masses of the boundary nodes in the diagonal, the vector $\be =
[1,\ldots,1]^T \in \real^d$, the matrices $\bC^{(i)}$ are real symmetric
positive semidefinite, and the static response 
\[
 \bW(0) = \bA - \sum_{i=1}^p \om_i^{-2} \bC^{(i)},
\]
is real symmetric positive semidefinite and balanced (i.e. it satisfies the
conditions (a)--(d) of \lemref{lem:static}). The resonant frequencies are distinct, finite and satisfy $\om_i^2 > 0$.
\end{lemma}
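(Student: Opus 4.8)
The plan is to start from the explicit expression for the response function in \lemref{lem:schurdyn}, namely $\bW(\om) = \btK_{BB} - \om^2 \bM_{BB} - \btK_{BJ}(\btK_{JJ} - \om^2 \bM_{JJ})^{-1} \btK_{JB}$, and to derive \eqref{eq:wom} by performing a spectral (eigenvalue) decomposition of the matrix governing the interior nodes with positive mass. First I would diagonalize the term responsible for the frequency dependence: since $\bM_{JJ}$ is positive definite, I can write $\bM_{JJ}^{-1/2} \btK_{JJ} \bM_{JJ}^{-1/2} = \bQ \bD \bQ^T$ with $\bQ$ orthogonal and $\bD = \diag(\om_1^2,\ldots)$ the (nonnegative, by positive semidefiniteness of $\btK_{JJ}$) eigenvalues. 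Substituting $\btK_{JJ} - \om^2\bM_{JJ} = \bM_{JJ}^{1/2}(\bQ \bD \bQ^T - \om^2 \bI)\bM_{JJ}^{1/2}$ and inverting, the middle factor becomes $\bQ\,\diag\big(1/(\om_i^2-\om^2)\big)\bQ^T$, so the product $\btK_{BJ}(\btK_{JJ}-\om^2\bM_{JJ})^{-1}\btK_{JB}$ splits as a sum over eigenvalues. Collecting terms, I would set $\bA := \btK_{BB}$, identify $\bM := \bM_{BB}$, and read off $\bC^{(i)} := \bv_i \bv_i^T$ where $\bv_i$ is the corresponding column of $\btK_{BJ}\bM_{JJ}^{-1/2}\bQ$; this makes each $\bC^{(i)}$ manifestly real, symmetric, and positive semidefinite (rank one per eigenvector, or after grouping repeated eigenvalues, positive semidefinite of higher rank). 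The sign convention $1/(\om^2-\om_i^2)$ in \eqref{eq:wom} versus $1/(\om_i^2-\om^2)$ just flips an overall sign, which I would track carefully so that the $\bC^{(i)}$ end up positive semidefinite.

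Next I would address the qualitative claims on the resonances. That $\om_i^2 > 0$ (strictly positive) requires ruling out a zero eigenvalue contributing a pole; an eigenvalue $\om_i^2 = 0$ would correspond to $\btK_{JJ}\bw = \bzero$, a floppy mode among the massive interior nodes, and by the reasoning of \lemref{lem:floppy} (applied to the network $\btK$ with terminals $B\cup J$) such a mode is decoupled from the terminals, so $\btK_{JB}\bw = \bzero$ and it contributes nothing to the sum — hence it can be discarded and every surviving $\om_i^2$ is strictly positive. That the $\om_i$ are finite and the list can be taken distinct follows because there are only finitely many eigenvalues and I would group the rank-one contributions sharing a common $\om_i^2$ into a single positive semidefinite $\bC^{(i)}$, leaving distinct poles $p$ in number.

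For the static limit, I would simply evaluate \eqref{eq:wom} at $\om = 0$ to get $\bW(0) = \bA - \sum_i \om_i^{-2}\bC^{(i)}$ (well-defined precisely because $\om_i^2 > 0$), and then invoke \lemref{lem:static}: $\bW(0)$ is the static response matrix of the same underlying network, so properties (a)--(d) — real symmetry, positive semidefiniteness, and the balanced-forces condition — hold directly. The realness of $\bW(\om)$ as a whole is inherited from the realness of all the constituent matrices ($\bA$, $\bM$, the $\bC^{(i)}$) together with $\om$ real.

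The main obstacle I expect is the bookkeeping around floppy modes and degenerate eigenvalues: I must confirm that eigenvectors of $\bM_{JJ}^{-1/2}\btK_{JJ}\bM_{JJ}^{-1/2}$ with zero eigenvalue genuinely decouple from the boundary (so no spurious $\om_i = 0$ pole appears), and simultaneously organize repeated positive eigenvalues so that the coefficient matrices $\bC^{(i)}$ are positive semidefinite rather than merely a difference of rank-one terms. Both of these reduce to careful application of the range/nullspace containment in \lemref{lem:floppy} and the nonnegativity of $\btK_{JJ}$, so the argument is structurally clean; the delicate point is keeping the sign conventions and the grouping consistent throughout.
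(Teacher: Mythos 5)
Your proposal is correct and follows essentially the same route as the paper's proof: starting from \lemref{lem:schurdyn}, spectrally decomposing $\bC = \bM_{JJ}^{-1/2}\btK_{JJ}\bM_{JJ}^{-1/2}$, reading off $\bA = \btK_{BB}$, $\bM = \bM_{BB}$ and the rank-one terms $\btc_j\btc_j^T$ with $\btc_j$ the columns of $\btK_{BJ}\bM_{JJ}^{-1/2}\bQ$, grouping repeated eigenvalues into positive semidefinite $\bC^{(i)}$, and discarding the zero-eigenvalue modes via the nullspace containment of \lemref{lem:floppy}, exactly as the paper does. The one step you assert rather than verify --- that the $\om=0$ value of the dynamic formula is the static response matrix, which is what licenses invoking \lemref{lem:static} when $\btK_{JJ}$ is singular --- is proved in the paper by checking $\bC^\dagger = \sum_{j=1}^r \om_j^{-2}\bc_j\bc_j^T$ and that $\bu_J = -\bM_{JJ}^{-1/2}\bC^\dagger\bM_{JJ}^{-1/2}\btK_{JB}\bu_B$ balances the interior forces, using the range inclusion from \lemref{lem:floppy} once more.
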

\begin{proof}
Let $I = J \cup L$ be a partition of the interior nodes into massless nodes
$L$ and nodes with positive mass $J$, and let $\btK$ be defined as in
\eqref{eq:tk}. By \lemref{lem:schurdyn} the response function at the terminals
can be rewritten as
\[
 \bW(\om) = \btK_{BB} - \om^2 \bM_{BB} - \btK_{BJ} \bM_{JJ}^{-1/2} (\bC - \om^2 \bI)^{-1}  \bM_{JJ}^{-1/2} \btK_{JB},
\]
where $\bC = \bM_{JJ}^{-1/2} \btK_{JJ} \bM_{JJ}^{-1/2}$ and $\om^2$ is not an
eigenvalue of $\bC$. The matrix $\bC$ is symmetric positive semidefinite
because $\btK$ is a symmetric positive semidefinite matrix
(\lemref{lem:static}).  Let $\{ \om_j^2 \}_{j=1}^N$ be the (nonnegative)
eigenvalues of $\bC$ and $\{ \bc_j \}_{j=1}^N$ be a corresponding orthonormal
basis of eigenvectors of $\bC$, where $N=|J|$. When $\om \neq 0$ and $\om^2
\neq \om_j^2$, the response function $\bW(\om)$ becomes
\begin{equation}
 \label{eq:swom}
 \bW(\om) = \btK_{BB} - \om^2 \bM_{BB} + \sum_{j=1}^N \frac{\btc_j
 \btc_j^T}{\om^2 - \om_j^2},
\end{equation}
with $\btc_j = \btK_{BJ} \bM_{JJ}^{-1/2} \bc_j$, $j=1,\ldots,N$.
Let $r = \rank(\bC) = \rank(\btK_{JJ})$ and assume the eigenvalues are
ordered such that $\om_j^2>0$ for $j=1,\ldots,r$. Clearly $\btK_{JJ}
\bM_{JJ}^{-1/2} \bz = \bzero$ if and only if $\bC\bz = \bzero.$ Thus
$\bM_{JJ}^{-1/2} \bc_j \in \nullspace(\btK_{JJ})$, for $j = r+1,\ldots, N$.
By \lemref{lem:floppy}, we have $\nullspace(\btK_{JJ}) \subset
\nullspace(\btK_{JB})$ which means that 
\[
 \btc_j = \bzero ~\text{for}~ j=r+1,\ldots,N.
\]
In other words, only the first $r$ terms of the sum in
\eqref{eq:swom} are nonzero. We obtain the form \eqref{eq:wom} of the
response function from \eqref{eq:swom} by setting $\bA = \btK_{BB}$ and $\bM =
\bM_{BB}$. The matrices $\bC^{(i)}$ are the sum of the matrices $\btc_j
\btc_j^T$ that correspond to the same resonance $\om_i^2$, thus the
$\bC^{(i)}$ must be real symmetric positive semidefinite.

We now show that $\bW(0) = \btK_{BB} - \btK_{BJ} \btK_{JJ}^\dagger
\btK_{JB}$, i.e. at $\om=0$ the dynamic response function is the static
response of the network. Then the properties of $\bW(0)$ follow from
\lemref{lem:static}.  First note that from \eqref{eq:swom},
\[
 \bW(0) = \btK_{BB} - \btK_{BJ} \bM_{JJ}^{-1/2} \bC^\dagger \bM_{JJ}^{-1/2} \btK_{JB}
\]
where we used that $$\bC^\dagger = \sum_{j=1}^r \om_j^{-2} \bc_j \bc_j^T.$$ It
is sufficient to show that $\bu_J = -  \bM_{JJ}^{-1/2} \bC^\dagger
\bM_{JJ}^{-1/2} \btK_{JB} \bu_B$ equilibrates the forces at the interior
nodes for any terminal displacements $\bu_B$. Indeed we have 
\begin{equation}
 \label{eq:cpinv}
 \bC \bM_{JJ}^{1/2} \bu_J = (\bM_{JJ}^{-1/2} \btK_{JJ} \bM_{JJ}^{-1/2}) \bM_{JJ}^{1/2} \bu_J = -\bM_{JJ}^{-1/2} \btK_{JB} \bu_B,
\end{equation}
since \lemref{lem:floppy} and $\bM_{JJ}$ invertible imply
$$\range(\bC) = \range(\bM_{JJ}^{-1/2} \btK_{JJ}) \supset
\range(\bM_{JJ}^{-1/2} \btK_{JB}).$$ The balance of forces at the nodes $J$
(i.e. $\btK_{JJ} \bu_J = - \btK_{JB} \bu_B$) comes from multiplying
\eqref{eq:cpinv} by $\bM_{JJ}^{1/2}$ on the left.  
\end{proof}

\subsection{Network transformations not affecting the response function}
\label{sec:transf}
We need a few elementary transformations that allow more flexibility
with the placement of springs in a network. We assume
throughout this text that the springs occupy an arbitrarily small volume
and that the nodes are points which may or may not have a mass attached
to them.

\subsubsection{Avoiding a line} It is possible to transform a spring in
order to avoid a line or a crossing (for networks in $\real^3$). The
construction for networks $\real^3$ is given in \cite[Example
3.2]{milton_seppecher2008} and consists of replacing the spring by a {\em
simple truss} as shown in \figref{fig:truss}(a). A similar construction
can be done for planar networks, see \figref{fig:truss}(b). If a network
in $\real^3$ has springs crossing, these can be eliminated via this
transformation since the position of the additional interior nodes $\bu,
\bv,\bw$ is not fixed. Moreover, the additional nodes in the truss structures can be
chosen to avoid a finite number of points.

\begin{figure}
 \begin{center}
  \begin{tabular}{c@{\hspace{1cm}}c}
\begin{picture}(0,0)%
\includegraphics{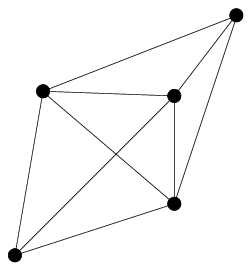}%
\end{picture}%
\setlength{\unitlength}{1184sp}%
\begingroup\makeatletter\ifx\SetFigFont\undefined%
\gdef\SetFigFont#1#2#3#4#5{%
  \reset@font\fontsize{#1}{#2pt}%
  \fontfamily{#3}\fontseries{#4}\fontshape{#5}%
  \selectfont}%
\fi\endgroup%
\begin{picture}(3930,4723)(1861,-5999)
\put(1876,-2986){\makebox(0,0)[lb]{\smash{{\SetFigFont{10}{12.0}{\rmdefault}{\mddefault}{\updefault}{\color[rgb]{0,0,0}$\bv$}%
}}}}
\put(5776,-1711){\makebox(0,0)[lb]{\smash{{\SetFigFont{10}{12.0}{\rmdefault}{\mddefault}{\updefault}{\color[rgb]{0,0,0}$\bx_j$}%
}}}}
\put(4876,-4861){\makebox(0,0)[lb]{\smash{{\SetFigFont{10}{12.0}{\rmdefault}{\mddefault}{\updefault}{\color[rgb]{0,0,0}$\bu$}%
}}}}
\put(2176,-5836){\makebox(0,0)[lb]{\smash{{\SetFigFont{10}{12.0}{\rmdefault}{\mddefault}{\updefault}{\color[rgb]{0,0,0}$\bx_i$}%
}}}}
\put(4726,-3136){\makebox(0,0)[lb]{\smash{{\SetFigFont{10}{12.0}{\rmdefault}{\mddefault}{\updefault}{\color[rgb]{0,0,0}$\bw$}%
}}}}
\end{picture}%
&%
\begin{picture}(0,0)%
\includegraphics{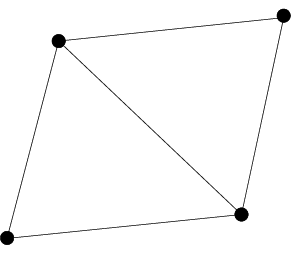}%
\end{picture}%
\setlength{\unitlength}{1184sp}%
\begingroup\makeatletter\ifx\SetFigFont\undefined%
\gdef\SetFigFont#1#2#3#4#5{%
  \reset@font\fontsize{#1}{#2pt}%
  \fontfamily{#3}\fontseries{#4}\fontshape{#5}%
  \selectfont}%
\fi\endgroup%
\begin{picture}(4704,4573)(2062,-5699)
\put(6076,-4786){\makebox(0,0)[lb]{\smash{{\SetFigFont{10}{12.0}{\rmdefault}{\mddefault}{\updefault}{\color[rgb]{0,0,0}$\bu$}%
}}}}
\put(2251,-5536){\makebox(0,0)[lb]{\smash{{\SetFigFont{10}{12.0}{\rmdefault}{\mddefault}{\updefault}{\color[rgb]{0,0,0}$\bx_i$}%
}}}}
\put(6751,-1561){\makebox(0,0)[lb]{\smash{{\SetFigFont{10}{12.0}{\rmdefault}{\mddefault}{\updefault}{\color[rgb]{0,0,0}$\bx_j$}%
}}}}
\put(2776,-1636){\makebox(0,0)[lb]{\smash{{\SetFigFont{10}{12.0}{\rmdefault}{\mddefault}{\updefault}{\color[rgb]{0,0,0}$\bv$}%
}}}}
\end{picture}%
\\
  (a) & (b)
  \end{tabular}
 \end{center}
 \caption{Truss structure replacing a spring between nodes $\bx_i$ and
 $\bx_j$ without changing the response function. This structure can be
 used to avoid a line for (a) networks in $\real^3$ or (b) planar
 networks.}
 \label{fig:truss}
\end{figure}

\subsubsection{Virtual crossings} A network with all springs in
$\real^2$ is not necessarily planar because its springs may cross.
However \cite[Example 3.15]{milton_seppecher2008} shows how to replace
such a crossing by a planar network with exactly the same response
function. This transformation involves adding a node at the crossing point
of the springs and carefully choosing the spring constants. To avoid a
finite number of points one can first replace one of the springs by a
simple truss as in \figref{fig:truss}(b) and use virtual crossings to
transform the network into a planar network.

\subsection{The superposition principle} A fundamental tool for our
construction of a network reproducing the response function is the
following result, which is valid for both planar and $\real^3$ networks.

\begin{lemma}
Let $\bW_1$ and $\bW_2$ be the response matrices of two networks (planar
or in $\real^3$, static or dynamic) sharing the same terminals but with
no interior nodes in common. Then the response function of both networks
together is $\bW_1 + \bW_2$.
\end{lemma}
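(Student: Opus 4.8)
The plan is to realize the combined network explicitly and exploit that its interior nodes partition into the disjoint sets $I_1$ and $I_2$ coming from the two networks, while the terminals $B$ are shared. First I would form the matrix $\bA$ of the combined network with all nodes regarded as terminals, ordering the displacements as $(\bu_B,\bu_{I_1},\bu_{I_2})$. Since the stiffness matrix is a sum over springs, each spring belongs to exactly one of the two networks, and the mass matrix is diagonal (so distinct nodes are not inertially coupled), the block linking $I_1$ to $I_2$ vanishes:
\[
 \bA(\om) = \begin{bmatrix} \bA_{BB} & \bA_{B I_1} & \bA_{B I_2}\\ \bA_{I_1 B} & \bA_{I_1 I_1} & \bzero\\ \bA_{I_2 B} & \bzero & \bA_{I_2 I_2} \end{bmatrix},
\]
where $\bA_{BB}$ is the sum of the two terminal blocks and the remaining nonzero blocks are inherited from the individual networks. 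Equivalently, writing $\bA_1,\bA_2$ for the all-nodes-as-terminal matrices of the two networks, the fact that both the spring energy and the mass term are additive over the two disjoint spring sets means the associated quadratic form splits,
\[
 q_\bA(\bu_B,\bu_{I_1},\bu_{I_2};\om) = q_{\bA_1}(\bu_B,\bu_{I_1};\om) + q_{\bA_2}(\bu_B,\bu_{I_2};\om),
\]
and this identity is the engine of the whole argument.

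For the static case I would invoke the minimization principle \eqref{eq:quad}. Because the two summands above depend on the disjoint variables $\bu_{I_1}$ and $\bu_{I_2}$, the infimum over the interior displacements factors:
\[
 q_\bW(\bu_B) = \inf_{\bu_{I_1},\bu_{I_2}} \Mb{ q_{\bA_1}(\bu_B,\bu_{I_1}) + q_{\bA_2}(\bu_B,\bu_{I_2}) } = q_{\bW_1}(\bu_B) + q_{\bW_2}(\bu_B).
\]
Since a symmetric matrix is determined by its quadratic form, this gives $\bW = \bW_1 + \bW_2$ at once.

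For the dynamic case the minimization principle can fail (cf. the remark following \eqref{eq:quaddyn}), so I would use instead the critical-point definition of the response as the displacement-to-forces map at $\nabla_{\bu_I} q_\bA = \bzero$. By the additive splitting the stationarity conditions decouple into $\nabla_{\bu_{I_1}} q_{\bA_1} = \bzero$ and $\nabla_{\bu_{I_2}} q_{\bA_2} = \bzero$, i.e.\ into the force-balance equations of the two networks taken separately. Provided $\om$ is a resonance of neither network, each of these has a solution---namely the equilibrating interior displacement of the corresponding single network---and their concatenation is the critical point of the combined network. Evaluating the resulting net forces at the terminals and regrouping,
\[
 \bbf_B = \bA_{BB}\bu_B + \bA_{B I_1}\bu_{I_1} + \bA_{B I_2}\bu_{I_2} = \bW_1\bu_B + \bW_2\bu_B,
\]
gives $\bW(\om) = \bW_1(\om)+\bW_2(\om)$ wherever all three functions are defined.

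The step I expect to be the main obstacle is precisely this dynamic case: unlike the static case there is no infimum to lean on, so one must argue directly that the two families of force-balance equations decouple and that the combined critical point exists exactly when $\om$ avoids the resonances of both networks. The block structure of $\bA$ makes the decoupling transparent, but some care is needed to state the identity only on the common domain of definition and to note that the combined network's resonances are contained in the union of the two individual resonance sets.
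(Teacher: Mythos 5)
Your algebraic core is correct, and it is considerably more self-contained than the paper's own proof, which disposes of the superposition principle by citing the reasoning of \cite[Remark 3.9]{milton_seppecher2008} together with the transformations of \secref{sec:transf}. Your key observations all check out: with interior nodes $I_1\cup I_2$ disjoint, the $I_1$--$I_2$ blocks of the combined stiffness matrix vanish and the mass matrix stays diagonal, so the quadratic form splits additively; in the static case the infimum over $(\bu_{I_1},\bu_{I_2})$ factors and $q_\bW=q_{\bW_1}+q_{\bW_2}$ identifies $\bW=\bW_1+\bW_2$; in the dynamic case the force-balance equations at the interior nodes decouple. Your caution about the dynamic case is also well placed and can be discharged cleanly: partitioning the interior nodes as $L=L_1\cup L_2$ and $J=J_1\cup J_2$, the Schur complement \eqref{eq:tk} splits blockwise, so $\btK_{JJ}$ is block diagonal and \lemref{lem:schurdyn} applies to the union network, with the combined resonance set contained in the union of the two individual ones, exactly as you asserted. (Non-uniqueness of the interior critical point from floppy modes is harmless, since by \lemref{lem:floppy} nullspace components do not affect the terminal forces.)

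The genuine gap is geometric rather than algebraic, and it is precisely what the paper's short proof is about. The lemma claims the conclusion for \emph{planar} networks, and the paper defines a network to be planar only if its springs do not cross; two planar networks sharing terminals will in general cross each other when superimposed, so ``both networks together'' is not, as it stands, a planar network, and your argument proves superposition only for the abstract union of the two spring systems. To close the claim one must convert that union into an honest planar network with identical (frequency-independent) response: replace each nonzero-angle crossing by a virtual crossing as in \cite[Example 3.15]{milton_seppecher2008}, and handle zero-angle crossings or overlapping elements with the simple trusses of \figref{fig:truss}, choosing the extra interior nodes to avoid a finite set of positions (\secref{sec:transf}). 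A milder version of the same issue arises in $\real^3$, where a node or spring of one network may lie on a spring of the other and the truss transformation is used to avoid lines and points. Your proof should either carry out this repair step or explicitly restrict the conclusion to the non-planar, non-degenerate situation.
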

\begin{proof}
The result follows from the reasoning in \cite[Remark
3.9]{milton_seppecher2008} and the frequency independent transformations
in \secref{sec:transf}. For the planar case any crossing can be
eliminated using \cite[Example 3.15]{milton_seppecher2008}. Note that
the transformations in \secref{sec:transf} allow  one to
avoid a finite number of locations (except the terminals).
\end{proof}

\section{Characterization of the static response}
\label{sec:static}

 Building upon the seminal work of Camar-Eddine and Seppecher
 \cite{camareddine_seppecher2003}, we give necessary and sufficient
 conditions for a function to be the response function for either a planar
 network or a network in $\real^3$.  The sufficiency is proved
 constructively and relies on the existence of networks that have rank
 one response matrices, as is described in detail in the remaining part
 of \secref{sec:static}.

 Recall that the $\epsilon$-neighborhood $\bC_\epsilon$ of
a set $\bC$ is the set,
\begin{equation}
\label{4.0}
\bC_\epsilon= \{\bx \in \real^2 ~|~ \mathrm{dist}(\bx, \bC)\leq \epsilon\}.
\end{equation}
If the set $\bC$ is convex then the set $\bC_\epsilon$ is also convex
because of the convexity of the function $\mathrm{dist}(\bx,
\bC)$ for convex $\bC$ (see e.g. \cite[\S 3.2.5]{boyd_vandenberghe2004}).

 \begin{theorem}
 \label{thm:charstat}
 For any choice of terminal node positions, any matrix $\bW$ satisfying
 the properties in \lemref{lem:static} is the response matrix of a
 purely elastic network which is either planar or in $\real^3$.
 Moreover, any internal nodes in the construction can be chosen within
 an $\eps-$neighborhood of the convex hull of the terminals, and
 avoiding a finite number of positions.
 \end{theorem}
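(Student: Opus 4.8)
The plan is to realize a general $\bW$ as a superposition of networks each having a \emph{rank one} response matrix, after first reading the conditions (a)--(d) spectrally. Conditions (a)--(c) say precisely that $\bW$ is a real symmetric positive semidefinite matrix, so it admits a spectral decomposition. To exploit condition (d) I would first observe that a system of forces $\bbf=(\bbf_1^T,\ldots,\bbf_n^T)^T$ supported at the nodes $\bx_i$ is balanced if and only if it is orthogonal to every \emph{rigid-body displacement}, i.e.\ to every $\bu\in\real^{nd}$ whose blocks are a common translation $\bu_i\equiv\bv$ or an infinitesimal rotation about the origin. This is immediate from the pairings $\bbf^T\bu=\bv\cdot\M{\sum_i\bbf_i}$ for a translation and the analogous torque pairing for a rotation, which vanish for all $\bv$ (resp.\ all rotations) exactly when (a) (resp.\ (b)) of the balance definition holds. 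Writing $\mathcal{V}\subset\real^{nd}$ for the space of rigid-body displacements, condition (d) thus reads $\range(\bW)\perp\mathcal{V}$, and by the symmetry (b) this is equivalent to $\mathcal{V}\subseteq\nullspace(\bW)$ --- the physically expected statement that rigid motions store no energy.

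With this in hand I would decompose
\[
 \bW=\sum_{k=1}^{r}\bw_k\bw_k^T,\qquad \bw_k=\sqrt{\lambda_k}\,\bv_k,\quad r=\rank(\bW),
\]
where $\lambda_k>0$ and the $\bv_k$ form an orthonormal set of eigenvectors of $\bW$. Since $\range(\bW)\subseteq\mathcal{V}^\perp$, each $\bw_k$ is a balanced system of forces, so every rank one term $\bw_k\bw_k^T$ is itself real, symmetric, positive semidefinite, and has balanced columns; i.e.\ each satisfies the hypotheses of \lemref{lem:static}. By the superposition principle (\secref{sec:enet}), if each $\bw_k\bw_k^T$ can be realized by an elastic network and the interior nodes of these $r$ networks are chosen disjoint, then their union realizes $\bW$. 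The problem is therefore reduced to the single rank one case: given one balanced force vector $\bw$, build an elastic network whose response matrix is $\bw\bw^T$.

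This rank one construction is the heart of the matter and, I expect, the main obstacle; it is exactly the ingredient advertised in \secref{sec:static} and carried out in the remainder of that section. The target $\bw\bw^T$ acts as $\bbf=\M{\bw^T\bu}\,\bw$, so the network must behave like a single effective spring whose generalized extension is the linear functional $\bw^T\bu=\sum_i\bw_i\cdot\bu_i$ and whose restoring force is distributed over the terminals in the fixed proportions $\bw_i$. The idea is to use one ordinary Hookean spring together with rigid, energy-free trusses (levers with pivots, as in \figref{fig:truss}) that collect the terminal displacements into the scalar $\bw^T\bu$ and return the spring force to the terminals as the pattern $\bw$. The balance conditions are precisely what make such a \emph{self-equilibrated} assembly of springs and bars possible: a single spring exerts a balanced pair of collinear forces, trusses transmit forces while preserving balance, and conversely every balanced pattern $\bw\in\mathcal{V}^\perp$ is attainable this way. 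Verifying that an arbitrary balanced $\bw$ can be reproduced, and with prescribed overall stiffness, is where the real work lies.

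Finally I would address the geometric requirements of \thmref{thm:charstat}. Each rank one network is first built abstractly and then placed so that all of its interior nodes lie within the prescribed $\eps$-neighborhood of the convex hull of the terminals, shrinking the trusses as needed; the freedom in positioning the auxiliary truss nodes (\secref{sec:transf}) lets me avoid any prescribed finite set of points, in particular keeping the interior nodes of the $r$ subnetworks disjoint. For networks in $\real^3$ the same transformations remove spring crossings, and in the planar case ($d=2$) the virtual-crossing transformation of \secref{sec:transf} turns the assembled network, whose springs already lie in the plane, into a genuinely planar network with the same response. Combining these steps yields an elastic network, planar or in $\real^3$, with response matrix $\bW$ and interior nodes confined to the $\eps$-neighborhood of the convex hull.
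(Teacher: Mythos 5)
Your proposal follows essentially the same route as the paper's proof: spectrally decompose $\bW$ into rank one terms $\lambda_k\bv_k\bv_k^T$, check each term satisfies the conditions of \lemref{lem:static}, realize each by the rank one constructions (\thmref{thm:rankone} in the plane, \thmref{thm:rankone_gen} in $\real^3$), combine via the superposition principle, and use the transformations of \secref{sec:transf} to remove crossings, confine interior nodes to the $\eps$-neighborhood, and avoid a finite set of points. Your explicit verification that each eigenvector with positive eigenvalue is balanced --- identifying balanced force systems as the orthogonal complement of rigid-body displacements and using symmetry to get $\mathcal{V}\subseteq\nullspace(\bW)$ --- is in fact a cleaner justification of a step the paper passes over quickly.
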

 \begin{proof}
 By properties (a)--(c) in \lemref{lem:static} the matrix $\bW$ can
 be written as a sum of rank one matrices $\bW_i$:
 \[
  \bW = \sum_{i=1}^n \bW_i,
 \]
 where $\bW_i = \lambda_i \bw_i \bw_i^T$ and $(\lambda_i,\bw_i)$ is an
 eigenpair of $\bW$, $\lambda_i\geq 0$,
 for $i=1,\ldots,n$.  Each $\bW_i$ satisfies properties (a)--(d) in
 \lemref{lem:static}. Properties (a)--(c) are easy to check for $\bW$ and (d)
 follows by linearity, since it holds for each $\bW_i$. Owing to
 \thmref{thm:rankone_gen} (\thmref{thm:rankone} in the planar case), it
 is possible to construct a network with matrix response equal to
 $\bW_i$. By the superposition principle we obtain a general network
 with response $\bW$. If the desired network is planar, then every
 crossing between springs can be transformed to a planar network through
 a truss-like structure \cite[Examples 3.2, 3.15]{milton_seppecher2008}.
 As discussed in \secref{sec:transf}, such transformations can be chosen
 to avoid a finite number of points in $\real^d$.
 \end{proof}

 \begin{remark}
  \label{rem:rescale}
  If $\bW$ is the static response matrix of a network and $\alpha$ is a
  positive constant then clearly $\alpha \bW$ is the response matrix of the
  same spring network, but where all the spring constants are multiplied by
  $\alpha$.
 \end{remark}

\subsection{Planar networks with rank one static response matrices}
The main result in this section is \thmref{thm:rankone} which is the
statement of \thmref{thm:charstat} for rank one response matrices, i.e.
it shows that for any rank one response matrix satisfying
\lemref{lem:static} it is possible to find a planar network that
realizes it. We first prove \thmref{thm:rankone} for three terminal
networks in \secref{sec:threeterm}, then for four and more terminals in
\secref{sec:fourterm}.
\begin{figure}
 \begin{center}
\begin{picture}(0,0)%
\includegraphics{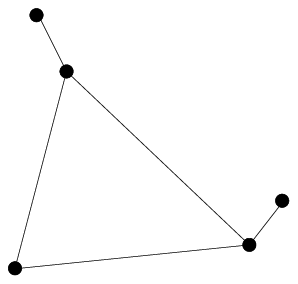}%
\end{picture}%
\setlength{\unitlength}{1184sp}%
\begingroup\makeatletter\ifx\SetFigFont\undefined%
\gdef\SetFigFont#1#2#3#4#5{%
  \reset@font\fontsize{#1}{#2pt}%
  \fontfamily{#3}\fontseries{#4}\fontshape{#5}%
  \selectfont}%
\fi\endgroup%
\begin{picture}(4905,4849)(1711,-5549)
\put(1726,-1111){\makebox(0,0)[lb]{\smash{{\SetFigFont{10}{12.0}{\rmdefault}{\mddefault}{\updefault}{\color[rgb]{0,0,0}$\bx_1$}%
}}}}
\put(2326,-1936){\makebox(0,0)[lb]{\smash{{\SetFigFont{10}{12.0}{\rmdefault}{\mddefault}{\updefault}{\color[rgb]{0,0,0}$\bx_1'$}%
}}}}
\put(2251,-5386){\makebox(0,0)[lb]{\smash{{\SetFigFont{10}{12.0}{\rmdefault}{\mddefault}{\updefault}{\color[rgb]{0,0,0}$\bx_0$}%
}}}}
\put(6001,-4936){\makebox(0,0)[lb]{\smash{{\SetFigFont{10}{12.0}{\rmdefault}{\mddefault}{\updefault}{\color[rgb]{0,0,0}$\bx_2'$}%
}}}}
\put(6601,-4111){\makebox(0,0)[lb]{\smash{{\SetFigFont{10}{12.0}{\rmdefault}{\mddefault}{\updefault}{\color[rgb]{0,0,0}$\bx_2$}%
}}}}
\end{picture}%
 \end{center}
 \caption{A planar network with rank one response. Any
 resulting force at terminals is proportional to $(\bbf_0^T,
 \bbf_1^T, \bbf_2^T)^T$.}
 \label{fig:rankone}
\end{figure}

\subsubsection{Three terminal rank one static planar networks}
\label{sec:threeterm} We show how to construct a three terminal planar
elastic network realizing any valid rank one response matrix. If
$\bbf_0,\bbf_1,\bbf_2 \in \real^2$ is the balanced system of forces at
the nodes $\bx_0, \bx_1, \bx_2$, the construction depends on $\rank
[\bbf_1, \bbf_2, \bx_1 - \bx_0, \bx_2 - \bx_0]$. More precisely
\lemref{lem:threeterm} corresponds to the case when this rank is two and
\lemref{lem:threeterm-1D} when this rank is one. Since we are in
$\real^2$ these are  the only non-trivial cases available, which shows
\thmref{thm:rankone} for planar three terminal networks.

\begin{remark} (Pierre Seppecher, private communication)
The easiest way to construct a three terminal rank one
network is to add a node at the intersection of the force lines
(three forces that are balanced meet at a single point in 2D, this
can be shown by writing the torque balance equation for the
intersection point). The only problem with this construction is
that the extra node can end up far away if the force lines are almost
parallel.
\end{remark}

We start with the following intermediate result. A similar result is shown in
three dimensions by Camar-Eddine and Seppecher \cite[Lemma
5]{camareddine_seppecher2003}.
\begin{lemma} 
\label{lem:notcolli}
Let $\bbf_0$, $\bbf_1$, $\bbf_2$  be a set of balanced forces at the
nodes $\bx_0$, $\bx_1$, $\bx_2$ in $\real^2$. Then if $\rank [\bbf_1,
\bbf_2, \bx_1 - \bx_0, \bx_2 - \bx_0] = 2$, there is an $\eps > 0$ such
that the points $\bx_0$, $\bx_1'=\bx_1+\eps \bbf_1$ and $\bx_2' = \bx_2
+ \eps \bbf_2$ are not collinear. Moreover $\eps$ can be chosen
arbitrarily small and so that $\bx_1'$ and $\bx_2'$ do not coincide with
a finite number of points. 
\end{lemma}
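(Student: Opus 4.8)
The plan is to express collinearity of the three points as the vanishing of a single scalar polynomial in $\eps$, and then to use the balance conditions together with the rank hypothesis to show this polynomial is not identically zero. Writing $\bv_1 = \bx_1 - \bx_0$ and $\bv_2 = \bx_2 - \bx_0$, the points $\bx_0$, $\bx_1'$, $\bx_2'$ are collinear if and only if
\[
 g(\eps) := (\bv_1 + \eps \bbf_1) \cross (\bv_2 + \eps \bbf_2) = 0,
\]
since in $\real^2$ the cross product $\bu \cross \bw = \det[\bu,\bw]$ vanishes exactly when $\bu$ and $\bw$ are parallel. Expanding, $g$ is a polynomial in $\eps$ of degree at most two,
\[
 g(\eps) = \bv_1 \cross \bv_2 + \eps(\bv_1 \cross \bbf_2 + \bbf_1 \cross \bv_2) + \eps^2 (\bbf_1 \cross \bbf_2).
\]
If $g \not\equiv 0$ it has at most two real roots, so $g(\eps) \neq 0$ for all but finitely many $\eps$, in particular for arbitrarily small $\eps > 0$.

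The heart of the argument, and the step I expect to be the main obstacle, is showing $g \not\equiv 0$. Here I would use torque balance: since the forces balance, computing torques about $\bx_0$ gives $\bv_1 \cross \bbf_1 + \bv_2 \cross \bbf_2 = \bzero$. Suppose for contradiction that $g \equiv 0$, so all three coefficients above vanish, in particular $\bv_1 \cross \bv_2 = 0$. If $\bv_1 = \bv_2 = \bzero$ then $g(\eps) = \eps^2\,(\bbf_1 \cross \bbf_2)$, so $g \equiv 0$ forces $\bbf_1 \cross \bbf_2 = 0$; but then all of $\bbf_1,\bbf_2,\bv_1,\bv_2$ lie on a line through the origin and $\rank[\bbf_1,\bbf_2,\bv_1,\bv_2] \le 1$, contradicting the hypothesis. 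Otherwise one of $\bv_1,\bv_2$ is nonzero, say $\bv_1 \neq \bzero$ (the other case being symmetric). Choosing coordinates so that $\bv_1 = (a,0)$ with $a \neq 0$, the condition $\bv_1 \cross \bv_2 = 0$ puts $\bv_2 = (b,0)$ on the same axis; writing $\bbf_1 = (p_1,q_1)$ and $\bbf_2 = (p_2,q_2)$, the vanishing linear coefficient and the torque balance become the two relations $-b\,q_1 + a\,q_2 = 0$ and $a\,q_1 + b\,q_2 = 0$, whose coefficient matrix has determinant $-(a^2+b^2) \neq 0$. Hence $q_1 = q_2 = 0$, so $\bbf_1,\bbf_2,\bv_1,\bv_2$ all lie on the first axis and again $\rank[\bbf_1,\bbf_2,\bv_1,\bv_2] \le 1$, a contradiction. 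Thus $g \not\equiv 0$. (Torque balance is essential: without it one checks that $\bv_1 = \bv_2 = (1,0)$ with $\bbf_1 = \bbf_2 = (0,1)$ give $g \equiv 0$ while $\rank[\bbf_1,\bbf_2,\bv_1,\bv_2] = 2$.)

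Finally, for the refinement that $\eps$ can avoid a prescribed finite set of points, I note that each $\bx_i'(\eps) = \bx_i + \eps \bbf_i$ is affine in $\eps$, so it meets any fixed point for at most one value of $\eps$ when $\bbf_i \neq \bzero$ (and if $\bbf_i = \bzero$ the node stays at $\bx_i$, which we take to be distinct from the forbidden points). Excluding these finitely many values of $\eps$ along with the at most two roots of $g$ still leaves arbitrarily small admissible $\eps > 0$, which completes the proof.
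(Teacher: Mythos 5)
Your proof is correct and follows essentially the same route as the paper's: both reduce collinearity to the vanishing of the degree-two polynomial $\det(\bx_1'-\bx_0,\,\bx_2'-\bx_0)$ in $\eps$, and both show it is not identically zero by combining the vanishing of its constant and linear coefficients with torque balance about $\bx_0$ to force $\rank[\bbf_1,\bbf_2,\bx_1-\bx_0,\bx_2-\bx_0]\leq 1$, contradicting the hypothesis. Your coordinate computation with determinant $-(a^2+b^2)\neq 0$ is just a concrete rendering of the paper's block-matrix step with determinant $(\alpha^2+1)^2\neq 0$, and your treatment of the finite-avoidance refinement is, if anything, more explicit than the paper's.
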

\begin{proof}
If it were true that for all $\eps  > 0$ the three points $\bx_0$,
$\bx_1'$ and $\bx_2'$ are collinear, then the second degree polynomial
in $\eps$, $p(\eps) = \det(\bx_1' - \bx_0, \bx_2' - \bx_0)$ is
identically zero. That the constant coefficient of $p(\eps)$ vanishes
means that $\det (\bx_1-\bx_0,\bx_2 - \bx_0)=0$, or that the points
$\bx_0$, $\bx_1$ and $\bx_2$ are collinear. Since the lemma for
$\bx_0=\bx_1=\bx_2$ is trivial to prove, we may assume without loss of
generality that there is some $\alpha \in \real$ such that
\begin{equation}
 \bx_2 - \bx_0 = \alpha (\bx_1 - \bx_0),
 \label{eq:notcolli1}
\end{equation}
swapping the indices $1$ and $2$ if necessary.
Since the coefficient in $\eps$ of $p(\eps)$ vanishes we get
$\det(\bbf_1, \bx_2 - \bx_0) + \det(\bx_1 - \bx_0, \bbf_2)
= 0$ or equivalently $\det(\bx_1 - \bx_0, - \alpha \bbf_1 +
\bbf_2) = 0$. Now the torque balance implies that
\begin{equation}
 (\bx_1 - \bx_0) \cross (\bbf_1 + \alpha \bbf_2) = 0
 \label{eq:notcolli2}
\end{equation}
Putting both \eqref{eq:notcolli1} and \eqref{eq:notcolli2} in
matrix form, there are some real $\beta$ and $\gamma$ such that
\[
 \begin{bmatrix} \bI & \alpha \bI \\-\alpha \bI & \bI
 \end{bmatrix} \begin{bmatrix} \bbf_1\\ \bbf_2 \end{bmatrix} =
 \begin{bmatrix} \beta (\bx_1 - \bx_0) \\ \gamma (\bx_1 -
 \bx_0) \end{bmatrix}.
\]
The determinant of the matrix above is $(\alpha^2 + 1)^2 \neq
0$, thus $\rank [\bbf_1, \bbf_2, \bx_1 - \bx_0, \bx_2 - \bx_0]
= 1$ which contradicts the hypothesis of the lemma. Finally since
$p(\eps)$ is not identically zero, one can choose an arbitrarily small
$\eps$ that avoids a finite number of points.
\end{proof}

\begin{lemma} \label{lem:threeterm} Let $\bbf_0$, $\bbf_1$, $\bbf_2$  be
a set of balanced forces at the nodes $\bx_0$, $\bx_1$, $\bx_2$ in
$\real^2$. If $\rank [\bbf_1, \bbf_2, \bx_1 - \bx_0, \bx_2 -
\bx_0] = 2$ then there exists a purely elastic planar network with
force response proportional to $(\bbf_0^T, \bbf_1^T, \bbf_2^T)^T$. The
internal nodes of such a network can be chosen to avoid a finite number
of points and within an $\eps-$neighborhood of the convex hull of the terminals.
\end{lemma}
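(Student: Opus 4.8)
The plan is to first normalize the geometry so the three force application points are in general position, then to realize the prescribed balanced triple by one rank-one elastic gadget whose springs lie along the lines of action of the forces, and finally to verify the rank and the force pattern directly.

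First I would use the repositioning indicated in \figref{fig:rankone}. Set $\bx_1' = \bx_1 + \eps\bbf_1$ and $\bx_2' = \bx_2 + \eps\bbf_2$, and attach a spring between $\bx_1$ and $\bx_1'$ and one between $\bx_2$ and $\bx_2'$. Since $\bx_i' - \bx_i$ is parallel to $\bbf_i$, the only spring meeting the terminal $\bx_i$ ($i=1,2$) acts along $\bbf_i$, so in any response column the $\bx_1$- and $\bx_2$-components are automatically parallel to $\bbf_1$ and $\bbf_2$. A short computation shows $\bbf_0,\bbf_1,\bbf_2$ remain balanced when re-supported at $\bx_0,\bx_1',\bx_2'$: balance of forces is unchanged, and the torque changes only by $\eps(\bbf_1\cross\bbf_1 + \bbf_2\cross\bbf_2) = 0$. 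By \lemref{lem:notcolli}, for arbitrarily small $\eps$ avoiding finitely many values the points $\bx_0,\bx_1',\bx_2'$ are not collinear, and they stay within an $\eps$-neighborhood of the convex hull of the terminals.

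Next I would couple $\bx_0,\bx_1',\bx_2'$ by a single rank-one gadget. Because the re-supported forces are balanced and the points are not collinear, their lines of action meet at one finite point $\mathbf{p}$; I place an interior node there and join it to $\bx_0,\bx_1',\bx_2'$ by three springs, each then lying along the corresponding force line. For the rank I use \eqref{eq:schur} together with \lemref{lem:floppy}: the full stiffness $\btK$ of this star is a sum of three rank-one spring matrices, so $\rank\btK\le 3$, while the interior block $\btK_{\mathbf{p}\mathbf{p}} = \sum_i k_i\,\bn_i\bn_i^T$ has rank exactly $2$ (non-collinearity forces the three spoke directions $\bn_0,\bn_1,\bn_2$ to span $\real^2$); hence the Schur complement has rank $1$. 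To identify the force pattern, note every response column $\bw$ is balanced (\lemref{lem:static}(d)) and, by the spring directions, has the form $\bw_1 = a\bbf_1$, $\bw_2 = b\bbf_2$, $\bw_0 = -a\bbf_1 - b\bbf_2$; torque balance about $\bx_0$ reads $a\,(\bx_1-\bx_0)\cross\bbf_1 + b\,(\bx_2-\bx_0)\cross\bbf_2 = 0$, while the original balance gives $(\bx_1-\bx_0)\cross\bbf_1 = -(\bx_2-\bx_0)\cross\bbf_2$, forcing $a=b$ whenever this common torque is nonzero, so $\bw \propto (\bbf_0^T,\bbf_1^T,\bbf_2^T)^T$. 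The spring constants only rescale the response, which by \remref{rem:rescale} is harmless.

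The main obstacle is the locality requirement on the interior nodes. The concurrency point $\mathbf{p}$ can drift arbitrarily far from the convex hull when the force lines are nearly parallel — precisely the difficulty flagged in the Remark attributed to Seppecher — so the bulk of the work is to show that the freedom in $\eps$ (and the perturbation freedom in \lemref{lem:notcolli}) can pull $\mathbf{p}$, together with $\bx_1'$ and $\bx_2'$, into the prescribed $\eps$-neighborhood while dodging the finitely many forbidden positions; if $\mathbf{p}$ cannot be kept local directly, one would instead replace the long spokes by the local truss substructures of \secref{sec:transf}, which leave the response unchanged. A secondary subtlety I would treat separately is the degenerate configuration in which the lines of action pass through a terminal (equivalently $(\bx_i-\bx_0)\cross\bbf_i = 0$), where the torque argument no longer pins down $a=b$ and the star degenerates; here one checks that the repositioned gadget still produces the correct proportions, or realizes this case by a dedicated small network. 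Making both of these quantitative, so that every interior node lies in the neighborhood and the response is exactly $(\bbf_0,\bbf_1,\bbf_2)$ up to a positive scalar, is where the real care is needed.
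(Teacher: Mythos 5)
Your construction is precisely the concurrency-point gadget from the Remark preceding this lemma in the paper, and the locality problem you flag at the end is not a deferrable detail: it is fatal to this route, which is exactly why the paper abandons that gadget. First, the ``freedom in $\eps$'' buys you nothing: since $\bx_i' = \bx_i + \eps\bbf_i$ slides each support point along its own line of action, the three lines of action --- and hence the intersection point $\mathbf{p}$ --- are completely independent of $\eps$, so no choice of $\eps$ can pull $\mathbf{p}$ toward the convex hull. Second, the trusses of \secref{sec:transf} replace a spring between two given nodes by a truss with the \emph{same endpoints}; they can dodge a line or finitely many points, but they cannot eliminate the distant endpoint $\mathbf{p}$ itself, so your fallback does not restore locality either. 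Third, the gadget need not even exist under the hypothesis: rank $2$ permits all three balanced forces to be parallel --- e.g. $\bbf_1 = \bbf_2 = (1,0)^T$, $\bbf_0 = (-2,0)^T$ at $\bx_0 = (0,0)^T$, $\bx_1 = (0,1)^T$, $\bx_2 = (0,-1)^T$ --- in which case the lines of action never meet and there is no finite $\mathbf{p}$. Your ``secondary subtlety'' is likewise a genuine case: taking $\bbf_i = \bx_i - \bx_0$ for $i=1,2$ (with $\bx_0,\bx_1,\bx_2$ not collinear) is balanced and has rank $2$, yet $\mathbf{p} = \bx_0$ and the star collapses; note the naive replacement by two springs $\bx_0$--$\bx_1$, $\bx_0$--$\bx_2$ has rank-\emph{two} response, so a ``dedicated small network'' is not obviously at hand. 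A smaller soft spot: with $\mathbf{p}$ on the line through $\bx_i$ in direction $\bbf_i$, your lever spring $\bx_i$--$\bx_i'$ is collinear with the spoke $\bx_i'$--$\mathbf{p}$, so the nodes $\bx_i'$ are redundant and merely add floppy modes, and your rank count, which treats the interior block as $\btK_{\mathbf{p}\mathbf{p}}$ alone, ignores them.

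The paper keeps your first step (reposition to $\bx_i' = \bx_i + \eps\bbf_i$ and invoke \lemref{lem:notcolli}) but then uses a different gadget: it joins $\bx_0$, $\bx_1'$, $\bx_2'$ by three springs into a rigid triangle and keeps the two lever springs $\bx_i$--$\bx_i'$ (\figref{fig:rankone}), so every interior node lies within distance $\eps\max_i\norm{\bbf_i}$ of a terminal and locality is automatic. Instead of chasing response columns, it computes the kernel of the terminal quadratic form: vanishing of all spring energies forces the triangle to move rigidly, $\bu_i' = \bu_0 + a\,\bR_\perp(\bx_i' - \bx_0)$ with a single rotation parameter $a$, and the lever conditions $(\bu_i' - \bu_i)\cdot\bbf_i = 0$ then yield $\ker q_\bM = \linspan\{(\bbf_0^T,\bbf_1^T,\bbf_2^T)^T\}^\perp$. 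The degenerate cases you deferred are handled up front by a normalization you are missing: rank $2$ guarantees a permutation $\sigma$ with $\bbf_{\sigma(1)}\cross(\bx_{\sigma(1)} - \bx_{\sigma(0)}) \neq 0$ (in the example above, re-basing at $\bx_1$ works), and this nonvanishing torque is exactly what lets one solve for $a$ in the reverse inclusion. So the statement stands, but your route needs the triangle-plus-levers idea (or an equivalent local gadget); the concurrency star cannot be repaired by the means you propose.
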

\begin{proof}
First observe the hypothesis that $\rank [\bbf_1, \bbf_2, \bx_1 -
\bx_0, \bx_2 - \bx_0] = 2$ implies the existence of at least one
permutation $\sigma$ of $\{0,1,2\}$ such that
\begin{equation}
\label{stea}\bbf_{\sigma(1)}\cross(\bx_{\sigma(1)}-\bx_{\sigma(0)})\neq
0
\end{equation}
Indeed, if \eqref{stea} is false for all permutations $\sigma$,
i.e.,
\[
\begin{aligned}
\label{stea'} 
\bbf_1\cross(\bx_1-\bx_0) & = 0 =\bbf_2\cross(\bx_2-\bx_0),\\
\bbf_2\cross(\bx_2-\bx_1) & = 0 =\bbf_0\cross(\bx_0-\bx_1),\\
\bbf_0\cross(\bx_0-\bx_2) & = 0 =\bbf_1\cross(\bx_1-\bx_2)
\end{aligned}
\] we have that
$\rank [\bbf_1, \bbf_2, \bx_1 - \bx_0, \bx_2 - \bx_0] = 1$ and
this contradicts our initial hypothesis. So, without loss of 
generality we can assume that,
\begin{equation}
\label{stea''} \bbf_1\cross(\bx_1-\bx_0)\neq 0.\end{equation}

 Next, let $\bx_i' = \bx_i + \eps \bbf_i$, for $i=1,2$. By
\lemref{lem:notcolli}, if $\rank [\bbf_1, \bbf_2, \bx_1 - \bx_0,
\bx_2 - \bx_0] = 2$, there is $\eps > 0$ such that the points
$\bx_0$, $\bx_1'$ and $\bx_2'$ are not collinear and do not coincide
with a finite number of points. Consider the
network in \figref{fig:rankone} (the spring constants are
irrelevant for this proof). Let $\bA$ be the response matrix for
the network including both terminal ($B=\{0,1,2\}$) and interior
($I=\{1',2'\}$) nodes. The associated quadratic form is
\begin{equation}
 q_{\bA}(\bu_B,\bu_I) = \sum_{i=1}^2 s_{(\bx_0,\bx_i')}(\bu_0,\bu_i') +
 \sum_{i=1}^2 s_{(\bx_i,\bx_i')}(\bu_i,\bu_i') +
 s_{(\bx_1',\bx_2')}(\bu_1',\bu_2').
 \label{eq:qa}
\end{equation}
It suffices to show that the quadratic form $q_{\bM}(\bu_B)$
at the terminal nodes has codimension one (since $\bM$ is
positive semidefinite, we do have $\ker \bM = \ker q_{\bM}$).
Actually $\bu_B = (\bu_0^T,\bu_1^T,\bu_2^T)^T \in \ker q_{\bM}$ if
and only if there exists $\bu_I = (\bu_1'^T,\bu_2'^T)^T$ such that
all terms in the sum \eqref{eq:qa} vanish or equivalently,
\begin{align}
  (\bu_i'-\bu_0)^T  (\bx_i'- \bx_0) &= 0,~\text{for
  $i=1,2$,}\label{eq:propa}\\
  (\bu_i'-\bu_i)^T  \bbf_i &=0,~\text{for
  $i=1,2$,}\label{eq:propb}\\
  (\bu_2'-\bu_1')^T  (\bx_2' - \bx_1') &=0.\label{eq:propm}
\end{align}
Property \eqref{eq:propa} is equivalent to
\[
 \begin{aligned}
  \bu_1' - \bu_0 = a \bR_\perp (\bx_1' - \bx_0)\\
  \bu_2' - \bu_0 = b \bR_\perp (\bx_2' - \bx_0)\\
 \end{aligned}
\]
for some $a,b$ reals and where
\[
 \bR_\perp = \begin{bmatrix} 0 & 1\\ -1 & 0 \end{bmatrix}.
\]
Property \eqref{eq:propm} implies that $b=a$, i.e. the only
infinitesimal deformation which does not change the side lengths of the
triangle in \figref{fig:rankone} is an infinitesimal rigid motion (translation
plus rotation).  Since $\bbf_i \cross \bx_i' = \bbf_i \cross \bx_i$ and
the forces and torques are balanced we conclude from \eqref{eq:propb} that
\[
\sum_{i=1}^2 \bbf_i \cdot (\bu_i - \bu_0) =  a \sum_{i=1}^2 \bbf_i
 \cross (\bx_i' - \bx_0) = a \sum_{i=0}^2 \bbf_i \cross
 \bx_i = 0.
\]
Thus $\ker q_\bM \subset
\linspan\{(\bbf_0^T,\bbf_1^T,\bbf_2^T)^T\}^\perp$.\\
To prove the other side of the inclusion we start with $\bu_i$,
$i=0,1,2$ satisfying
\begin{equation}
  \sum_{i=1}^2 \bbf_i \cdot (\bu_i - \bu_0) = 0
 \label{eq:orthoforce}
\end{equation}
and we seek $\bu_i'$, $i=1,2$ such that \eqref{eq:propa},
\eqref{eq:propb} and \eqref{eq:propm} hold. For any real $a$, the choice
\[
 \bu_1' = \bu_0 + a \bR_\perp (\bx_1'-\bx_0)
 ~~\text{and}~~
 \bu_2' = \bu_0 + a \bR_\perp (\bx_2'-\bx_0),
\]
satisfies \eqref{eq:propa} and \eqref{eq:propm}.
Next, note
that from the definition of the points $\bx_1',\bx_2'$ we have,
$$(\bx_i'-\bx_0) \cross \bbf_i = (\bx_i-\bx_0) \cross \bbf_i,\mbox{ for }i=1,2.$$
Using the latter, the balance of forces and \eqref{stea''}, property
\eqref{eq:propb} follows by taking
\[
a = \frac{\bbf_1\cdot (\bu_1 - \bu_0)}{\bbf_1 \cross (\bx_1 -
\bx_0) }
=   \frac{\bbf_2 \cdot (\bu_2 - \bu_0)}{\bbf_2 \cross (\bx_2 -
\bx_0)}.
\]
This proves that
$\ker q_\bM = \linspan\{(\bbf_0^T,\bbf_1^T,\bbf_2^T)^T\}^\perp$, so
$q_\bM$ can be written for some $c>0$ as,
\[
 q_\bM(\bu_B) = c \M{\sum_{i=0}^2 \bbf_i \cdot \bu_i}^2.
\]
\end{proof}

\begin{lemma}
\label{lem:threeterm-1D}Let $\bbf_0$, $\bbf_1$, $\bbf_2$  be a set
of balanced forces at the nodes $\bx_0$, $\bx_1$, $\bx_2$ in
$\real^2$. If $\rank [\bbf_1, \bbf_2, \bx_1 - \bx_0, \bx_2 -
\bx_0] = 1$ then there exists a purely elastic planar network with
force response proportional to $(\bbf_0^T, \bbf_1^T, \bbf_2^T)^T$. The
internal nodes of such a network can be chosen to avoid a finite number
of points and within an $\eps-$neighborhood of the convex hull of the terminals.
\end{lemma}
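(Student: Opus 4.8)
The plan is first to unpack the rank-one hypothesis. Since $\rank[\bbf_1,\bbf_2,\bx_1-\bx_0,\bx_2-\bx_0]=1$, all four vectors are parallel to a single direction $\bv$, so the three nodes are collinear on a line $\ell$ parallel to $\bv$ and the three forces are parallel to $\ell$. Writing $\bbf_i=\beta_i\bv$ and $\bx_i=\bx_0+p_i\bv$, the torque balance becomes automatic (each $\bx_i\cross\bbf_i=0$) and the only surviving constraint is $\beta_0+\beta_1+\beta_2=0$. I would then split into two cases. If some $\beta_j=0$, the remaining two forces are equal and opposite, and a single spring laid along $\ell$ between those two nodes already has response $\propto(\bbf_0^T,\bbf_1^T,\bbf_2^T)^T$; being axial it is automatically floppy in the transverse direction, so its response is exactly rank one. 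The substantial case is $\beta_0,\beta_1,\beta_2$ all nonzero.

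In that case the construction of \lemref{lem:threeterm} breaks down: there the auxiliary points $\bx_i'=\bx_i+\eps\bbf_i$ form a nondegenerate triangle, whereas here $\bbf_i\parallel\ell$ forces the $\bx_i'$ back onto $\ell$, and the three ``force lines'' all coincide with $\ell$ (so Seppecher's apex runs off to infinity). Equivalently, a purely collinear network with axial springs can never be rank one with all three forces nonzero: a connected such network has rank two, since only the global translation is floppy, and what is missing is a second floppy mode living off the line. My plan is therefore to build a network that is symmetric about $\ell$ and decouples the axial and transverse responses. I attach to each terminal a single spring lying along $\ell$ to an auxiliary on-axis node $R_i$ (a ``relay'' at $\bx_0+q_i\bv$), and I connect every relay to a mirror-symmetric pair of off-axis nodes $P_{\pm}$ sitting at height $\pm h$ off $\ell$ above the abscissa $c$. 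Reflection symmetry forces the net terminal forces to be axial, and since each terminal touches only one axial spring, every transverse terminal displacement costs no energy.

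To finish I would compute $\ker q_\bM$ exactly as in \lemref{lem:threeterm}. Reflection symmetry splits the quadratic form into a symmetric sector (the terminal axial displacements $u_i$, the relay axial displacements $w_i$, and the symmetric pair mode $(a,b)$) and an antisymmetric sector (the transverse motions). In the antisymmetric sector the terminal displacements enter no spring term, so all transverse terminal motions are floppy and the transverse response vanishes identically. In the symmetric sector a displacement is floppy iff $u_i=w_i$ and $\tilde d_i(a-w_i)+hb=0$ for all $i$, with $\tilde d_i=c-q_i$, which forces $u_i=a+hb/\tilde d_i$; as $(a,b)$ vary this sweeps out exactly $\linspan\{(1,1,1),(1/\tilde d_0,1/\tilde d_1,1/\tilde d_2)\}$, a two-dimensional space, so the axial response has rank one. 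It then remains to place $P_{\pm}$ so that this kernel equals $\{\bu_B:\sum_i\bbf_i\cdot\bu_i=0\}$, which amounts to the single scalar condition $\sum_i\beta_i/(c-q_i)=0$. This has a root strictly between the extreme relay abscissae by the intermediate value theorem, since the summand changes sign across each pole and the $\beta_i$ are not all of one sign. With this choice $q_\bM(\bu_B)=c'\bigl(\sum_i\bbf_i\cdot\bu_i\bigr)^2$ for some $c'>0$, the desired rank-one response up to the rescaling allowed by \remref{rem:rescale}.

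The main obstacle is precisely the rank-one requirement: because the target matrix annihilates every transverse motion, I must simultaneously make all transverse terminal motions floppy and leave exactly one axial mode stiff, and these demands pull against each other — an off-axis node is needed to couple the three axial channels into a single stiff mode, yet any off-axis spring attached directly to a terminal would create transverse stiffness. The relay-plus-symmetric-pair gadget is what reconciles the two, and verifying that it produces rank \emph{exactly} one, neither zero nor two, is the crux. The remaining bookkeeping is routine given the earlier machinery: choosing relays with $q_i$ near $p_i$ and $h$ small keeps every node within an $\eps$-neighborhood of the convex hull, the abscissa $c$ is fixed as a root of $\sum_i\beta_i/(c-q_i)=0$, and the truss and virtual-crossing transformations of \secref{sec:transf} remove any spring crossings and avoid a prescribed finite set of points.
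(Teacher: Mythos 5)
Your overall architecture is sound: the reduction to collinear nodes with axial forces $\bbf_i=\beta_i\bv$, $\sum_i\beta_i=0$, the relay-plus-mirror-pair gadget, the even/odd decomposition under reflection about $\ell$, and the kernel computation yielding $\linspan\{(1,1,1),(1/\tilde d_0,1/\tilde d_1,1/\tilde d_2)\}$ are all correct. The genuine gap is the final existence claim for the abscissa $c$: it is \emph{not} true that $g(c)=\sum_i\beta_i/(c-q_i)$ has a root strictly between the extreme relay abscissae merely because the $\beta_i$ are not all of one sign. Clearing denominators, $N(c)=\sum_i\beta_i\prod_{j\ne i}(c-q_j)$ has leading coefficient $\sum_i\beta_i=0$, so $N$ is \emph{affine} in $c$ and $g$ has at most one zero; moreover, ordering the relays $q_{(1)}<q_{(2)}<q_{(3)}$, the values $N(q_{(1)})$ and $N(q_{(3)})$ have the signs of $\beta_{(1)}$ and $\beta_{(3)}$. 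Hence whenever the two \emph{outer} relays carry same-signed forces (sign pattern $(+,-,+)$), $N$ has the same sign at both extremes, and the zero of $g$, if it exists at all, lies outside $[q_{(1)},q_{(3)}]$. Concretely, take $\bx_0=(-1,0)$, $\bx_1=(0,0)$, $\bx_2=(1,0)$ with $\bbf_0=\bbf_2=(1,0)^T$ and $\bbf_1=(-2,0)^T$: this is a balanced system satisfying the rank-one hypothesis, yet with relays at $q\approx(-1,0,1)$ one gets $g(c)=\tfrac{1}{c+1}-\tfrac{2}{c}+\tfrac{1}{c-1}=\tfrac{2}{c(c^2-1)}$, which has \emph{no} zero anywhere. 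Perturbing the $q_i$ restores a zero, but at distance of order the reciprocal of the perturbation, so it escapes every $\eps$-neighborhood of the convex hull; thus as written your proof fails on this data, for both the existence and the placement claims.

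The gap is repairable within your scheme, but it requires an idea you did not state: since the nonzero $\beta_i$ sum to zero, exactly one sign occurs once, and you may \emph{permute} the relay ordering so that this lone-signed relay sits at an extreme (relays need not respect the terminal order -- overlapping axial springs are zero-angle crossings, removable by the trusses of \secref{sec:transf}). Then $N(q_{(1)})N(q_{(3)})<0$, the intermediate value theorem gives $c$ strictly inside $(q_{(1)},q_{(3)})$, and a generic perturbation keeps $c\ne q_{(2)}$ and off any prescribed finite set. By contrast, the paper's proof avoids all root-finding: it picks a point $\by$ off the common line, splits the data into two balanced three-force families sharing opposite forces $\pm\bbf$ at $\by$, namely $((\bbf,\by),(\bbf_0+\bbf_2-\bbf,\bx_0),(\bbf_1,\bx_1))$ and $((-\bbf,\by),(\bbf-\bbf_2,\bx_0),(\bbf_2,\bx_2))$, each of which satisfies the rank-two hypothesis of \lemref{lem:threeterm} precisely because $\by$ is off the line; superposing the two rank-one networks and taking the infimum over the displacement at the shared node $\by$ yields $\tq=\tfrac{1}{2}\left(\sum_{i=0}^2\bbf_i\cdot\bu_i\right)^2$ directly, with $\by$ freely placeable in the $\eps$-neighborhood. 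If you repair the ordering step as above, your gadget gives a genuinely different, more explicit construction; as submitted, however, the root-existence step is false.
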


\begin{proof} We build up on the idea presented in \cite[Theorem
5]{camareddine_seppecher2003}. It can be easily observed (by choosing a
coordinate system with $\bx_0$ as origin) that for any point $\by$ in
the plane, not collinear with $\bx_0, \bx_1, \bx_2$, there is a force
$\bbf$, such that the following families,
\begin{equation}
\label{2stea'}
 ((\bbf,\by), (\bbf_0+\bbf_2-\bbf,\bx_0), (\bbf_1,\bx_1))
 ~\text{and}~
((-\bbf,\by), (\bbf-\bbf_2,\bx_0), (\bbf_2,\bx_2)),
\end{equation} 
form balanced systems of
forces. Then, by using \lemref{lem:threeterm} there exists
purely elastic networks with response matrices proportional with
$(\bbf^T, \bbf_0^T+\bbf_2^T - \bbf^T, \bbf_1^T)^T$, and $(-\bbf^T,
\bbf^T-\bbf_2^T, \bbf_2^T)^T$ respectively. It is possible to choose the
spring constants so that the associated quadratic forms are (see
\remref{rem:rescale}), 
\begin{equation} 
 \label{2stea}
 \begin{aligned}
  q'(\bv,\bu_0,\bu_1)&= (\bbf\cdot \bv + (\bbf_2-\bbf)\cdot \bu_0+\bbf_0\cdot
  \bu_0 + \bbf_1\cdot\bu_1)^2 ~\text{and} \\
  q''(\bv,\bu_0,\bu_2)&= (-\bbf \cdot \bv +(\bbf -\bbf_2) \cdot \bu_0 +
  \bbf_2\cdot\bu_2)^2.
\end{aligned}
\end{equation} 
Now, let us consider the infimum,
\begin{equation}
 \label{2stea''}
 \tq(\bu_0, \bu_1, \bu_2)=\inf_{\bv\in{\real^2}} \{ q'(\bv,\bu_0,\bu_1)
 + q''(\bv,\bu_0,\bu_2) \}.
\end{equation}
The necessary condition of the infimum,
\begin{equation}
\label{2stea'''} 
 \bbf\cdot\bv=- \frac{1}{2} \M{(\bbf_2-\bbf)\cdot\bu_0+\bbf_0\cdot\bu_0+\bbf_1\cdot\bu_1}- \frac{1}{2}\M{(\bbf_2 - \bbf)\cdot\bu_0-\bbf_2\cdot\bu_2},
\end{equation}
implies the statement of the Lemma, i.e.,
\begin{equation}
\label{2stea''''} \tq(\bu_0, \bu_1,
\bu_2)=\frac{1}{2} \left(\sum_{i=0}^{2}\bbf_i\cdot\bu_i\right)^2.\end{equation} The point $\by$ and
any additional points in \lemref{lem:threeterm} can be chosen to avoid a
finite number of points in the plane, and within an $\eps-$neighborhood
of the convex hull of the terminals.
\end{proof}

\subsubsection{General rank one planar networks}
\label{sec:fourterm}
We show in \lemref{lem:fourterm} the construction of a network realizing
a valid four terminal rank one response and then generalize the result to
any number of terminals in \thmref{thm:rankone}.  We start our argument
with \lemref{lem:fourterm'} which is a technical result needed later in
this section.
\begin{lemma}
\label{lem:fourterm'} Let $\bbf_0, \bbf_1, \bbf_2, \bbf_3$ and
$\{\bx_0,\bx_1,\bx_2,\bx_3\}$ be a balanced system of forces in
$\real^2$. Then there exists a point $\by_*$ in an
$\epsilon$-neighborhood of the convex hull of the set $\{\bx_0, \bx_1,
\bx_2, \bx_3\}$ such that $\by_*\notin \{\bx_0,\bx_1,\bx_2,\bx_2\}$ and
\[
 \bbf_i\cross \bx_i+\bbf_j\cross
\bx_j-(\bbf_i+\bbf_j)\cross \by_*=0, \mbox{ for some }
\{i,j\}\in\{0,1,2,3\},~i\neq j,
\]
or in other words, the three forces $\bbf_i$, $\bbf_j$ and $-(\bbf_i+\bbf_j)$
supported at the nodes $\bx_i$, $\bx_j$ and $\by_*$ are balanced. The point
$\by_*$ can be chosen to avoid a finite number of positions.
\end{lemma}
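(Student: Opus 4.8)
The plan is to recast the required identity as a statement about the torque of a \emph{pair} of the forces, and then to prove by a sign‑counting argument at the four terminals that at least one of the three ways of grouping the forces into two pairs yields a valid $\by_*$ in the convex hull. For a point $\bp\in\real^2$ I would write $\mu_i(\bp)=(\bx_i-\bp)\cross\bbf_i$ for the torque of $\bbf_i$ about $\bp$; each $\mu_i$ is an affine scalar function of $\bp$, and since the system is balanced one has $\sum_{i=0}^3\mu_i\equiv 0$. For a pair $\{i,j\}$ set $M_{ij}(\bp)=\mu_i(\bp)+\mu_j(\bp)$. Using $\bbf\cross\bx=-\bx\cross\bbf$ one checks that the condition in the statement for the pair $\{i,j\}$ is exactly $M_{ij}(\by_*)=0$. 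Balance gives $M_{23}=-M_{01}$, $M_{13}=-M_{02}$, $M_{12}=-M_{03}$, so there are only three distinct grouping functions; write $M_A=M_{01}$, $M_B=M_{02}$, $M_C=M_{03}$. It then suffices to produce a point near the hull, away from the forbidden set, at which one of $M_A,M_B,M_C$ vanishes.

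The key step is to evaluate these affine functions at the terminals. Using $\mu_k(\bx_k)=0$ together with $\sum_i\mu_i\equiv 0$, a short computation yields the four signed identities
\begin{align*}
 M_A(\bx_0)+M_B(\bx_0)+M_C(\bx_0) &= 0, & M_A(\bx_1)-M_B(\bx_1)-M_C(\bx_1) &= 0,\\
 -M_A(\bx_2)+M_B(\bx_2)-M_C(\bx_2) &= 0, & -M_A(\bx_3)-M_B(\bx_3)+M_C(\bx_3) &= 0.
\end{align*}
Suppose, for contradiction, that none of $M_A,M_B,M_C$ vanishes anywhere on the convex hull $C$. Being continuous on the connected set $C$, each then has a constant nonzero sign there, and in particular at every $\bx_k$; let $(a,b,c)$ denote these signs. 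The identity at $\bx_0$ forces $a,b,c$ not all equal. The identity at $\bx_1$, namely $M_A=M_B+M_C$, is impossible when $b=c\neq a$; the one at $\bx_2$, $M_B=M_A+M_C$, is impossible when $a=c\neq b$; and the one at $\bx_3$, $M_C=M_A+M_B$, is impossible when $a=b\neq c$. Since three signs in $\{+,-\}$ must contain a repeated value, every case is excluded, a contradiction. Hence some $M_X$ vanishes at a point of $C$, which together with the corresponding pair $\{i,j\}$ satisfies the balance identity.

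To finish I would note that the zero set $\{M_X=0\}$ is a whole line $\ell_X$ (parallel to $\bbf_i+\bbf_j$), or all of $\real^2$ in the degenerate event that $\bbf_i+\bbf_j=\bzero$ with vanishing pair‑torque, and that this set meets $C$. In the nondegenerate case $C\subset C_\epsilon$ has nonempty interior, so $\ell_X\cap C_\epsilon$ contains a segment of positive length; in degenerate configurations (collinear or coincident terminals, or a tangency of $\ell_X$ to $C$) the same holds after passing to the $\epsilon$‑fattening, which always supplies a positive‑length segment of valid points. I may therefore pick $\by_*$ on $\ell_X\cap C_\epsilon$ distinct from $\bx_0,\dots,\bx_3$ and avoiding any prescribed finite set, and relabel the chosen pair as the $\{i,j\}$ of the statement.

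The main obstacle I anticipate is not the existence argument itself but the bookkeeping: deriving the four signed vertex identities with the correct signs and verifying that each excludes exactly one of the ``two equal, one different'' sign patterns, and then handling the degenerate geometric cases (coincident or collinear nodes, a vanishing pair‑resultant, or $\ell_X$ merely touching the hull at a vertex) where one must genuinely use the $\epsilon$‑neighborhood to keep $\by_*$ off the terminals and off the finitely many forbidden positions.
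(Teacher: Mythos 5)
Your proposal is correct and follows essentially the same route as the paper: your $M_{ij}$ are (up to sign) the paper's $\bbf_{ij}$, your contradiction via constant signs on the hull is the paper's, and your four vertex identities are the paper's key relation \eqref{4.5} in disguise, since each signed combination $M_A\pm M_B\pm M_C$ equals twice the single-node torque $\mu_k(\bp)=(\bx_k-\bp)\cross\bbf_k$ and hence vanishes at the node $\bx_k$. Your direct sign-pattern exclusion and your choice of $\by_*$ on the zero line $\ell_X$ inside $\bC_\epsilon$ are tidier but equivalent repackagings of the paper's complement/triplet counting and of its shift $\by_\delta=\by_*+\delta(\bbf_m+\bbf_n)$ along that same line.
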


\begin{proof}
For any pair of indices $\{i,j\}$, with $i\neq j$ and $i,j \in
\{0,1,2,3\}$, let $\bbf_{ij}:\real^2\rightarrow \real$ be defined by
\begin{equation}
 \label{4.1}
\bbf_{ij}(\by)=\bbf_i\cross \bx_i+\bbf_j\cross \bx_j-(\bbf_i+\bbf_j)\cross
\by.
\end{equation}
Using the balance of forces relations it can be easily observed
that
\begin{equation}
\label{4.2} \bbf_{ij}=\bbf_{ji}=-\bbf_{kt}=-\bbf_{tk},\mbox { for
any } \{i,j,k,t\}=\{0,1,2,3\}.
\end{equation}
Let $\bC$ be the convex hull of $\{\bx_0,\bx_1,\bx_2,\bx_3\}$, that is
 \[
  \bC=\Mcb{\bz \in \real^2 ~ \big| ~ \bz=\sum_{i=0}^3c_i\bx_i, \mbox{ for } c_i \geq 0
  \mbox { and } \sum_{i=0}^3 c_i=1},
 \]
 and $\bC_\epsilon$ be the $\epsilon$-neighborhood of the set $\bC$ as
 defined in \eqref{4.0}.

 Next we show that there exists a pair of indices $\{i,j\}$ with $i,j\in\{0,1,2,3\}$,
 with the property that there exists a point $\by_*\in \bC_\epsilon$, such that
 $\bbf_{ij}(\by_*)=0$.
 We reason by contradiction. Assume that the above is not true,
 i.e.,

 For all pairs $\{i,j\}$, with $i\neq j$ and $i,j\in\{0,1,2,3\}$
 we have
\begin{equation}
 \label{4.3}
 \bbf_{ij}(\by)\neq 0, \mbox{ for any }
 \by\in \bC_\epsilon.
 \end{equation}

 Using the continuity of the functions $\bbf_{ij}$ and the convexity of
 the set $\bC_\epsilon$, from \eqref{4.3} we obtain that all the
 functions $\bbf_{ij}$ have constant strictly positive or strictly
 negative sign over $\bC_\epsilon$. Using this observation, together
 with the relations \eqref{4.2}, for any partition $
 \{i,j\}\cup\{m,n\}=\{0,1,2,3\}$ we have 
 \begin{equation}
 \label{4.4}
 \bbf_{ij}(\by) \bbf_{mn}(\by) <0 \mbox{ for } \by \in \bC_\epsilon.
\end{equation}
For simplicity we shall call from now the {\em complement} of $\bbf_{ij}$ the function
$\bbf_{kt}$ with $\{k,t\}= \{0,1,2,3\}\setminus \{i,j\}$.

From \eqref{4.2}, \eqref{4.3} and \eqref{4.4} we conclude there
are six different functions $\bbf_{ij}$, with
$\{i,j\}\in\{0,1,2,3\}$, out of which three functions have
strictly positive sign on the set $\bC_\epsilon$ while their
complements have strictly negative sign on $\bC_\epsilon$. Using
this observation, it can be easily checked that for
$\{i,j,k,t\}=\{0,1,2,3\}$, at least one of the following triplets
of functions $(\bbf_{ij},\bbf_{ik},\bbf_{it})$,
 $(\bbf_{ij},\bbf_{ik},\bbf_{jk})$,
 $(\bbf_{ij},\bbf_{it},\bbf_{jt})$,
 $(\bbf_{ij},\bbf_{jt},\bbf_{jk})$,
 $(\bbf_{kt},\bbf_{ik},\bbf_{it})$,
has a constant strict sign over the set $\bC_\epsilon$. Indeed, if $\bbf_{ij}$, $\bbf_{ik}$ and $\bbf_{it}$ do not have the same sign, either two are positive and one is negative or two are negative and one is positive. By replacing one by its complement we get three functions of the same sign.
From the balance of forces relations one can immediately observe
that the sum of the functions in any of these triplets is of the form
\begin{equation}
\label{4.5}
\pm 2\bbf_{i_0}\cross(\by-\bx_{i_0})\mbox{ for some } i_0\in\{0,1,2,3\}.
\end{equation}
Finally relation \eqref{4.5} leads to a contradiction.  Indeed for a
triplet with the property that all the functions in the triplet have the
same strict sign, the sum of its functions must have the same sign on
$\bC_\epsilon$. However from \eqref{4.5} the sum cannot have constant
sign over the set $\bC_\epsilon$ because it equals zero for
$\by=\bx_{i_0}$.  Thus the hypothesis \eqref{4.3} is false and we have
that there exists a pair of distinct indices $\{i,j\} \subset
\{0,1,2,3\}$ so that $\bbf_{ij}(\by_*)=0$ for some $\by_*\in
\bC_\epsilon$.
 
From \eqref{4.5} it is possible to choose $\by_*\notin
\{\bx_0,\bx_1,\bx_2,\bx_2\}$. Indeed from \lemref{lem:fourterm} there
are two indices $\{m,n\}$ such that
 \begin{equation}
 \label{4.7}
 \bbf_{mn}(\by_*)=0.
 \end{equation}
Consider the point $\by_\delta=\by_*+\delta(\bbf_m+\bbf_n)$ with
$\delta>0$. For $\delta$ small enough it is clear that
$\by_\delta\in\bC_\epsilon\setminus\{\bx_0,\bx_1,\bx_2,\bx_3\}$ and
from \eqref{4.1} we obtain
\[
 \begin{aligned}
\bbf_{mn}(\by_\delta)&=\bbf_m\cross
\bx_m+\bbf_n\cross \bx_n-(\bbf_m+\bbf_n)\cross \by_\delta\\
&= \bbf_{mn}(\by_*) -\delta(\bbf_m+\bbf_n)\cross(\bbf_m+\bbf_n)\\
&=  0,
 \end{aligned}
\]
 which implies the statement of the Lemma.
 \end{proof}

The next \lemref{lem:fourterm} shows that for any balanced system of
four forces in $\real^2$ there exists a purely elastic four terminal planar
network with proportional (rank one) force response.  
\begin{lemma}
\label{lem:fourterm} Let $\bbf_0, \bbf_1, \bbf_2, \bbf_3$ and
$\bx_0, \bx_1, \bx_2, \bx_3$ be a balanced system of forces in $\real^2$.
Then, there is a purely elastic four terminal planar network with
force response proportional to $(\bbf_0^T, \bbf_1^T, \bbf_2^T,
\bbf_3^T)^T$. The internal nodes of such a network can be chosen away from a finite number of points, and  within an $\eps-$neighborhood of the convex hull of the terminals. \end{lemma}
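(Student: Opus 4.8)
The plan is to reduce the four-terminal problem to two three-terminal problems that share a single auxiliary node. By \lemref{lem:fourterm'} there is a point $\by_*$ in the $\eps$-neighborhood of the convex hull of $\{\bx_0,\bx_1,\bx_2,\bx_3\}$, avoiding any prescribed finite set and distinct from the terminals, together with a pair of distinct indices $\{i,j\}\subset\{0,1,2,3\}$ such that the three forces $\bbf_i,\bbf_j,-(\bbf_i+\bbf_j)$ supported at $\bx_i,\bx_j,\by_*$ are balanced. Write $\{k,t\}=\{0,1,2,3\}\setminus\{i,j\}$ and set $\mathbf{g}=-(\bbf_i+\bbf_j)=\bbf_k+\bbf_t$, the last equality by balance of forces. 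I would then split the data into subsystem $A$, consisting of the forces $\bbf_i,\bbf_j,\mathbf{g}$ at $\bx_i,\bx_j,\by_*$, and subsystem $B$, consisting of $\bbf_k,\bbf_t,-\mathbf{g}$ at $\bx_k,\bx_t,\by_*$.

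First I would check that both triples are balanced. Force balance is immediate from the definition of $\mathbf{g}$. For the torques, subsystem $A$ is balanced precisely because $\bbf_{ij}(\by_*)=0$, which is the conclusion of \lemref{lem:fourterm'}; for subsystem $B$, relation \eqref{4.2} gives $\bbf_{kt}=-\bbf_{ij}$, so $\bbf_{kt}(\by_*)=0$ as well. Each of $A$ and $B$ is therefore a balanced three-force system, so by \lemref{lem:threeterm} (when the relevant rank is two) or \lemref{lem:threeterm-1D} (when it is one) there is a planar elastic network realizing it with rank one response. Using \remref{rem:rescale} to fix the multiplicative constants, I would arrange the reduced quadratic forms at the terminals of each subnetwork to be $q_A(\bu_i,\bu_j,\bw)=(\bbf_i\cdot\bu_i+\bbf_j\cdot\bu_j+\mathbf{g}\cdot\bw)^2$ and $q_B(\bu_k,\bu_t,\bw)=(\bbf_k\cdot\bu_k+\bbf_t\cdot\bu_t-\mathbf{g}\cdot\bw)^2$, where $\bw$ denotes the displacement of the shared node $\by_*$. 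The internal nodes of the two subnetworks can be taken disjoint and, using the transformations of \secref{sec:transf}, any crossings can be removed to keep the whole construction planar; the location constraints follow from those in \lemref{lem:fourterm'}, \lemref{lem:threeterm} and \lemref{lem:threeterm-1D}.

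Next I would glue the two subnetworks along $\by_*$, which becomes the unique interior node of the combined four-terminal network, and eliminate it through the minimization \eqref{eq:quad}. Writing $\alpha=\bbf_i\cdot\bu_i+\bbf_j\cdot\bu_j$, $\beta=\bbf_k\cdot\bu_k+\bbf_t\cdot\bu_t$ and $s=\mathbf{g}\cdot\bw$, the terminal quadratic form is
\[
 q_{\bW}(\bu)=\inf_{\bw\in\real^2}\big[(\alpha+s)^2+(\beta-s)^2\big].
\]
Provided $\mathbf{g}\neq\bzero$, the scalar $s$ ranges over all of $\real$, and minimizing in $s$ (optimal $s=(\beta-\alpha)/2$) gives
\[
 q_{\bW}(\bu)=\frac{1}{2}(\alpha+\beta)^2=\frac{1}{2}\Big(\sum_{l=0}^3\bbf_l\cdot\bu_l\Big)^2,
\]
which is exactly the rank one response proportional to $(\bbf_0^T,\bbf_1^T,\bbf_2^T,\bbf_3^T)^T$, finishing the construction.

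The main obstacle is the degenerate case $\mathbf{g}=\bbf_i+\bbf_j=\bzero$, for which the coupling through $\by_*$ disappears: then $s\equiv 0$, the infimum collapses to $\alpha^2+\beta^2$, and the response is no longer rank one. Hence the pairing produced by \lemref{lem:fourterm'} must be chosen (or replaced) so that the shared force $\mathbf{g}$ is nonzero. Since $\bbf_0+\bbf_1+\bbf_2+\bbf_3=\bzero$, only the trivial all-zero system can have every pairwise sum $\bbf_i+\bbf_j$ vanish, so a pairing with $\mathbf{g}\neq\bzero$ always exists for a nontrivial system; the delicate point I would need to verify is that \lemref{lem:fourterm'}, or a direct re-application of its argument, can be made to deliver a balancing node $\by_*$ for such a pairing. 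Once that is secured, the remainder is the routine minimization above.
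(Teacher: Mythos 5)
Your construction is essentially the paper's own proof: the same appeal to \lemref{lem:fourterm'} for the pair $\{i,j\}$ and the balancing point $\by_*$, the same two three-terminal rank-one subnetworks from \lemref{lem:threeterm} and \lemref{lem:threeterm-1D} rescaled via \remref{rem:rescale}, and the same elimination of the shared node $\by_*$ by minimization, yielding $\tfrac{1}{2}\bigl(\sum_{l=0}^{3}\bbf_l\cdot\bu_l\bigr)^2$ (your sign convention $\mathbf{g}=-(\bbf_i+\bbf_j)$ is in fact more careful than the paper's, which writes $\bbf=\bbf_i+\bbf_j$ for a triple that balances only with the opposite sign). The degenerate case you flag, $\bbf_i+\bbf_j=\bzero$, is a genuine subtlety, but it is one the paper's proof shares rather than resolves: its optimality condition for $\bbf\cdot\bv$ likewise presupposes $\bbf\neq\bzero$, and \lemref{lem:fourterm'} does not by itself guarantee a nonzero-sum pairing --- when $\bbf_i+\bbf_j=\bzero$ the function $\bbf_{ij}$ is constant, and its vanishing forces both complementary pairs to be collinear, spring-like systems, for which the glued infimum degenerates to $\alpha^2+\beta^2$ exactly as you observe. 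So your proposal is as complete as the published argument; closing the loophole would require showing, as you suggest, that a cross pairing with nonzero force sum admits a balancing point inside $\bC_\epsilon$, or treating the two-opposite-collinear-pairs configuration by a separate direct construction.
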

\begin{proof}

 From \lemref{lem:fourterm'} we have that there exists a pair of
 indices $\{i,j\}$, a point $\by_*\in\bC_\epsilon\setminus\{\bx_0, \bx_1, \bx_2,\bx_3\}$, and a force
 $\bbf=\bbf_i+\bbf_j$,
  so that the sets $(\by_*,\bbf), (\bx_i,\bbf_i),
 (\bx_j,\bbf_j)$ and $(\by_*,-\bbf), (\bx_{k},\bbf_{k}),
 (\bx_t,\bbf_t)$ are balanced sets of forces.
 From the results of the previous section we have that both sets have a rank one network reproducing the
 forces. By rescaling the spring constants (see \remref{rem:rescale}) their associated quadratic forms are,
 \[
  \begin{aligned}
  q_1(\bv, \bu_i, \bu_j) & = \left(\bbf
  \cdot \bv+\bbf_i\cdot \bu_i + \bbf_j\cdot \bu_j\right)^2\\
  q_2(\bv, \bu_k, \bu_t) & =  \left(-\bbf
  \cdot \bv+\bbf_k\cdot \bu_k + \bbf_t\cdot \bu_t\right)^2
  \end{aligned}
 \]
 where $\{k,t\}=\{0,1,2,3\}\setminus\{i,j\}$. The quadratic form for both networks taken together is
 $$q(\bu_0, \bu_1, \bu_2, \bu_3) = \inf_\bv \{q_1(\bv, \bu_i, \bu_j) +
 q_2(\bv, \bu_k, \bu_t)\}.$$ The optimality conditions are
 \[
  \bbf \cdot \bv = - \frac{1}{2} ( \bbf_i \cdot \bu_i + \bbf_j \cdot
  \bu_j)+
  \frac{1}{2}  (\bbf_k \cdot \bu_k+ \bbf_t \cdot \bu_t)
 \]
 which yield the desired result
 \[
  \begin{aligned}
   q(\bu_0, \bu_1, \bu_2, \bu_3) = \frac{1}{2} \M{\sum_{i=0}^3 \bbf_i \cdot
   \bu_i}^2.
  \end{aligned}
 \]
 \end{proof}

The following Theorem is the main result of this section. We use the
previous results for three and four terminal networks to prove the
result in the general case of $p$-terminal networks by induction,
following the approach of Camar-Eddine and Seppecher
\cite{camareddine_seppecher2003}.

\begin{theorem}
 \label{thm:rankone}
 Let $\bbf_i$ and $\bx_i$, $i=1,\ldots,p$ be a balanced system of forces in
 $\real^2$. There is a purely elastic $p$ terminal planar network with a
 force response proportional to $(\bbf_1^T,\ldots,\bbf_p^T)^T$ and with
 internal nodes in an $\eps-$neighborhood of the convex hull of the
 terminals and avoiding a finite number of points.
\end{theorem}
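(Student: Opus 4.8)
The plan is to induct on the number of terminals $p$, using the three- and four-terminal constructions as the base. For $p\le 2$ the claim is immediate: a balanced pair has $\bbf_1=-\bbf_2$ with $\bbf_1$ parallel to $\bx_1-\bx_2$, realized by a single spring. The cases $p=3$ and $p=4$ are \lemref{lem:threeterm}, \lemref{lem:threeterm-1D} and \lemref{lem:fourterm}. For $p\ge 5$, assume the result for all balanced systems with fewer terminals. The key step is to split off two terminals at a single auxiliary node: I would find a pair $\{i,j\}$ with $\bbf:=\bbf_i+\bbf_j\neq\bzero$ and a point $\by_*\in\bC_\eps$ on the line of action of $\bbf$, i.e.
\[
 \bbf_i\cross\bx_i+\bbf_j\cross\bx_j-\bbf\cross\by_*=0.
\]
Then $(\bbf_i,\bx_i)$, $(\bbf_j,\bx_j)$, $(-\bbf,\by_*)$ is a balanced three-force system, while the remaining forces $\{\bbf_k\}_{k\neq i,j}$ at $\{\bx_k\}$ together with $(\bbf,\by_*)$ form a balanced $(p-1)$-force system (force balance is immediate, and torque balance follows from the displayed identity and the torque balance of the original system).

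By \lemref{lem:threeterm}/\lemref{lem:threeterm-1D} the first system is realized by a planar network and, by the inductive hypothesis, the second by a planar network; these share only the node $\by_*$. After rescaling spring constants (\remref{rem:rescale}) their quadratic forms are $q_1=(\bbf_i\cdot\bu_i+\bbf_j\cdot\bu_j-\bbf\cdot\bv)^2$ and $q_2=(\sum_{k\neq i,j}\bbf_k\cdot\bu_k+\bbf\cdot\bv)^2$, where $\bv$ is the displacement of $\by_*$. Treating $\by_*$ as interior, the combined response is governed by $q(\bu)=\inf_\bv(q_1+q_2)$ as in \eqref{eq:quad}. With $A=\bbf_i\cdot\bu_i+\bbf_j\cdot\bu_j$ and $B=\sum_{k\neq i,j}\bbf_k\cdot\bu_k$, and using that $\bbf\neq\bzero$ makes $t=\bbf\cdot\bv$ range over all of $\real$, the scalar minimization of $(A-t)^2+(B+t)^2$ at $t=\tfrac12(A-B)$ yields
\[
 q(\bu)=\tfrac12(A+B)^2=\tfrac12\Big(\sum_{\ell=1}^p\bbf_\ell\cdot\bu_\ell\Big)^2,
\]
the rank-one form with force response proportional to $(\bbf_1^T,\ldots,\bbf_p^T)^T$, closing the induction.

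The \emph{main obstacle} is the geometric existence claim: a mergeable pair $\{i,j\}$ with $\bbf_i+\bbf_j\neq\bzero$ whose resultant line of action meets $\bC_\eps$. This is the $p$-terminal analogue of \lemref{lem:fourterm'} and is genuinely needed, since for a badly chosen pair (nearly antiparallel forces) the line of action can be arbitrarily far from $\bC$. I would establish it through the affine functions $g_\ell(\by)=\bbf_\ell\cross(\bx_\ell-\by)$, for which $g_\ell(\bx_\ell)=0$ with $\bx_\ell\in\bC$, $\sum_\ell g_\ell\equiv 0$ by force and torque balance, and $g_i+g_j$ is exactly $\bbf_i\cross\bx_i+\bbf_j\cross\bx_j-(\bbf_i+\bbf_j)\cross\by$, the merge condition. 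On $[\bx_i,\bx_j]\subset\bC$ one has $(g_i+g_j)(\bx_i)=\bbf_j\cross(\bx_j-\bx_i)$ and $(g_i+g_j)(\bx_j)=-\bbf_i\cross(\bx_j-\bx_i)$, so a merge point lies on this segment whenever $\bbf_i\cross(\bx_j-\bx_i)$ and $\bbf_j\cross(\bx_j-\bx_i)$ share a sign; a contradiction argument over the pairs, using $\sum_\ell g_\ell\equiv 0$ exactly as in \lemref{lem:fourterm'}, produces such a pair. The remaining bookkeeping is routine: the convex hull of each subsystem is contained in $\bC_\eps$, the base constructions place their interior nodes in $\eps$-neighborhoods lying inside $\bC_\eps$, and since the relevant determinants are not identically zero, $\by_*$ and all base-case nodes can be perturbed to avoid any prescribed finite set while staying in $\bC_\eps$.
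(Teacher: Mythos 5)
Your overall architecture (induction with the three- and four-terminal lemmas as base, splitting off a sub-system at an auxiliary node, and the scalar minimization $\inf_t\{(A-t)^2+(B+t)^2\}=\tfrac12(A+B)^2$) is sound, and the algebra of the inductive step is exactly right. The genuine gap is the step you yourself flag as the main obstacle: your induction requires, for every $p\geq 5$, a $p$-point analogue of \lemref{lem:fourterm'} (a pair $\{i,j\}$ with $\bbf_i+\bbf_j\neq\bzero$ whose resultant's line of action meets $\bC_\eps$), and your sketch of it does not go through. The proof of \lemref{lem:fourterm'} hinges on the four-index complement identity \eqref{4.2}, $\bbf_{ij}=-\bbf_{kt}$ for $\{i,j\}\cup\{k,t\}=\{0,1,2,3\}$, which pairs the six functions into three antipodal pairs and makes the triplet argument around \eqref{4.5} possible; for $p\geq 5$ the complement of a pair is a $(p-2)$-set and that structure is simply gone, so "exactly as in \lemref{lem:fourterm'}" is not available. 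Your fallback segment criterion is also too strong: a zero of $g_i+g_j$ on $[\bx_i,\bx_j]$ requires $g_i(\bx_j)\,g_j(\bx_i)\leq 0$, whereas the resultant axis can perfectly well cross the hull without crossing that segment (even \lemref{lem:fourterm'} only delivers a zero somewhere in $\bC_\eps$). And the negation "for all pairs $g_i(\bx_j)\,g_j(\bx_i)>0$" cannot be refuted from the constraints you invoke: $\sum_\ell g_\ell\equiv 0$ only forces each family $\{g_i(\bx_j)\}_{i\neq j}$ (for fixed $j$) to contain both signs, and sign-symmetric patterns with that property exist for every $p\geq 4$ (e.g. positive on the edges of a perfect matching, negative on all cross edges, with numerical values summing to zero in each column), so some genuinely new geometric input would be needed, not just bookkeeping. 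A further loose end: pairs with $\bbf_i+\bbf_j=\bzero$ have no line of action at all and need separate handling (if their moment also vanishes the pair is a single spring and splits off by superposition; otherwise the pair cannot be merged), and you must show an admissible pair survives.

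The paper's induction, following Camar-Eddine and Seppecher, sidesteps the merge problem entirely, which is why it needs no geometric existence lemma beyond $p=4$. Instead of merging two terminals at a point on a resultant axis, it picks a \emph{free} auxiliary node $\by$ (anywhere in $\bC_\eps$, avoiding finitely many points, since its position is unconstrained) and splits the forces into two overlapping families that share both $\by$ and the terminal $\bx_1$, inserting compensating forces $\pm\bbf'$ at $\bx_1$ with $\bbf'=-(\bbf+\bbf_1+\cdots+\bbf_{r+1})$; the force $\bbf$ at $\by$ is then chosen to satisfy the single scalar torque equation of the first family (always solvable since $\by\neq\bx_1$), and the second family is balanced automatically from the balance of the whole system. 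The $\bbf'$ contributions cancel in the sum, and the infimum over the shared displacement $\bv$ gives the rank-one form by the same computation you performed. The price is that this splitting does not reduce the terminal count at $p=4$, which is precisely why \lemref{lem:fourterm'} is proved separately there. To repair your proposal, either adopt this splitting for $p\geq 5$, or supply an actual proof of the $p$-point merge lemma; as written, the inductive step rests on an unproven claim.
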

\begin{proof}
We use an induction argument in the number of terminals as in
\cite[Theorem 5]{camareddine_seppecher2003}, which we reproduce here for
completeness. The $p=2$ case corresponds to a single spring with same
direction as the forces. The cases $p=3$ and $p=4$ are proved in
\lemref{lem:threeterm}, \lemref{lem:threeterm-1D} and
\lemref{lem:fourterm}. Assume for the induction argument that the
theorem holds for any $t<p$ terminals. Let $r$ be the integer part of
$p/2$, and let $\by$ be a node distinct from the terminals
$\bx_0,\ldots,\bx_p$. Then there are two forces $\bbf$ and $\bbf'$ such
that both families
\[
 \begin{aligned}
 & ((\bbf,\by), (\bbf_1 + \bbf',\bx_1), (\bbf_2,\bx_2),\ldots, (\bbf_{r+1},\bx_{r+1}))~\text{and}\\
 & ((-\bbf,\by), (-\bbf',\bx_1), (\bbf_{r+1},\bx_{r+1}), \ldots , (\bbf_p,\bx_p)).
 \end{aligned}
\]
are balanced systems of forces, as can easily be seen by taking
$\bbf'=-(\bbf+\bbf_1+\bbf_2+\ldots+\bbf_{r+1})$ and choosing $\bx_1$ as the origin of
coordinates.  These families have $r+1$ and $p-r+1$ terminal nodes, and
both have less than $p$ terminals when $p>4$. By the induction
hypothesis there are rank one networks with associated quadratic forms
\[
\begin{aligned}
 q'(\bv,\bu_1,\ldots,\bu_{r+1}) &= \M{\bbf\cdot \bv + \bbf' \cdot \bu_1 + \sum_{i=1}^{r+1} \bbf_i \cdot \bu_i}^2,~\text{and}\\
 q''(\bv,\bu_{r+1},\ldots,\bu_p) &= \M{-\bbf\cdot \bv - \bbf' \cdot \bu_1 + \sum_{i=r+1}^p \bbf_i \cdot \bu_1 }^2.
\end{aligned}
\]
The quadratic form of both networks together is
\[
 \tq(\bu_1,\bu_2,\ldots,\bu_p) = \inf_{\bv \in \real^2} q'(\bv,\bu_1,\ldots,\bu_{r+1}) + q''(\bv,\bu_{r+1},\ldots,\bu_p).
\]
The optimality conditions are:
\[
 \bbf\cdot \bv = -\frac{1}{2} \M{\bbf' \cdot \bu_1 + \sum_{i=1}^{r+1} \bbf_i \cdot \bu_i} + \frac{1}{2} \M{\bbf'\cdot \bu_1 -\sum_{i=r+1}^p \bbf_i \cdot \bu_1}.
\]
Thus the quadratic form of both networks is a rank one with
\[
  \tq(\bu_1,\ldots,\bu_p) = \frac{1}{2} \M{\sum_{i=1}^p \bbf_i \cdot \bu_i}^2.
\]
Notice that the additional point $\by$ can be picked inside an
$\eps-$neighborhood of the convex hull of the terminal nodes and
avoiding a finite number of points. Also we can use virtual crossings
and trusses to make the network planar (see \secref{sec:transf}).
\end{proof}

\begin{remark}
  \label{rem:planar}
  The networks in Lemmas \ref{lem:threeterm}, \ref{lem:threeterm-1D},
  \ref{lem:fourterm'}, \ref{lem:fourterm}, and \thmref{thm:rankone} may
  have crossing springs so are not strictly planar. To make them planar
  is suffices to convert all spring crossings with non-zero angle to a
  node as is done in \cite[Example 3.15]{milton_seppecher2008}.
  Zero-angle crossings can be eliminated by replacing springs with
  simple trusses \cite[Example 3.2]{milton_seppecher2008}. These network
  transformations are discussed in \secref{sec:transf}.
\end{remark}

\subsection{Networks in $\real^3$ with rank one static response matrices}
The construction of these networks is essentially due to Camar-Eddine
and Seppecher \cite{camareddine_seppecher2003}. We first complete their
construction
of rank one four terminal networks to include some degenerate cases
which correspond to planar networks (\lemref{lem:rankone3d}).  Then
following Camar-Eddine and Seppecher \cite{camareddine_seppecher2003} we
use induction (\thmref{thm:rankone_gen}) to derive rank one networks
with an arbitrary number of terminals.
\begin{lemma}
\label{lem:rankone3d}Let $\bbf_0$, $\bbf_1$, $\bbf_2$, $\bbf_3$
be a set of balanced forces at the nodes $\bx_0$, $\bx_1$,
$\bx_2$, $\bx_3$ in $\real^3$. There is a purely elastic rank one network
with force response proportional to $(\bbf_0^T, \bbf_1^T,
\bbf_2^T,\bbf_3^T)^T$. Moreover the internal nodes can be chosen within an $\eps-$neighborhood of the convex hull of the terminals and avoiding a finite number of points.
\end{lemma}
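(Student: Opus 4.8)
The plan is to mirror the proof of the planar four-terminal case (\lemref{lem:fourterm}) and to reuse the planar rank-one constructions already established, exploiting the classical fact that three balanced forces in $\real^3$ are automatically coplanar. Thus any balanced triple of forces can be realized by a planar rank-one network (\lemref{lem:threeterm} or \lemref{lem:threeterm-1D}) placed in the plane that contains the triple and then embedded in $\real^3$; such an embedded planar network has a response matrix whose in-plane block is the planar rank-one matrix and whose out-of-plane entries vanish, so it remains rank one in $\real^3$ with the prescribed columns. The construction then assembles the four-force response out of balanced triples, combining them by the superposition principle and by eliminating the displacements of the auxiliary interior nodes through an infimum, exactly as in the proofs of \lemref{lem:fourterm} and \thmref{thm:rankone}. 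Any out-of-plane floppy modes created by the embedding are harmless because, by \lemref{lem:floppy}, floppy modes of the interior nodes do not couple to the terminals.

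For the nondegenerate case (when $\bx_0,\ldots,\bx_3$ are affinely independent) I would follow the construction of Camar-Eddine and Seppecher \cite[Lemma 5]{camareddine_seppecher2003}: one introduces auxiliary interior node(s) together with a split auxiliary force so that each resulting group of forces is a balanced (hence coplanar) triple, and then collapses the two squared functionals $q_1,q_2$ to the single rank-one form $\frac{1}{2}\M{\sum_{i=0}^3 \bbf_i\cdot\bu_i}^2$ by the same $\inf_{\bv}$ computation used in \lemref{lem:fourterm}. The only modification needed is to check that the auxiliary nodes can be placed inside the $\eps$-neighborhood of the convex hull of $\{\bx_0,\ldots,\bx_3\}$ and off a prescribed finite set; this follows because the defining conditions hold on an open set and the excluded positions form a finite union of lower-dimensional loci, so the nodes can be perturbed into the $\eps$-neighborhood and away from the bad set, just as in \lemref{lem:notcolli} and \lemref{lem:fourterm'}.

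The new content is the degenerate case, in which the nodes and forces all lie in a common plane $P\subset\real^3$ (the locus where the three-dimensional construction breaks down). Here I would apply the planar four-terminal result \lemref{lem:fourterm} inside $P$ to obtain a planar rank-one network realizing $(\bbf_0^T,\bbf_1^T,\bbf_2^T,\bbf_3^T)^T$, all of whose forces lie in $P$, and then embed it in $\real^3$ as above. The embedded network has a rank-one $\real^3$ response with the required columns; its interior nodes lie in the planar $\eps$-neighborhood, which is contained in the $\real^3$ one, and the finitely many forbidden points are avoided within $P$.

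The main obstacle is the geometry of the nondegenerate step: guaranteeing that the auxiliary node(s) and the split force can be chosen so that every group is a genuinely balanced triple while keeping all new nodes inside the $\eps$-neighborhood and off the finite bad set, and then delineating precisely which configurations fail this three-dimensional construction so that the planar reduction above fills in exactly the remaining cases. Unlike the planar situation, a single shared interior node does not suffice in general: the torque balance of a triple built from two terminals $\bx_i,\bx_j$ forces the compatibility condition $(\bx_i-\bx_j)\cdot(\bbf_i\cross\bbf_j)=0$, which need not hold for any pairing of a balanced system, so the three-dimensional construction must genuinely split a force across more than one auxiliary node.
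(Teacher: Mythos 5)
Your high-level architecture coincides with the paper's proof: cite \cite[Lemma 5]{camareddine_seppecher2003} for the nondegenerate case and reduce the degenerate case to the planar rank-one theorem, embedding the planar network in $\real^3$ (correctly noting via \lemref{lem:floppy} that out-of-plane floppy modes do not couple to the terminals). However, your case dichotomy has a genuine gap: ``$\bx_0,\ldots,\bx_3$ affinely independent'' and ``nodes and forces all in a common plane $P$'' do not exhaust the possibilities. A balanced system can have all four nodes coplanar --- even forming a degenerate simplex --- while the forces point out of that plane: take the nodes at the corners of a square in the $xy$-plane with forces $\pm\bbf$ alternating along the $z$-axis; this is balanced, yet it belongs to neither of your cases, since affine independence fails and the planar reduction is unavailable (the forces do not lie in the plane of the nodes). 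The paper's dichotomy is instead governed by $r \equiv \rank[\bbf_1,\bbf_2,\bbf_3,\bx_1-\bx_0,\bx_2-\bx_0,\bx_3-\bx_0]$: the case $r=3$ is exactly the hypothesis of \cite[Lemma 5]{camareddine_seppecher2003} (and covers the square example above, where the node differences contribute rank $2$ and the forces rank $1$), while $r\leq 2$ forces all node differences \emph{and} all four forces (including $\bbf_0 = -\bbf_1-\bbf_2-\bbf_3$) into a single two-dimensional subspace, which is precisely the situation where \thmref{thm:rankone} applies. The ``delineation of which configurations fail,'' which you explicitly leave open as your main obstacle, is closed by this rank condition, and without it your proof is incomplete.

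A secondary inaccuracy: your description of the nondegenerate construction does not match \cite[Lemma 5]{camareddine_seppecher2003}. That lemma is not a split-into-balanced-triples argument; it is a direct truss construction analogous to \lemref{lem:threeterm} and \figref{fig:rankone}, with perturbed nodes $\bx_i' = \bx_i + \eps\bbf_i$ forming, together with $\bx_0$, a nondegenerate tetrahedron, the rank-$3$ hypothesis playing the role that \lemref{lem:notcolli} plays in the plane. Your proposed alternative --- splitting a force across several auxiliary nodes so that every group is a (necessarily coplanar) balanced triple --- is never actually constructed, and you correctly observe that the single-auxiliary-node version fails unless some pairing satisfies $(\bx_i-\bx_j)\cdot(\bbf_i\cross\bbf_j)=0$. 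If, like the paper, you treat the cited lemma as a black box, this is harmless; but as a sketch of how that lemma is proved, it is wrong in substance, and as an independent replacement for it, it remains an unfilled gap.
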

\begin{proof}
As in the planar case there are two cases depending on the value of $$r \equiv
\rank[ \bbf_1, \bbf_2, \bbf_3, \bx_1 - \bx_0, \bx_2 - \bx_0, \bx_3 - \bx_0].$$
The construction for $r=3$ is given in \cite[Lemma
5]{camareddine_seppecher2003}. When $r\leq 2$ the network is planar, so the
result follows from \thmref{thm:rankone}. 
\end{proof}

\begin{theorem}
 \label{thm:rankone_gen}
 Let $\bbf_i$ and $\bx_i$, $i=1,\ldots,p$ be a balanced system of forces
 in $\real^3$. There is a purely elastic $p$ terminal network with a
 force response proportional to $(\bbf_1^T,\ldots,\bbf_p^T)^T$. Moreover
 the internal nodes can be chosen within an $\eps-$neighborhood of the
 convex hull of the terminals and avoiding a finite number of points.
\end{theorem}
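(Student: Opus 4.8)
The plan is to argue by induction on the number of terminals $p$, following the proof of the planar \thmref{thm:rankone} essentially verbatim; the force response is encoded by the associated quadratic form, and the goal at each stage is to realize it as $\frac12\M{\sum_{i=1}^p\bbf_i\cdot\bu_i}^2$, which is rank one with response proportional to $(\bbf_1^T,\ldots,\bbf_p^T)^T$. The only genuinely three-dimensional ingredients are the base case \lemref{lem:rankone3d} and the vector (rather than scalar) nature of the torque balance in the splitting step.

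First I would settle the base cases. For $p=2$ the torque balance forces $\bbf_1=-\bbf_2$ to be parallel to $\bx_1-\bx_2$, so a single spring along that line suffices. For $p=3$, three balanced forces together with their nodes are coplanar: the components of the forces perpendicular to the node plane are constrained by the in-plane part of the torque balance to vanish (and a short computation shows the same in the collinear subcase), so the configuration is planar and \thmref{thm:rankone} applies. The case $p=4$ is exactly \lemref{lem:rankone3d}.

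For the inductive step I would assume the statement for all $t<p$ with $p>4$, put $r=\floor{p/2}$, and introduce one auxiliary interior node $\by$ together with auxiliary forces $\bbf$ at $\by$ and $\bbf'$ at the terminal $\bx_1$. I split the terminals into two groups sharing only $\bx_1$ and $\by$: the first carrying $\bx_1,\ldots,\bx_{r+1}$ with the force at $\bx_1$ replaced by $\bbf_1+\bbf'$ and the force $\bbf$ added at $\by$, the second carrying $\bx_1,\bx_{r+2},\ldots,\bx_p$ with force $-\bbf'$ at $\bx_1$ and $-\bbf$ at $\by$. The forces at $\bx_1$ and $\by$ then sum correctly, and each group has strictly fewer than $p$ terminals when $\by$ is counted. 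Taking $\bx_1$ as the origin makes the contribution of $\bbf'$ to the torque vanish, so force balance of the first group fixes $\bbf'$ once $\bbf$ is chosen, after which balance of the second group is automatic. The induction hypothesis realizes each group by a rank one network whose quadratic form has the shape $\M{\pm\bbf\cdot\bv+\cdots}^2$ in the displacement $\bv$ of $\by$; superposing the two networks and eliminating the interior node $\by$ by minimizing over $\bv$ (as in \eqref{eq:quad}) gives, through a linear optimality condition that pins down $\bbf\cdot\bv$ whenever $\bbf\neq\bzero$, the desired single-square form $\frac12\M{\sum_{i=1}^p\bbf_i\cdot\bu_i}^2$.

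The step I expect to be the main obstacle, and the only real departure from the planar argument, is arranging the torque balance of the first group, which with $\bx_1$ at the origin reads $\by\cross\bbf=\bc$ for the fixed vector $\bc=-\sum_{i=2}^{r+1}\bx_i\cross\bbf_i$. In $\real^2$ this is a single scalar equation solvable for almost every $\by$, whereas in $\real^3$ it is a vector equation carrying the compatibility constraint $\bc\cdot\by=0=\bc\cdot\bbf$. I would resolve it by choosing $\by$ in the plane through $\bx_1$ orthogonal to $\bc$ (which contains $\bx_1$, hence meets the $\eps$-neighborhood of the convex hull arbitrarily near $\bx_1$) and then setting $\bbf=\M{\bc\cross\by}/\norm{\by}^2$, which one checks satisfies $\by\cross\bbf=\bc$ and is nonzero whenever $\bc\neq\bzero$; if $\bc=\bzero$ one simply takes $\bbf$ parallel to $\by$. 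Because the admissible locus for $\by$ is at least two-dimensional, $\by$ can also be made to avoid any prescribed finite set, the interior nodes supplied by the induction already lie in the $\eps$-neighborhood of the convex hull, and any residual spring crossings in $\real^3$ are removed by the truss transformations of \secref{sec:transf}.
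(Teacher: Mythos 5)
Your proposal is correct and takes essentially the same route as the paper, whose proof of \thmref{thm:rankone_gen} is precisely this induction (it simply cites \citet[Theorem 5]{camareddine_seppecher2003} and the proof of the planar \thmref{thm:rankone}, with \lemref{lem:rankone3d} and the planarity reduction covering the base cases). Your explicit resolution of the vector torque equation $\by \cross \bbf = \bc$, choosing $\by$ in the plane through $\bx_1$ orthogonal to $\bc$ and setting $\bbf = (\bc \cross \by)/\norm{\by}^2$, correctly supplies the one genuinely three-dimensional detail that the paper delegates to the citation.
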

\begin{proof}
The result follows from an induction argument similar to that of
\citet[Theorem 5]{camareddine_seppecher2003}. See also the proof of
\thmref{thm:rankone}.
\end{proof}
\section{Characterization of the dynamic response function}
\label{sec:dynamic}
To fully characterize the dynamic response matrices, we take a
function $\bW(\om)$ as in \lemref{lem:dynamic} and show that we can construct a
network that has $\bW(\om)$ as its frequency response. The construction
relies on the static case (\thmref{thm:charstat}) and the existence of a
network of springs and masses with rank one response that has exactly one
prescribed resonance (\lemref{lem:reso_rankone}). Both networks
in $\real^3$ and planar can be constructed.
\begin{theorem}
\label{thm:chardyn}
Let $\bW(\om)$ be a matrix valued function of $\om$ satisfying the properties of
\lemref{lem:dynamic}. Then for any choice of terminal node positions,
there is a network (either planar or in $\real^3$) of springs and masses
with $\bW(\om)$ as its response function.  Moreover the internal nodes of
such a network can be chosen to avoid a finite number of positions and
within an $\eps-$neighborhood of the convex hull of the terminals.
\end{theorem}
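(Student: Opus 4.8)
The plan is to reduce the dynamic characterization to the static characterization (\thmref{thm:charstat}, or \thmref{thm:rankone_gen}/\thmref{thm:rankone} for rank one responses) together with a single building block: a network with a rank one response having exactly one prescribed resonance. Given a $\bW(\om)$ satisfying \lemref{lem:dynamic}, I would write it in the form \eqref{eq:wom}, namely $\bW(\om) = \bA - \om^2 \bM + \sum_{i=1}^p \bC^{(i)}/(\om^2 - \om_i^2)$, and attack each of the three types of terms by superposition (the superposition principle allows us to add networks sharing only terminals). The $-\om^2 \bM$ term is trivial to realize: simply attach the prescribed masses $m_1,\ldots,m_n$ to the terminal nodes. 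The constant term must be handled so that the total static response of the constructed network matches $\bW(0) = \bA - \sum_i \om_i^{-2}\bC^{(i)}$, which by \lemref{lem:dynamic} is itself positive semidefinite and balanced, hence realizable as a static network by \thmref{thm:charstat}.

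The heart of the argument is the resonant part $\sum_{i=1}^p \bC^{(i)}/(\om^2-\om_i^2)$. Since each $\bC^{(i)}$ is real symmetric positive semidefinite, I would diagonalize it as $\bC^{(i)} = \sum_k \bc_{i,k}\bc_{i,k}^T$, so the whole resonant sum becomes a sum of terms of the form $\bc\bc^T/(\om^2-\om_i^2)$ with a single rank one numerator and a single pole. The plan is to realize each such rank one single-resonance term by an individual network, using the anticipated \lemref{lem:reso_rankone}, and then superpose. To invoke that lemma the vector $\bc$ must correspond to a balanced system of forces; this should follow because each column of $\bC^{(i)}$ inherits the balance property (the residue matrices arise as $\btc_j\btc_j^T$ from the construction in \lemref{lem:dynamic} and must be balanced, just as $\bW(0)$ is). I would verify that $\bc$, viewed as forces at the terminals, is balanced, so that \lemref{lem:reso_rankone} applies.

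Once each resonant rank one term $\bc\bc^T/(\om^2-\om_i^2)$ is realized by a network whose response is precisely $\bc\bc^T/(\om^2-\om_i^2)$ (this network will contribute its own static part $-\om_i^{-2}\bc\bc^T$ at $\om=0$ and an interior mass responsible for the pole), I would collect the static contributions of all these resonant networks and subtract them from the target static response. Concretely, the remaining static network must have response $\bA - \sum_i \om_i^{-2}\bC^{(i)} + \sum_i \om_i^{-2}\bC^{(i)} = \bA$... one must instead arrange that the sum of all static pieces equals $\bW(0)$; since the resonant networks already contribute $-\sum_i \om_i^{-2}\bC^{(i)}$ at zero frequency, the purely static network need only realize $\bW(0) + \sum_i \om_i^{-2}\bC^{(i)} = \bA$. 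The matrix $\bA$ need not be positive semidefinite on its own, so the real subtlety is bookkeeping: I must show that the leftover static matrix that has to be built by \thmref{thm:charstat} is in fact positive semidefinite and balanced. The cleanest way is to build the static network to realize $\bW(0)$ directly and let each resonant building block be normalized so that its own static contribution is accounted for — i.e. choose \lemref{lem:reso_rankone} to produce $\bc\bc^T(\tfrac{1}{\om^2-\om_i^2}+\tfrac{1}{\om_i^2}) = \om^2\bc\bc^T/(\om_i^2(\om^2-\om_i^2))$, which vanishes at $\om=0$, so the resonant blocks contribute nothing statically and the static network realizes exactly $\bW(0)$.

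The main obstacle I anticipate is precisely this matching of the static part: ensuring that the resonant building blocks can be chosen so their combined static response is harmless (zero), and simultaneously that the leftover purely static matrix handed to \thmref{thm:charstat} satisfies the positive-semidefinite and balanced hypotheses of \lemref{lem:static}. Both issues are resolved by the observation in \lemref{lem:dynamic} that $\bW(0)$ is PSD and balanced, combined with a careful normalization of \lemref{lem:reso_rankone} so that the resonant networks are static-free. The remaining work — attaching masses for the $-\om^2\bM$ term, invoking superposition, and using the transformations of \secref{sec:transf} to realize everything within an $\eps$-neighborhood of the convex hull and avoiding a finite set of points — is routine given the static theory. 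I would close by noting that all constructions (static and resonant) can be placed inside the $\eps$-neighborhood and made planar or three-dimensional as required, exactly as in \thmref{thm:charstat}.
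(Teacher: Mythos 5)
Your proposal is correct and follows essentially the same route as the paper: rewrite $\bW(\om)$ as in \eqref{eq:wom2} so that each rank one resonant block has the form $\bc\bc^T\,\om^2/(\om_i^2(\om^2-\om_i^2))$ and vanishes at $\om=0$, realize $\bW(0)$ by \thmref{thm:charstat}, realize each resonant block by \lemref{lem:reso_rankone}, obtain $-\om^2\bM$ by attaching the masses to the terminals, and combine everything by superposition with the transformations of \secref{sec:transf}. One small correction: the residue vectors $\bc$ need not (and in general do not) form balanced force systems at the terminals --- e.g.\ a single terminal tethered by a spring to one interior mass has an unbalanced residue --- but the verification you planned is unnecessary, since \lemref{lem:reso_rankone} takes \emph{arbitrary} forces and achieves balance internally by appending two massive interior nodes carrying compensating forces.
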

\begin{proof}
It is convenient to rewrite \eqref{eq:wom} as,
\begin{equation}
 \label{eq:wom2}
 \bW(\om) = \bW(0)  - \om^2 \bM + \sum_{i=1}^p \bC^{(i)} \frac{\om^2}{\om_i^2 (\om^2-\om_i^2)}.
\end{equation}
Since $\bW(0)$ has the properties of \lemref{lem:static}, by
\thmref{thm:charstat} there is a (static) network of springs that has
$\bW(0)$ as its response matrix. Since the $\bC^{(i)}$ are positive semidefinite we can use the spectral decomposition to write
\[
 \bC^{(i)} = \sum_{j=1}^{n_i} \lambda^{(i)}_j \bc^{(i)}_j (\bc^{(i)}_j)^T,
 ~\text{for $i=1,\ldots,p$},
\]
where the $(\lambda^{(i)}_j, \bc^{(i)}_j)$ are the eigenpairs of $\bC^{(i)}$
with $\lambda^{(i)}_j>0$ and $n_i = \rank(\bC^{(i)})$.  By
\lemref{lem:reso_rankone} we can construct a network with response
function 
\[
\M{\sqrt{\lambda_j^{(i)}} \om_i^{-1}\bc_j^{(i)}}\M{\sqrt{\lambda_j^{(i)}} \om_i^{-1}\bc_j^{(i)}}^T \frac{\om^2}{\om^2 - \om_i^2}.
\]
By the superposition principle there is a network with response the sum
appearing in \eqref{eq:wom2}.  Finally to obtain the $-\om^2\bM$ term,
simply endow the terminal node $\bx_i$ with a mass equal to the $id-$th
diagonal element of $\bM$ (The mass of node $i$ is repeated $d$ times in
the $nd \times nd$ matrix $\bM$). To obtain a planar network, simply replace
all spring crossings by ``virtual crossings'' (see \secref{sec:transf}).
\end{proof}

\subsection{Rank one response matrices with resonance}
The following lemma shows how to design a network with arbitrary rank
one response and one single resonance at a prescribed frequency. The
network we construct has a purely dynamic response as it has a zero
response matrix in the static case $\om=0$. The construction is valid for both
networks in $\real^3$ and planar networks.
\begin{lemma}
 \label{lem:reso_rankone}
 Let $\bx_i$ and $\bbf_i$, $i=1,\ldots,n$, be arbitrary points and forces and
 $\om_0\neq 0$ a given finite resonance frequency. There is a network with
 terminals $\bx_i$ composed of springs and masses with rank one response
 function 
 \[
   \bW(\om) = \bbf \bbf^T \frac{\om^2}{\om^2 - \om_0^2},
   ~\text{where}~
   \bbf^T = (\bbf_1^T,\bbf_2^T,\ldots,\bbf_n^T).
 \]
Moreover the internal nodes of such a network can be chosen to avoid a finite
number of positions and within an $\eps-$neighborhood of the convex hull of
the terminals.
\end{lemma}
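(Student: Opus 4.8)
The plan is to reduce the construction to the static rank-one theorem (\thmref{thm:rankone} and \thmref{thm:rankone_gen}) by a mass-spring motif that introduces exactly one resonance. The key idea is that the desired response $\bbf\bbf^T \om^2/(\om^2-\om_0^2)$ vanishes at $\om=0$, so the \emph{static} network must be completely silent (zero response at the terminals), yet the network must resonate at $\om_0$. This strongly suggests introducing a single interior node carrying a mass $m$, connected to the terminals by a spring-subnetwork whose static rank-one response couples the terminals to that interior mass in the direction of $\bbf$. First I would add one interior node $\by$ (placed inside the $\eps$-neighborhood of the convex hull, avoiding the finite set of forbidden points) endowed with mass $m>0$, and attach it to the terminals through a static rank-one network realizing an appropriate balanced force system.

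**Choosing the force system for the static piece.** The force $\bbf$ acting at the terminals must be balanced \emph{together with} a compensating force at the interior node $\by$. I would apply a force $-\bbf_{\mathrm{tot}} := -\sum_{i=1}^n\bbf_i$ at $\by$, so that the augmented system $\{(\bbf_1,\bx_1),\ldots,(\bbf_n,\bx_n),(-\bbf_{\mathrm{tot}},\by)\}$ has balanced net force. For balance of torques I must choose $\by$ on the line of action so that $\sum_i \bx_i\cross\bbf_i - \by\cross\bbf_{\mathrm{tot}}=\bzero$; this is exactly the kind of point-selection handled in \lemref{lem:fourterm'}-style arguments, and can be done within the $\eps$-neighborhood while avoiding finitely many points. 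With this balanced augmented system, \thmref{thm:rankone} (planar) or \thmref{thm:rankone_gen} ($\real^3$) yields a purely elastic network whose static response matrix $\btK$ on the node set $B\cup\{\by\}$ is rank one, with quadratic form proportional to $\big(\sum_i\bbf_i\cdot\bu_i - \bbf_{\mathrm{tot}}\cdot\bu_{\by}\big)^2$.

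**Computing the dynamic response via the Schur complement.** Now I place mass $m$ at $\by$ (and no mass at the terminals) and invoke \lemref{lem:schurdyn}: the dynamic response at the terminals is $\btK_{BB} - \btK_{B\by}(\btK_{\by\by}-m\om^2)^{-1}\btK_{\by B}$, where $\btK_{\by\by}$ is the scalar-block (in the $d$ components at $\by$) stiffness. Writing $\btK = \lambda\, \bg\bg^T$ with $\bg = (\bbf_1^T,\ldots,\bbf_n^T,-\bbf_{\mathrm{tot}}^T)^T$ and absorbing the scaling constant (using \remref{rem:rescale}), the block structure gives $\btK_{BB}=\lambda\,\bbf\bbf^T$, $\btK_{B\by}=-\lambda\,\bbf\,\bbf_{\mathrm{tot}}^T$, and $\btK_{\by\by}=\lambda\,\bbf_{\mathrm{tot}}\bbf_{\mathrm{tot}}^T$. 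Since $\btK_{\by\by}$ is itself rank one, $(\btK_{\by\by}-m\om^2\bI)^{-1}$ acts simply on the relevant one-dimensional subspace, and a direct computation collapses the Schur complement to a scalar multiple of $\bbf\bbf^T$ times a rational function of $\om^2$. I would then \emph{choose} $\lambda$ and $m$ (and any overall scaling of spring constants, invoking \remref{rem:rescale}) so that this rational function equals exactly $\om^2/(\om^2-\om_0^2)$, matching both the prescribed resonance location $\om_0$ and the correct normalization. The finite set of avoided points and the $\eps$-neighborhood containment follow from the corresponding clauses in \thmref{thm:rankone}/\thmref{thm:rankone_gen} and the placement freedom for $\by$.

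**Anticipated obstacle.** The main subtlety is the degenerate case $\bbf_{\mathrm{tot}}=\bzero$ (the terminal forces already balance), for then the interior node carries no static force and the naive coupling decouples. The hard part will be handling this case: here I expect to route the resonance through a slightly different motif --- for example coupling $\by$ to a \emph{pair} of auxiliary balancing forces, or splitting $\bbf$ into two halves $\tfrac12\bbf$ and $-\tfrac12\bbf$ through $\by$ so that the interior mass still feels a nonzero restoring direction --- and then verifying that the Schur-complement computation again produces $\bbf\bbf^T\,\om^2/(\om^2-\om_0^2)$. Once the algebra is arranged so that $\btK_{\by\by}$ has a nonzero component along the coupling direction $\btK_{\by B}$, the rational-function matching is routine; the care lies in guaranteeing that nonzero coupling in every parameter regime.
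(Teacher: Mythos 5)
Your overall strategy---realize a static rank-one network on an augmented node set, attach a mass to the added node(s), and extract the resonance from the Schur complement, tuning the mass so the pole lands at $\om_0^2$---is exactly the paper's, and your Schur-complement algebra is sound. The genuine gap is in the augmentation itself: you use a \emph{single} interior node $\by$ carrying the compensating force $-\bbf_{\mathrm{tot}}$, and this fails in three ways, of which your ``anticipated obstacle'' paragraph sees only the last. First, in $\real^3$ a generic (unbalanced) system of terminal forces reduces to a wrench: the torque-balance condition $\sum_i \bx_i \cross \bbf_i = \by \cross \bbf_{\mathrm{tot}}$ is solvable for $\by$ only when the origin-independent invariant $\bbf_{\mathrm{tot}} \cdot \bigl(\sum_i \bx_i \cross \bbf_i\bigr)$ vanishes, which it generically does not; then \emph{no} choice of $\by$ works at all. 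Second, even in $\real^2$ with $\bbf_{\mathrm{tot}} \neq \bzero$, torque balance confines $\by$ to the line of action of the resultant force, which can lie arbitrarily far from the terminals (two nearly opposite forces forming an almost-couple push this line off to infinity); this is precisely the difficulty flagged in the paper's remark attributed to Seppecher, and it breaks the $\eps$-neighborhood clause that is part of the lemma's statement. Third, when $\bbf_{\mathrm{tot}} = \bzero$ with nonzero total torque, no single force balances the system at any location, and when the terminal forces are already fully balanced your construction gives $\btK_{\by B} = \bzero$, so the mass decouples and no resonance appears---you flag this case but leave the repair as an unverified sketch.

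The paper sidesteps all three problems with one move: it adds \emph{two} interior nodes $\bx_{n+1}, \bx_{n+2}$ at \emph{freely chosen} positions (hence placeable inside the $\eps$-neighborhood and avoiding finitely many points) and solves for the two \emph{forces} rather than for a position. The torque equation $(\bx_{n+2}-\bx_{n+1}) \cross \bbf_{n+2} = -\sum_{i=1}^n (\bx_i - \bx_{n+1}) \cross \bbf_i$ always admits a nonzero solution $\bbf_{n+2}$ (choosing the direction of $\bx_{n+2}-\bx_{n+1}$ suitably in $\real^3$), and $\bbf_{n+1}$ is then fixed by force balance; this guarantees the coupling vector $\ba = (\bbf_{n+1}^T, \bbf_{n+2}^T)^T \neq \bzero$ in every regime, after which the computation you performed goes through verbatim, yielding $\bw_B = \bbf\bbf^T \bu_B\, \om^2/(\om^2 - \norm{\ba}^2/m)$ and the choice $m = \norm{\ba}^2/\om_0^2$. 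Your ``pair of auxiliary balancing forces'' fallback is in fact this construction; to repair your proof you should adopt it from the start, for all force systems, rather than as a patch for the one degenerate case you noticed.
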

\begin{proof}
 Let $\bx_{n+1}$ and $\bx_{n+2}$ be two distinct nodes and choose the
 forces $\bbf_{n+1}$ and $\bbf_{n+2}$ so that the system
 $(\bx_i,\bbf_i)$, $i=1,\ldots,n+2$ is balanced. Take for example a
 force $\bbf_{n+2} \neq \bzero$ in the line ($d=2$) or plane ($d=3$)
 \[
   (\bx_{n+2} - \bx_{n+1}) \cross \bbf_{n+2} = - \sum_{i=1}^n (\bx_i
   -\bx_{n+1}) \cross \bbf_i,
 \]
 and choose $\bbf_{n+1}$ such that $\sum_{i=1}^{n+2} \bbf_i = \bzero$.  Then
 by \thmref{thm:rankone} there is a rank one network with force response
 proportional to $(\bbf^T,\bbf_{n+1}^T,\bbf_{n+2}^T)^T$. Attach a mass $m$
 to nodes $\bx_{n+1}$ and $\bx_{n+2}$. The spring constants in the network
 can be rescaled (\remref{rem:rescale}) so that the equations of motion are
 \[
  \begin{bmatrix} \bw_B\\ m \om^2 \bu_I \end{bmatrix}
  =
  \begin{bmatrix} \bbf\\ \ba \end{bmatrix}
  \begin{bmatrix} \bbf^T  & \ba^T \end{bmatrix}
  \begin{bmatrix} \bu_B\\ \bu_I \end{bmatrix},
 \]
 where $\ba^T =
 (\bbf_{n+1}^T, \bbf_{n+2}^T) \neq \bzero$, the displacements $\bu_B$ and $\bu_I$
 are respectively at the ``boundary''
 nodes $\bx_1,\ldots,\bx_n$ and the ``interior'' nodes
 $\bx_{n+1},\bx_{n+2}$. Then solving the system for the resulting
 forces $\bw_B$ at the ``boundary'' nodes we get,
 \[
  \bw_B = \bbf \bbf ^T \bu_B \M{1 + \frac{ \norm{\ba}^2}{m\om^2 -
  \norm{\ba}^2}} =\bbf \bbf ^T \bu_B
  \frac{\om^2}{\om^2 - \norm{\ba}^2/m}.
 \]
 Finally choose the mass $m = \norm{\ba}^2/\om_0^2$. The position of the
 internal nodes $\bx_{n+1}$ and $\bx_{n+2}$ is flexible and by
 \thmref{thm:rankone} so is that of any interior nodes in the rank one
 network involved in the construction.
\end{proof}

\appendix
\section{Stability to small perturbations}
\label{app:pert}
We show that the response function of an elastodynamic network is stable
to changes in the network, which could come from either modifying the
spring constants of existing springs or possibly adding springs with small
spring constants between any two nodes in the network. However we do not
allow springs to be deleted from the network. We first show
stability of the response of static networks and then stability for the
response function of elastodynamic networks. 

\subsection{Stability in the static case}
Let $\bA$ be the response matrix of an elastic network with all nodes
considered as terminals and $\bW$ be the response matrix at the terminal
nodes as given by \eqref{eq:schur}. If we add or modify (but not
delete) springs then the new response with all nodes considered as
terminals is $\bA + \eps\bE$, $\eps>0$, and its response at the terminals
$\bW(\eps)$. We prove the following result
\begin{lemma}
 \label{lem:stabstat}
 Let $\eps>0$. As $\eps \to 0$, we have $\bW(\eps) \to \bW$.
\end{lemma}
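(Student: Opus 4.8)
The plan is to avoid manipulating the Schur-complement formula \eqref{eq:schur} directly, i.e.\ the expression $\bW(\eps) = (\bA+\eps\bE)_{BB} - (\bA+\eps\bE)_{BI}\,(\bA+\eps\bE)_{II}^\dagger\,(\bA+\eps\bE)_{IB}$. The obstruction is that the pseudoinverse is not continuous in $\eps$: adding springs can strictly enlarge $\range((\bA+\eps\bE)_{II})$ relative to $\range(\bA_{II})$, so taking the naive limit $\eps\to 0$ inside the pseudoinverse is invalid. Instead I would route everything through the variational characterization \eqref{eq:quad}, which is insensitive to this rank jump.

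First I would record that $\bA+\eps\bE$ is itself the response matrix (with all nodes treated as terminals) of an elastic network obtained by adding springs and/or strengthening existing ones, so that $\eps\bE$ is itself a positive semidefinite spring response matrix and $q_{\bA+\eps\bE}(\bu) = q_{\bA}(\bu) + \eps\, q_{\bE}(\bu) \geq 0$, with $q_{\bE}\geq 0$. In particular \lemref{lem:floppy} applies to both $\bA$ and $\bA+\eps\bE$, so the infima
\[
 q_{\bW}(\bu_B) = \inf_{\bu_I} q_{\bA}(\bu_B,\bu_I),
 \qquad
 q_{\bW(\eps)}(\bu_B) = \inf_{\bu_I} q_{\bA+\eps\bE}(\bu_B,\bu_I)
\]
are attained and coincide with the quadratic forms of the matrices given by \eqref{eq:schur}.

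Next I would establish pointwise convergence of these quadratic forms by a two-sided estimate, for each fixed $\bu_B$. For the lower bound, since $q_{\bE}\geq 0$ the perturbed energy dominates the unperturbed one, giving $q_{\bW(\eps)}(\bu_B) \geq q_{\bW}(\bu_B)$. For the upper bound, let $\bu_I^*$ be a minimizer of the unperturbed problem $\inf_{\bu_I} q_{\bA}(\bu_B,\bu_I)$, which exists by the previous step; using $\bu_I^*$ as a trial vector in the perturbed infimum gives
\[
 q_{\bW(\eps)}(\bu_B) \leq q_{\bA}(\bu_B,\bu_I^*) + \eps\, q_{\bE}(\bu_B,\bu_I^*) = q_{\bW}(\bu_B) + \eps\, q_{\bE}(\bu_B,\bu_I^*).
\]
Letting $\eps \to 0$ with $\bu_B$ (and hence $\bu_I^*$) fixed sandwiches $q_{\bW(\eps)}(\bu_B) \to q_{\bW}(\bu_B)$.

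Finally I would pass from quadratic forms back to matrices. Both $\bW(\eps)$ and $\bW$ are symmetric, so every entry is recovered by polarization, $\bu^T \bW \bv = \tfrac14\M{q_{\bW}(\bu+\bv) - q_{\bW}(\bu-\bv)}$, and the pointwise convergence of the forms then yields entrywise convergence $\bW(\eps) \to \bW$. The main obstacle is exactly the discontinuity of the pseudoinverse flagged at the outset; the whole purpose of working through \eqref{eq:quad} is that the variational value remains continuous under the monotone positive semidefinite perturbation $\eps\bE$ even when the interior kernels jump, so no delicate matrix-perturbation estimate for $\bA_{II}^\dagger$ is required.
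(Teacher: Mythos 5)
Your variational route is appealing and, where it applies, much simpler than the paper's proof, which works directly with \eqref{eq:schur}: the paper builds an $\eps$-independent orthonormal basis $[\bU,\bV,\bW]$ adapted to the two nullspaces via \lemref{lem:floppypert}, expands the block inverse of $\bA_{II}+\eps\bE_{II}$ in that basis, and checks term by term in \eqref{eq:abiaiiaib} that the singular $\eps^{-1}$ block is annihilated because $\bA_{BI}[\bV,\bW]=[\bzero,\bzero]$ and $(\bA_{BI}+\eps\bE_{BI})\bW=\bzero$. However, your argument has a genuine gap: the lower bound $q_{\bW(\eps)}(\bu_B)\geq q_{\bW}(\bu_B)$ rests on the assumption that $\eps\bE$ is itself a positive semidefinite spring response, i.e.\ $q_{\bE}\geq 0$. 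The setting of the lemma allows \emph{modifying} existing spring constants in either direction --- only deletions are excluded. This is explicit in the proof of \lemref{lem:floppypert}, where old springs carry stiffness $k_{i,j}+\eps l_{i,j}$ and positivity is imposed only on that sum, so $l_{i,j}<0$ is permitted; the abstract likewise speaks of ``small changes in the spring constants''. For such perturbations $q_{\bE}$ takes negative values and your monotone sandwich collapses on one side. The missing side is exactly where the pseudoinverse discontinuity you flagged at the outset re-enters: without monotonicity you would have to bound $q_{\bW(\eps)}$ from below using the \emph{perturbed} minimizer, whose minimum-norm representative involves $(\bA_{II}+\eps\bE_{II})^\dagger$ with a block of order $\eps^{-1}$, and the harmlessness of that block is precisely the cancellation (\lemref{lem:floppy} together with \lemref{lem:floppypert}) that your argument never establishes.

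The gap is local and can be repaired inside your framework. Since for $\eps$ small each perturbed old stiffness satisfies $k_{i,j}+\eps l_{i,j}\geq(1-C\eps)\,k_{i,j}$ with $C=\max_{i,j}|l_{i,j}|/k_{i,j}$, and the new springs contribute nonnegative terms, one has the quadratic-form inequality $q_{\bA+\eps\bE}\geq(1-C\eps)\,q_{\bA}$, and taking infima over $\bu_I$ gives $q_{\bW(\eps)}(\bu_B)\geq(1-C\eps)\,q_{\bW}(\bu_B)$. Your trial-vector upper bound $q_{\bW(\eps)}(\bu_B)\leq q_{\bW}(\bu_B)+\eps\, q_{\bE}(\bu_B,\bu_I^*)$ survives unchanged, since negative parts of $q_{\bE}$ only help there. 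With this multiplicative lower bound the sandwich and the polarization step go through, and one even recovers the paper's $\cO(\eps)$ rate pointwise. So repaired, your proof is genuinely different from and shorter than the paper's: it bypasses \lemref{lem:floppypert} and all explicit block-inverse asymptotics, at the cost of yielding only convergence of the response, whereas the paper's method produces the first-order expansion of each term in the Schur complement.
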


The stability result for the static case may seem surprising at first
because the pseudo-inverse we used to find the response at the terminals
\eqref{eq:schur} is not continuous (see e.g. \cite[\S 5.5.5]{gvl}).
However \lemref{lem:floppy} guarantees that the instabilities are
controlled as they remain (roughly speaking) in $\nullspace(\bA_{II})$.
Before showing \lemref{lem:stabstat} we need to establish the following
relation between the floppy modes of the perturbed and unperturbed
stiffness matrices.

\begin{lemma}
 \label{lem:floppypert}
 For all $\eps>0 $ sufficiently small, $\nullspace(\bA_{II} +
 \eps\bE_{II}) \subset \nullspace(\bA_{II})$. Moreover
 $\nullspace(\bA_{II} + \eps \bE_{II})$ is {\em independent} of $\eps$, and
 depends only on the connectivity of the new network. In other words if a
 network is perturbed by adding springs or modifying existing springs,
 then a floppy mode of the perturbed network must be a floppy mode of
 the unperturbed network.
\end{lemma}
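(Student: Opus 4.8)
The plan is to exploit the semidefinite structure that \lemref{lem:static} supplies, which is exactly what tames the notorious discontinuity of the pseudo-inverse appearing in \eqref{eq:schur}. The key observation is that both blocks entering the perturbed interior matrix are positive semidefinite: $\bA_{II}$ is a principal submatrix of the positive semidefinite stiffness matrix $\bA$, and when the perturbation consists of adding springs (or strengthening existing ones), $\bE$ is itself a sum of individual spring stiffness matrices of the form built in \secref{sec:allterminal}, hence positive semidefinite, so that $\bE_{II}$ is positive semidefinite as well. I would therefore prove the sharper statement $\nullspace(\bA_{II}+\eps\bE_{II}) = \nullspace(\bA_{II})\cap\nullspace(\bE_{II})$, from which both assertions of the lemma follow at once.

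First I would take $\bu_I\in\nullspace(\bA_{II}+\eps\bE_{II})$. Since $\bA_{II}+\eps\bE_{II}$ is positive semidefinite, a nullvector makes its quadratic form vanish, giving $0 = \bu_I^T(\bA_{II}+\eps\bE_{II})\bu_I = \bu_I^T\bA_{II}\bu_I + \eps\,\bu_I^T\bE_{II}\bu_I$. As $\eps>0$ and both summands are nonnegative, each must vanish separately. Because $\bA_{II}$ is positive semidefinite, $\bu_I^T\bA_{II}\bu_I=0$ upgrades to $\bA_{II}\bu_I=\bzero$, i.e. $\bu_I\in\nullspace(\bA_{II})$; the same reasoning on the second term gives $\bE_{II}\bu_I=\bzero$. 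This already yields the inclusion $\nullspace(\bA_{II}+\eps\bE_{II})\subset\nullspace(\bA_{II})$ claimed in the lemma. The reverse inclusion for the intersection is immediate, since any $\bu_I$ annihilated by both $\bA_{II}$ and $\bE_{II}$ is annihilated by their combination. The resulting identity $\nullspace(\bA_{II}+\eps\bE_{II})=\nullspace(\bA_{II})\cap\nullspace(\bE_{II})$ has no $\eps$ on the right-hand side, which is precisely the asserted independence.

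To address the phrase ``depends only on the connectivity of the new network,'' I would note that the intersection $\nullspace(\bA_{II})\cap\nullspace(\bE_{II})$ is exactly the set of interior displacements on which \emph{every} spring of the combined network stores zero energy, i.e. the interior floppy modes of the perturbed network; and for a positive semidefinite spring network this set is determined by which springs are present, not by the positive values of their constants, since rescaling a spring constant by a positive factor does not change where its quadratic form $s_{(\bx_i,\bx_j)}$ vanishes. This also covers the ``modifying existing springs'' case: as long as $\eps$ is small enough that all perturbed constants stay strictly positive, the set of present springs is unchanged and the nullspace is unaffected — which is where the hypothesis ``$\eps$ sufficiently small'' is genuinely used.

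The main conceptual obstacle is not any computation but recognizing that the positive semidefinite splitting is what prevents the nullspace from jumping as $\eps\to 0$; without it one would only control $\bu_I^T\bA_{II}\bu_I = -\eps\,\bu_I^T\bE_{II}\bu_I$, which need not be zero. The only care required is to justify that the relevant perturbation block $\bE_{II}$ is positive semidefinite (immediate for added or strengthened springs) and, in the decreasing-constant variant, to restrict $\eps$ so that the network remains admissible; both are routine once the energy argument above is in place.
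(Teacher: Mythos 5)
Your energy argument is correct in spirit and is essentially the paper's argument, but you run it at the wrong level of granularity, and this leaves a genuine gap in the ``modifying existing springs'' half of the lemma. Your key step --- from $0=\bu_I^T\bA_{II}\bu_I+\eps\,\bu_I^T\bE_{II}\bu_I$ conclude that both summands vanish --- requires $\bE_{II}$ to be positive semidefinite, which holds only when springs are added or strengthened. The lemma (and the paper's proof, where the old-spring coefficients are $k_{i,j}+\eps l_{i,j}$ and are only guaranteed positive for $\eps$ small) explicitly allows weakening existing springs, i.e.\ $l_{i,j}<0$, in which case $\bE_{II}$ is indefinite and your splitting says nothing: you would only get $\bu_I^T\bA_{II}\bu_I=-\eps\,\bu_I^T\bE_{II}\bu_I\geq 0$, exactly the uncontrolled situation you yourself flag in your last paragraph. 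Deferring this case with ``restrict $\eps$ so that the network remains admissible; routine once the energy argument above is in place'' is not a proof, because the matrix-level energy argument is precisely what is \emph{not} in place there; indeed the hypothesis ``$\eps$ sufficiently small'' in the statement exists to keep the weakened constants positive, so this case is not a side remark but the reason the hypothesis is there.

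The repair is the paper's actual proof: decompose spring-by-spring rather than as $\bA_{II}$ versus $\bE_{II}$. For $\eps$ small enough that every perturbed constant $k_{i,j}+\eps l_{i,j}$ is strictly positive, the perturbed matrix $\bA_{II}+\eps\bE_{II}$ is a positive combination of the individual (positive semidefinite, rank-one) spring matrices, so $\bu_I^T(\bA_{II}+\eps\bE_{II})\bu_I=0$ forces each spring strain $(\bu_i-\bu_j)\cdot(\bx_i-\bx_j)$ to vanish separately --- a condition independent of $\eps$ and of the positive values of the constants, hence depending only on connectivity --- and the vanishing of the old-spring strains gives $\bu_I^T\bA_{II}\bu_I=0$, whence $\bA_{II}\bu_I=\bzero$ by positive semidefiniteness of $\bA_{II}$. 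Interestingly, your sharper identity $\nullspace(\bA_{II}+\eps\bE_{II})=\nullspace(\bA_{II})\cap\nullspace(\bE_{II})$ does survive the mixed-sign case (on $\nullspace(\bA_{II})$ the old-spring contributions to $\bu_I^T\bE_{II}\bu_I$ vanish, leaving only nonnegative new-spring terms), but it cannot be obtained from your two-term nonnegativity argument; it too needs the per-spring decomposition. Your closing paragraph about per-spring quadratic forms being insensitive to positive rescaling shows you have the right idea --- it just needs to be the proof rather than a remark appended to it.
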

\begin{proof}
If $\bu_I$ is a floppy mode of the perturbed network then:
 \[
  \begin{aligned}
  0 &= \bu_I^T (\bA_{II} + \eps \bE_{II}) \bu_{I} 
  = \begin{bmatrix} \bzero & \bu_I^T \end{bmatrix} (\bA + \eps \bE) \begin{bmatrix} \bzero \\\bu_I \end{bmatrix}\\
  &= \sum_{\substack{\text{old springs}\\i\in I,\;j \in B \cup I}} 
   (k_{i,j} + \eps l_{i,j}) \M{(\bu_i-\bu_j) \cdot \frac{\bx_i -
   \bx_j}{\norm{\bx_i-\bx_j}}}^2\\
  &+ \sum_{\substack{\text{new springs}\\  i\in I,\;j \in B \cup I}} 
   \eps l_{i,j} \M{(\bu_i-\bu_j) \cdot \frac{\bx_i -
   \bx_j}{\norm{\bx_i-\bx_j}}}^2.
  \end{aligned}
 \]
Since the perturbed network is a spring network, the new spring
constants should be positive and all the terms in the sums above
vanish, a condition which is independent of $\eps$. Since all the terms
in the sums above vanish it follows that $\bu_I \in
\nullspace(\bA_{II})$:
\[
 \bu_I^T \bA_{II} \bu_I = \sum_{\substack{\text{old springs}\\i\in I,\;j \in B
 \cup I}} k_{i,j} \M{(\bu_i-\bu_j) \cdot \frac{\bx_i -
   \bx_j}{\norm{\bx_i-\bx_j}}}^2 = 0.
\]
\end{proof}
We are now ready to prove stability for the static case.
\begin{proof} (of \lemref{lem:stabstat}) Let $\eps>0$ be sufficiently
small. By \lemref{lem:floppypert} it is possible to find a unitary
matrix $[\bU,\bV,\bW]$ {\em independent} of $\eps$ such that
$\range([\bV,\bW]) = \nullspace(\bA_{II})$ and $\range(\bW) =
\nullspace(\bA_{II} + \eps \bE_{II})$. Writing $\bA_{II} + \eps
\bE_{II}$ in the new basis gives,
\begin{equation}
 \bA_{II} + \eps \bE_{II} = [ \bU, \bV, \bW ] 
 \begin{bmatrix}
 \btA + \eps \btE_1 & \eps \btE_2 & \bzero \\
 \eps \btE_2^T & \eps \btE_3 & \bzero \\
 \bzero & \bzero & \bzero
 \end{bmatrix}
 [\bU,\bV,\bW]^T,
 \label{eq:uvw}
\end{equation}
where $\btA \equiv \bU^T \bA_{II} \bU$. Because of our choice of basis
both $\btA$ and the non-zero block in \eqref{eq:uvw} must be invertible
and symmetric positive definite.  The inverse of this block is
\begin{equation}
\begin{aligned}
 &\begin{bmatrix}
  \btA + \eps \btE_1 & \eps \btE_2\\
  \eps \btE_2^T & \eps \btE_3
 \end{bmatrix}^{-1} \\
 &\qquad=
 \begin{bmatrix}
 \btA(\eps)^{-1}(\bI + \eps \btE_2 \btE(\eps)^{-1} \btE_3^{-1}
 \btE_2^T \btA(\eps)^{-1})
 & -\btA(\eps)^{-1} \btE_2 \btE(\eps)^{-1} \btE_3^{-1}\\
 -\btE(\eps)^{-1} \btE_3^{-1} \btE_2^T \btA(\eps)^{-1}
 & \eps^{-1} \btE(\eps)^{-1} \btE_3^{-1} 
 \end{bmatrix}\\
 &\qquad= 
 \begin{bmatrix}
   \btA^{-1} + \eps \bG_1 + o(\eps) & - \btA^{-1} \btE_2 \btE_3^{-1} +
   \eps \bG_2 + o(\eps)\\
   -\btE_3^{-1} \btE_2^T \btA^{-1} + \eps \bG_2^T + o(\eps) &
   \eps^{-1} \btE_3^{-1} + \bG_3 + o(1)
 \end{bmatrix},
 \end{aligned}
\end{equation}
where $\btE(\eps) = (\bI - \eps \btE_3^{-1} \btE_2^T \btA(\eps)^{-1}
\btE_2)$ and $\btA(\eps) = \btA + \eps \bE_1$.  Notice that $\btE_3$ is
a submatrix of a symmetric positive definite matrix and thus must be
invertible. The second equality comes from the standard perturbation
formula for the inverse (see e.g. \cite[\S 2.3.4]{gvl}) and the matrices
$\bG_1,\bG_2,\bG_3$ are independent of $\eps$. 
We now examine the response of the perturbed matrix. Since the first
term in \eqref{eq:schur} is linear in $\bE$, it is stable to
perturbations. Now the negative of the second term in \eqref{eq:schur}
can be written as:
\begin{equation}
 \label{eq:abiaiiaib}
 \begin{aligned}
 &(\bA_{BI} + \eps \bE_{BI}) (\bA_{II} + \eps\bE_{II})^\dagger ( \bA_{IB}
 + \eps \bE_{IB}) = 
 \bA_{BI} (\bA_{II} + \eps\bE_{II})^\dagger \bA_{IB}\\
 &+\eps \bE_{BI} (\bA_{II} + \eps\bE_{II})^\dagger \bA_{IB}
 + \eps \bA_{BI} (\bA_{II} + \eps\bE_{II})^\dagger \bE_{IB}
 + \eps^2 \bE_{BI}  (\bA_{II} + \eps\bE_{II})^\dagger \bE_{IB}.
 \end{aligned}
\end{equation}
Moreover in the basis $[\bU,\bV,\bW]$ the pseudo-inverse becomes,
\[
(\bA_{II} + \eps\bE_{II})^\dagger
 = [\bU,\bV]
 \begin{bmatrix}
  \btA + \eps \btE_1 & \eps \btE_2\\
  \eps \btE_2^T & \eps \btE_3
 \end{bmatrix}^{-1}
 [\bU,\bV]^T.
\]
Using \lemref{lem:floppy} and \lemref{lem:floppypert}, we have $\bA_{BI}
[ \bV, \bW ] =[\bzero,\bzero]$ and $(\bA_{BI} + \eps \bE_{BI}) \bW =
\bzero$. The leading order asymptotics of each of the terms in
\eqref{eq:abiaiiaib} are
\[
\begin{aligned}
\bA_{BI} (\bA_{II} + \eps\bE_{II})^\dagger \bA_{IB}
&= \bA_{BI} \bU \btA^{-1} \bU^T \bA_{IB} + \cO(\eps) = \bA_{BI}
\bA_{II}^\dagger \bA_{IB} + \cO(\eps),\\
 \eps \bE_{BI} (\bA_{II} + \eps\bE_{II})^\dagger \bA_{IB}
 & =  \eps \bE_{BI} (\bU \btA^{-1} \bU^T - \bV \btE_3^{-1} \btE_2^T
 \btA^{-1}\bU^T)\bA_{IB} + \cO(\eps^2),\\
 \eps \bA_{BI} (\bA_{II} + \eps\bE_{II})^\dagger \bE_{IB}
 & = \eps \bA_{BI}  (\bU \btA^{-1} \bU^T - \bU \btA^{-1} \btE_2
 \btE_3^{-1} \bV^T) \bE_{IB} + \cO(\eps^2),\\
 \eps^2 \bE_{BI}  (\bA_{II} + \eps\bE_{II})^\dagger \bE_{IB}
 & = \eps \bE_{BI} \bV \btE_3^{-1} \bV^T \bE_{IB} + \cO(\eps^2).
\end{aligned}
\]
which proves the desired result:
\[
(\bA_{BI} + \eps \bE_{BI}) (\bA_{II} + \eps\bE_{II})^\dagger ( \bA_{IB}
 + \eps \bE_{IB})  = \bA_{BI} \bA_{II}^\dagger \bA_{IB} + \cO(\eps).  \]
\end{proof}

\subsection{Stability in the dynamic case}
As in the static case, we deal only with network perturbations that
modify existing springs or add new springs, but excluding 
spring deletions. Denote by $\bK$ the response when all the nodes
are terminals and let $\eps>0$ be sufficiently small so that $\bK +
\eps\bE$ is the response of the perturbed network. We show the following
result.

\begin{lemma}
 Partition the interior nodes $I$ into massless nodes $L$ and nodes with
 mass $J$, as in  \lemref{lem:schurdyn}. Let $\om$ be a frequency such
 that $\om^2$ is not an eigenvalue of $\bM_{JJ}^{-1/2}\btK_{JJ}
 \bM_{JJ}^{-1/2}$, and $\bK + \eps\bE$ be the response of the perturbed
 network where we allow for new springs or changes in the spring
 constants, but no spring deletions. Then as $\eps \to 0$,
 \[
  \bW(\om;\eps) = \bW(\om)  + \cO(\eps),
 \]
 where $\bW(\om;\eps)$ (resp. $\bW(\om)$) is the response at the terminal nodes of the
 perturbed (resp. unperturbed) network at frequency $\om$.
\end{lemma}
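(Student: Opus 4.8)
The plan is to reduce the dynamic stability statement to the static stability result \lemref{lem:stabstat}, exploiting the two-stage structure of the dynamic response function. Recall from \eqref{eq:tk} and \eqref{eq:schurdyn} that $\bW(\om)$ is obtained in two steps: first the massless interior nodes $L$ are eliminated statically to form the response matrix $\btK$ of the network with terminals $B \cup J$, and then the massive interior nodes $J$ are eliminated through the inversion of $\btK_{JJ} - \om^2 \bM_{JJ}$. The elimination of $L$ is a purely static operation on the stiffness matrix $\bK$, since the $L$ nodes carry no mass. The key observation is that perturbing the network by adding or modifying springs does not alter the nodal masses, so the partition $I = J \cup L$ into massive and massless interior nodes is unchanged by the perturbation, as is the mass matrix $\bM$.

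First I would address the static elimination of the massless nodes $L$. Applying \lemref{lem:stabstat} to the network whose terminals are $B \cup J$ and whose interior nodes are $L$, with the admissible perturbation $\bK + \eps\bE$ (spring additions and modifications, no deletions), we obtain that the perturbed response matrix $\btK(\eps)$ converges to $\btK$ as $\eps \to 0$. Since convergence of a matrix amounts to convergence of each of its blocks, this gives $\btK_{BB}(\eps) = \btK_{BB} + \cO(\eps)$, $\btK_{BJ}(\eps) = \btK_{BJ} + \cO(\eps)$, and $\btK_{JJ}(\eps) = \btK_{JJ} + \cO(\eps)$, with the analogous statement for $\btK_{JB}$.

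Next I would substitute these expansions into \eqref{eq:schurdyn}. The terms $\btK_{BB}(\eps) - \om^2 \bM_{BB}$ are stable because $\bM_{BB}$ does not change and $\btK_{BB}(\eps) = \btK_{BB} + \cO(\eps)$. The only delicate factor is the inverse $(\btK_{JJ}(\eps) - \om^2 \bM_{JJ})^{-1}$. By the non-resonance hypothesis, $\om^2$ is not an eigenvalue of $\bM_{JJ}^{-1/2} \btK_{JJ} \bM_{JJ}^{-1/2}$, so the unperturbed matrix $\btK_{JJ} - \om^2 \bM_{JJ}$ is invertible. Because $\btK_{JJ}(\eps) - \om^2 \bM_{JJ} = (\btK_{JJ} - \om^2 \bM_{JJ}) + \cO(\eps)$ and matrix inversion is smooth at an invertible matrix, for $\eps$ small enough the perturbed matrix remains invertible and $(\btK_{JJ}(\eps) - \om^2 \bM_{JJ})^{-1} = (\btK_{JJ} - \om^2 \bM_{JJ})^{-1} + \cO(\eps)$. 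Combining the three factors $\btK_{BJ}(\eps)$, this inverse, and $\btK_{JB}(\eps)$, each within $\cO(\eps)$ of its unperturbed value, yields $\bW(\om;\eps) = \bW(\om) + \cO(\eps)$.

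The main obstacle is entirely contained in the static step, where the pseudo-inverse $\bK_{LL}^\dagger$ used to eliminate the massless nodes is not continuous in general; this is precisely the difficulty that \lemref{lem:stabstat} already resolves through the floppy-mode analysis of \lemref{lem:floppy} and \lemref{lem:floppypert}. By contrast, the dynamic step is the easy part: the non-resonance assumption guarantees that the matrix to be inverted is genuinely invertible rather than merely pseudo-invertible, so ordinary perturbation theory for the inverse applies without any additional structural argument.
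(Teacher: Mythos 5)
Your proposal is correct and takes essentially the same route as the paper: apply \lemref{lem:stabstat} to the static elimination of the massless nodes $L$ to obtain $\btK(\eps) = \btK + \cO(\eps)$, then use the non-resonance hypothesis together with standard perturbation theory for the inverse in \eqref{eq:schurdyn}. The only cosmetic difference is that the paper certifies invertibility of $\btK_{JJ}(\eps) - \om^2 \bM_{JJ}$ via the Wielandt--Hoffman eigenvalue bound, whereas you invoke the continuity of matrix inversion at an invertible matrix directly --- an equivalent step.
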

\begin{proof}
 By \lemref{lem:stabstat} the matrix $\btK$ (the
response matrix of the network with terminals $B \cup J$, see
\eqref{eq:tk}) is stable to such spring perturbations, meaning that the
response of the perturbed network at the nodes $B \cup J$ satisfies
$\btK(\eps) = \btK + \eps \btE + o(\eps)$, for some matrix $\btE$
independent of $\eps$. Since both the perturbed and unperturbed
responses are symmetric, the Wielandt-Hoffman theorem (see e.g. \cite[\S
8.1.2]{gvl}) implies that there is a reordering of the eigenvalues
$\om_i^2(\eps)$ of
$\btK_{JJ}(\eps)$ such that 
\[
 |\om_i^2(\eps) -\om_i^2| \leq \eps \|\btE\|_F,
\] 
with $\om_i^2$ being the eigenvalues of $\btK_{JJ}$ and
$\norm{\;\cdot\;}_F$ denoting the Frobenius matrix norm. Therefore if
$\eps$ is sufficiently small and $\om$ is not an eigenvalue of
$\bM_{JJ}^{-1/2} \btK_{JJ} \bM_{JJ}^{-1/2}$ then $\om$ is not an
eigenvalue of $\bM_{JJ}^{-1/2} \btK_{JJ}(\eps) \bM_{JJ}^{-1/2}$ either
and the matrix $\btK(\eps) - \om^2 \bM_{JJ}$ is invertible. Thus using
\eqref{eq:schurdyn} and the standard perturbation formula for the
inverse,  it is possible to show that the response at the
terminals of the perturbed matrix is $\bW(\om;\eps) = \bW(\om) +
\cO(\eps)$.
\end{proof}
\section{Eliminating floppy modes by adding springs}
\label{app:floppy}
We use the stability results from the previous Appendix to show that if
a network has floppy modes then there is a network with no floppy modes
and a response function arbitrarily close to that of the original
network. Some examples of floppy modes (for planar networks) are given in
\figref{fig:floppy}. Our strategy to remove floppy modes is to connect
all nodes (be them terminal or interior nodes) of the network by springs
with small spring constants, thus creating a {\em complete graph} with
the nodes $I \cup B$ and springs as edges. By \lemref{lem:stabstat} the
response of the new network can be made arbitrarily close to the
response of the unperturbed network.

Let $\bA$ be the response of the network where all nodes are terminals
and let $\bA+\eps\bE$ be response when we have added all these new
springs. Then  if $\bu_I$ is a floppy mode of the new network,
proceeding as in \lemref{lem:floppypert} gives
\[
  \begin{aligned}
  0 &= \bu_I^T (\bA_{II} + \eps \bE_{II}) \bu_{I} 
  = \begin{bmatrix} \bzero & \bu_I^T \end{bmatrix} (\bA + \eps \bE) \begin{bmatrix} \bzero \\\bu_I \end{bmatrix}\\
  &= \sum_{\substack{\text{old springs}\\i\in I,\;j \in B \cup I}} 
   (k_{i,j} + \eps l_{i,j}) \M{(\bu_i-\bu_j) \cdot \frac{\bx_i -
   \bx_j}{\norm{\bx_i-\bx_j}}}^2\\
  &+ \sum_{\substack{\text{new springs}\\  i\in I,\;j \in B \cup I}} 
   \eps l_{i,j} \M{(\bu_i-\bu_j) \cdot \frac{\bx_i -
   \bx_j}{\norm{\bx_i-\bx_j}}}^2.
  \end{aligned}
 \]
 This is equivalent to saying that
 \begin{equation}
  \forall i \in I ~\text{and}~  j \in B \cup I, \quad (\bu_i-\bu_j)\cdot
  (\bx_i-\bx_j) = 0.
  \label{eq:zero}
 \end{equation}
 Assume that we are working in $d$ dimensions and that we have $d+1$
 nodes $\bx_1,\ldots,\bx_{d+1}$ that form a non-degenerate triangle
 ($d=2$) or tetrahedron ($d=3$) and for which $\bu_1 = \bu_2 = \cdots =
 \bu_{d+1} = \bzero$. Then since every interior node $\by$ is connected
 to the $\bx_1,\ldots,\bx_{d+1}$, equation \eqref{eq:zero} implies that
 $\bv \cdot (\by - \bx_i) = 0$, $i=1,\ldots,d+1$, where $\bv$ is the
 displacement associated with $\by$. Since these special ``anchor''
 nodes are non-degenerate, $\rank[\bx_1 - \by, \bx_2 - \by,\ldots
 ,\bx_{d+1} - \by] = d$ and we must have $\bv = \bzero$. Repeating this
 for every interior node we get $\bu_I = \bzero$, and so the network does
 not have any floppy modes.

 We now need to show which networks have ``anchor'' nodes. Clearly if
 the network has $d+1$ terminal nodes forming a non-degenerate triangle
 (in $d=2$) or tetrahedron (in $d=3$), then the network does not have
 any floppy modes, since the terminal nodes do not move.

 If we are in $d=2$ dimension and the terminal nodes do not form a
 non-degenerate triangle, then all terminals must lie on a line. Since
 the network has at least two terminal nodes (otherwise we cannot
 balance forces), we can add two interior nodes with a truss as in
 \secref{sec:transf} without changing the response. Let $\bx_1,\bx_2$ be
 two terminal nodes and $\by$ be one of the interior nodes of the truss,
 with associated displacements $\bu_1=\bu_2 = \bzero$ and $\bv$. Then
 condition \eqref{eq:zero} implies that $\bv \cdot (\bx_1 - \by) = \bv
 \cdot (\bx_2 - \by) = 0$, i.e. $\bv = \bzero$. Thus the nodes
 $\bx_1,\bx_2,\by$ form an ``anchor'' and the network does not have any
 floppy modes.

 If we are in $d=3$ dimension and we cannot form a non-degenerate
 tetrahedron from the terminal nodes, then the terminals must lie on a
 plane. Assume further that the terminals do not lie on a line, we shall
 deal with this case later. Let $\bx_1,\bx_2,\bx_3$ be three terminal
 nodes forming a triangle. Then replacing e.g. the spring between
 $\bx_1$ and $\bx_2$ by a truss (as in \secref{sec:transf}), we
 introduce three new interior nodes and at least one of them $\by$ is
 not in the plane where $\bx_1,\bx_2,\bx_3$ lie. Let $\bv$ be the
 displacement of $\by$ and $\bu_1 = \bu_2 = \bu_3= \bzero$ be the
 displacements of the boundary nodes.  Then condition \eqref{eq:zero}
 implies that $\bv \cdot (\bx_i - \by) = 0$ for $i=1,2,3$. Since
 $\rank[\bx_1-\by,\bx_2-\by,\bx_3-\by] = 3$, we must have $\bv =
 \bzero$. Therefore the nodes $\bx_1,\bx_2,\bx_3,\by$ form an ``anchor''
 and the network does not have any floppy modes.

 If we are in $d=3$ dimension and all the terminal nodes lie on a line
 then every interior node $\by$ forms a triangle with two terminal nodes
 $\bx_1,\bx_2$. In this case condition \eqref{eq:zero} means that the
 displacement $\bv$ of node $\by$ is orthogonal to the plane formed by
 $\bx_1,\bx_2,\by$, and in particular to the axis where all terminals
 lie. Thus in this case the floppy modes cannot be eliminated by adding
 springs or interior nodes, as any additional interior node is in this
 situation as well. This corresponds to rotations of the network around
 the axis where all the terminals lie.

\begin{figure}
 \begin{center}
 \begin{tabular}{c@{\hspace{2em}}c}
\includegraphics{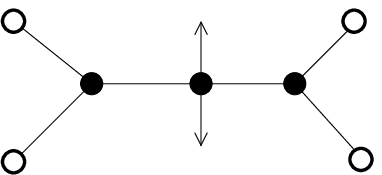}%
&%
\includegraphics{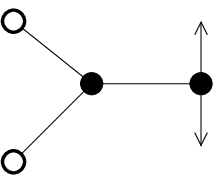}%
 \end{tabular}
 \end{center}
 \caption{Two types of floppy modes. The terminals are white circles
 and the interior nodes are in black. The direction in which the node
 can move with zero force is given with arrows.}
 \label{fig:floppy}
\end{figure}

\section*{Acknowledgements}
The authors wish to thank Pierre Seppecher for helpful conversations.
The authors are grateful for support from the National Science Foundation
through grant DMS-0707978.

\bibliographystyle{abbrvnat}
\bibliography{elnetbib}

\end{document}